\tikzset{
	vertex/.style={circle,draw,fill=black, minimum size = 5pt, inner sep=0pt},
	edge/.style={color=black, thick},
	diredge/.style={->,>={Stealth[width=8pt,length=10pt]},color=black, thick},
	timelabel/.style={fill=white},
	specialedge/.style={dashed,blue!50!white},
}
\theoremstyle{definition}
\crefname{rrule}{Data Reduction Rule}{Data Reduction Rules} 
\newcommand{\problemdef}[3]{
	\begin{center}
	\begin{minipage}{0.95\textwidth}
		\noindent
		#1
		\vspace{5pt}\\
		\setlength{\tabcolsep}{3pt}
		\begin{tabularx}{\textwidth}{@{}lX@{}}
			\textbf{Input:}     & #2 \\
			\textbf{Question:}  & #3
		\end{tabularx}
	\end{minipage}
	\end{center}
}
\newcommand{\tn}{\textnormal}
\DeclarePairedDelimiterX{\abs}[1]{\lvert}{\rvert}{#1}
\newcommand{\mvert}{\;\middle\vert\;}
\newcommand{\TG}{\mathcal{G}}
\newcommand{\bigO}{{O}}
\newcommand{\no}{\emph{no}}
\newcommand{\yes}{\emph{yes}}
\newcommand{\true}{\texttt{true}}
\newcommand{\false}{\texttt{false}}
\newcommand{\maxtwosat}{\textsc{Max-2-Sat}}
\newcommand{\nae}{\texttt{NAE}}
\newcommand{\NAESat}{\textsc{Not-All-Equal-3Sat}}
\title{The Complexity of Transitively Orienting Temporal Graphs} 
\newcommand{\tuaddress}{Technische Universit"at Berlin, Faculty IV, Algorithmics and Computational Complexity, Germany}
\author{George B. Mertzios}
{Department of Computer Science, Durham University, UK}
{george.mertzios@durham.ac.uk}
{https://orcid.org/0000-0001-7182-585X}{Supported by the EPSRC grant EP/P020372/1.}
\author{Hendrik Molter}
{Department of Industrial Engineering and Management, Ben-Gurion University of the Negev, 
Israel}
{molterh@post.bgu.ac.il}
{https://orcid.org/0000-0002-4590-798X}{Supported by the German Research
Foundation (DFG), project MATE (NI 369/17), and by the Israeli Science Foundation (ISF), grant No.~1070/20.}
\author{Malte Renken}
{\tuaddress}
{m.renken@tu-berlin.de}
{http://orcid.org/0000-0002-1450-1901}{Supported by the German Research
Foundation (DFG), project MATE (NI 369/17).}
\author{Paul G. Spirakis}
{Department of Computer Science, University of Liverpool, UK\\
Computer Engineering \& Informatics Department, University of Patras, Greece}
{p.spirakis@liverpool.ac.uk}
{https://orcid.org/0000-0001-5396-3749}{Supported by the NeST initiative of the School of EEE and CS at the University of Liverpool and by the EPSRC grant EP/P02002X/1.}
\author{Philipp Zschoche}
{\tuaddress}
{zschoche@tu-berlin.de}
{https://orcid.org/0000-0001-9846-0600}{}
\authorrunning{G.B.~Mertzios, H.~Molter, M.~Renken, P.G.~Spirakis and P.~Zschoche} 
\keywords{Temporal graph, transitive orientation, transitive closure,
polynomial-time algorithm, NP-hardness, satisfiability.} 
\begin{document}

\maketitle

\begin{abstract}
In a \emph{temporal network} with discrete time-labels on its edges, entities and information can only ``flow'' along sequences of edges 
whose time-labels are non-decreasing (resp.~increasing), i.e.~along temporal (resp.~strict temporal) paths. 
Nevertheless, in the model for temporal networks of [Kempe, Kleinberg, Kumar, JCSS, 2002], the individual time-labeled edges remain undirected: 
an edge $e=\{u,v\}$ with time-label $t$ specifies that ``$u$ communicates with $v$ at time $t$''. 
This is a symmetric relation between $u$ and $v$, and it can be interpreted that the information can flow in either direction. 

In this paper we make a first attempt to understand how the direction of information flow on one edge 
can impact the direction of information flow on other edges. 
More specifically, naturally extending the classical notion of a transitive orientation in static graphs, 
we introduce the fundamental notion of a \emph{temporal transitive orientation} and we systematically investigate its algorithmic behavior in various situations. 
An orientation of a temporal graph is called \emph{temporally transitive} if, 
whenever $u$ has a directed edge towards $v$ with time-label $t_1$ and $v$ has a directed edge towards $w$ with time-label $t_2\geq t_1$, 
then $u$ also has a directed edge towards $w$ with some time-label $t_3\geq t_2$. 
If we just demand that this implication holds whenever $t_2 > t_1$, the orientation is called \emph{strictly} temporally transitive, 
as it is based on the fact that there is a strict directed temporal path from $u$ to $w$. 

Our main result is a conceptually simple, yet technically quite involved, 
polynomial-time algorithm for recognizing whether a given temporal graph $\TG$ is transitively orientable. 
In wide contrast we prove that, surprisingly, it is NP-hard to recognize whether $\TG$ is strictly transitively orientable. 
Additionally we introduce and investigate further related problems to temporal transitivity, 
notably among them the \emph{temporal transitive completion} problem, for which we prove both algorithmic and hardness results.
\end{abstract}

\section{Introduction}\label{intro-sec}

A \emph{temporal} (or \emph{dynamic}) network is, roughly speaking, a network whose underlying topology changes over time. 
This notion concerns a great variety of both modern and traditional networks; information and communication networks, social networks, 
and several physical systems are only few examples of networks which change over time~\cite{michailCACM,Nicosia-book-chapter-13,holme2019temporal}. 
Due to its vast applicability in many areas, the notion of temporal graphs has been studied from 
different perspectives under several different names such as \emph{time-varying}, \emph{evolving}, 
\emph{dynamic}, and \emph{graphs over time} (see~\cite{CasteigtsFloccini12,flocchini1,flocchini2} and the references therein).
In this paper we adopt a simple and natural model for temporal networks which is given with discrete time-labels on the edges of a graph, 
while the vertex set remains unchanged. This formalism originates in the foundational work of Kempe et al.~\cite{KKK00}.

\begin{definition}[Temporal Graph~\cite{KKK00}]
\label{temp-graph-def} A \emph{temporal graph} is a pair $\TG=(G,\lambda)$,
where $G=(V,E)$ is an underlying (static) graph and $\lambda :E\rightarrow \mathbb{N}$ is a \emph{time-labeling} function which assigns to every
edge of $G$ a discrete-time label.
\end{definition}

Mainly motivated by the fact that, due to causality, entities and information in temporal graphs 
can only ``flow'' along sequences of edges whose time-labels are non-decreasing (resp.~increasing), 
Kempe et al.~introduced the notion of a \emph{(strict) temporal path}, or \emph{(strict) time-respecting path}, 
in a temporal graph $(G,\lambda)$ as a path in $G$ with edges $e_1,e_2,\ldots,e_k$ such that 
$\lambda(e_1)\leq \ldots \leq \lambda(e_k)$ (resp.~$\lambda(e_1)< \ldots < \lambda(e_k)$). 
This notion of a temporal path naturally resembles the notion of a \emph{directed} path in the classical static graphs, 
where the direction is from smaller to larger time-labels along the path. 
Nevertheless, in temporal paths the individual time-labeled edges remain undirected: 
an edge $e=\{u,v\}$ with time-label $\lambda(e)=t$ can be abstractly interpreted as ``$u$ communicates with $v$ at time $t$''. 
Here the relation ``communicates'' is symmetric between $u$ and $v$, i.e.~it can be interpreted that the information can flow in either direction.

In this paper we make a first attempt to understand how the direction of information flow on one edge 
can impact the direction of information flow on other edges. 
More specifically, naturally extending the classical notion of a transitive orientation in static graphs~\cite{Golumbic04}, 
we introduce the fundamental notion of a \emph{temporal transitive orientation} and we thoroughly investigate its algorithmic behavior in various situations. 
Imagine that $v$ receives information from $u$ at time $t_1$, while $w$ receives information from $v$ at time $t_2\geq t_1$. 
Then $w$ \emph{indirectly} receives information from $u$ through the intermediate vertex $v$. 
Now, if the temporal graph correctly records the transitive closure of information passing, the directed edge from $u$ to $w$ must exist 
and must have a time label $t_3\geq t_2$. In such a \emph{transitively oriented} temporal graph, 
whenever an edge is oriented from a vertex $u$ to a vertex $w$ with time-label $t$, 
we have that \emph{every} temporal path from $u$ to $w$ arrives no later than~$t$, and that there is no temporal path from $w$ to $u$. 
Different notions of temporal transitivity have also been used for automated
temporal data mining~\cite{moskovitch2015fast} in medical
applications~\cite{moskovitch2009medical}, text processing~\cite{tannier2011evaluating}.
Furthermore, in behavioral ecology, researchers have used a notion of orderly (transitive) triads A-B-C to quantify dominance among species. 
In particular, animal groups usually form dominance hierarchies in which dominance relations are transitive and can also change 
with time~\cite{mcdonald2013comparative}.

One natural motivation for our temporal transitivity notion may come from applications where confirmation and verification of information is vital, 
where vertices may represent entities such as investigative journalists or police detectives who gather sensitive information. 
Suppose that $v$ queried some important information from $u$ (the information source) at time $t_1$, 
and afterwards, at time $t_2\geq t_1$, $w$ queried the important information from $v$ (the intermediary). 
Then, in order to ensure the validity of the information received, $w$ might want to verify it by \emph{subsequently} querying 
the information directly from $u$ at some time $t_3\geq t_2$. 
Note that $w$ might first receive the important information from $u$ through various other intermediaries, 
and using several channels of different lengths. Then, to maximize confidence about the information, 
$w$ should query $u$ for verification only after receiving the information from the latest of these indirect channels.

It is worth noting here that the model of temporal graphs given in~\cref{temp-graph-def} has been also used in its extended form,
in which the temporal graph may contain multiple time-labels per edge~\cite{MertziosMCS13}. 
This extended temporal graph model has been used to investigate temporal
paths~\cite{wu_efficient_2016,himmel_efficient_2019,xuan_computing_2003,CasteigtsHMZ20,MertziosMCS13,AkridaMNRSZ19}
and other temporal path-related notions such as temporal analogues of 
distance and diameter~\cite{AkridaGMS16}, 
reachability~\cite{AkridaGMS17} and
exploration~\cite{AkridaMNRSZ19,AkridaGMS16,Erlebach0K15,enright2021assigning},
separation~\cite{Flu+20,Zsc+20,KKK00}, and path-based centrality
measures~\cite{kim_temporal_2012,BMNR20}, as well as recently non-path problems
too such as temporal variations of coloring~\cite{MertziosMZcoloring19}, vertex
cover~\cite{AkridaMSZ18vertex-cover}, matching~\cite{MertziosMNZZmatching20},
cluster editing~\cite{Che+18}, and maximal
cliques~\cite{ViardLM16,Him+17,Ben+19}.
However, in order to better investigate and illustrate the inherent combinatorial structure of temporal transitivity orientations, 
in this paper we mostly follow the original definition of temporal graphs given by Kempe et al.~\cite{KKK00} 
with one time-label per edge~\cite{CasteigtsPS19,EnrightMMZ21,AxiotisF16}. 
Throughout the paper, whenever we assume multiple time-labels per edge we will state it explicitly; 
in all other cases we consider a single label per edge.

In static graphs, the transitive orientation problem has received extensive attention which resulted in numerous efficient algorithms. 
A graph is called \emph{transitively orientable} (or a \emph{comparability} graph) if it is possible to orient its edges such that, 
whenever we orient $u$ towards $v$ and $v$ towards $w$, then the edge between $u$ and $w$ exists and is oriented towards~$w$. 
The first polynomial-time algorithms for recognizing whether a given (static) graph $G$ on $n$ vertices and $m$ edges 
is comparability (i.e.~transitively orientable) were based on the notion of \emph{forcing} an orientation and had running time $O(n^3)$ 
(see Golumbic~\cite{Golumbic04} and the references therein). 
Faster algorithms for computing a transitive orientation of a given comparability graph have been later developed, 
having running times $O(n^2)$~\cite{Spinrad85} and $O(n+m\log n)$~\cite{McConnellS94}, 
while the currently fastest algorithms run in linear $O(n+m)$ time and are based on efficiently computing a modular decomposition 
of $G$~\cite{McConnellS99,McConnellS97}; see also Spinrad~\cite{Spinrad03}.
It is fascinating that, although all the latter algorithms compute a valid transitive orientation if $G$ is a comparability graph,
they fail to recognize whether the input graph is a comparability graph; instead they produce an orientation which is non-transitive if $G$ is not a 
comparability graph. The fastest known algorithm for determining whether a given orientation is transitive requires matrix multiplication, 
currently achieved in $O(n^{2.37286})$ time~\cite{alman2021refined}.

\subparagraph{Our contribution.}
In this paper we introduce the notion of \emph{temporal transitive orientation} 
and we thoroughly investigate its algorithmic behavior in various situations. 
An orientation of a temporal graph $\TG=(G,\lambda)$ is called \emph{temporally transitive} if, 
whenever $u$ has a directed edge towards $v$ with time-label $t_1$ 
and $v$ has a directed edge towards $w$ with time-label $t_2\geq t_1$,
then $u$ also has a directed edge towards $w$ with some time-label $t_3\geq t_2$. 
If we just demand that this implication holds whenever $t_2 > t_1$, the orientation is called \emph{strictly} temporally transitive, 
as it is based on the fact that there is a strict directed temporal path from $u$ to $w$. 
Similarly, if we demand that the transitive directed edge from $u$ to $w$ has time-label $t_3 >t_2$, the orientation is called 
\emph{strongly} (resp.~\emph{strongly strictly)} temporally transitive.

Although these four natural variations of a temporally transitive orientation seem superficially similar to each other, it turns out that 
their computational complexity (and their underlying combinatorial structure) varies massively. 
Indeed we obtain a surprising result in~\Cref{recognition-sec}: deciding whether a temporal graph $\TG$ admits a \emph{temporally transitive} 
orientation is solvable in polynomial time (\Cref{algorithm-TTO-subsec}), while it is NP-hard to decide whether it admits a \emph{strictly temporally transitive} orientation (\Cref{NP-hard-StrictTTO-subsec}). 
On the other hand, it turns out that, deciding whether $\TG$ admits a \emph{strongly} or a \emph{strongly strictly} temporal transitive orientation 
is (easily) solvable in polynomial time as they can both be reduced to 2SAT satisfiability.

Our main result is that, given a temporal graph $\TG=(G,\lambda)$, we can decide in polynomial time whether $\TG$ is 
transitively orientable, and at the same time we can output a temporal transitive orientation if it exists. 
Although the analysis and correctness proof of our algorithm is technically quite involved, 
our algorithm is simple and easy to implement, as it is based on the notion of \emph{forcing} an orientation.\footnote{That is, 
	orienting an edge from $u$ to $v$ \emph{forces} us to orient another edge from $a$ to $b$.} 
Our algorithm extends and generalizes the classical polynomial-time algorithm for computing 
a transitive orientation in static graphs described by Golumbic~\cite{Golumbic04}. 
The main technical difficulty in extending the algorithm from the static to the temporal setting is that, 
in temporal graphs we cannot simply use orientation forcings to eliminate the condition that a \emph{triangle} is not allowed to be cyclically oriented. 
To resolve this issue, we first express the recognition problem of temporally transitively orientable graphs as a Boolean satisfiability problem 
of a \emph{mixed} Boolean formula $\phi_{\text{3NAE}} \wedge \phi_{\text{2SAT}}$. 
Here $\phi _{\text{3NAE}}$ is a \textsc{3NAE} formula, i.e.,~the disjunction of clauses with three literals each, 
where every clause $\nae(\ell _{1}, \ell _{2}, \ell _{3})$ is satisfied if and only if at least one 
of the literals $\{\ell_{1},\ell _{2},\ell _{3}\}$ is equal to 1 and at least one of them is equal to 0. 
Note that every clause $\nae(\ell _{1}, \ell _{2}, \ell _{3})$ corresponds to the condition that a specific triangle in the temporal graph cannot 
be cyclically oriented. 
Furthermore $\phi _{\text{2SAT}}$ is a 2SAT formula, i.e.,~the disjunction of 2CNF clauses with two literals each, 
where every clause $(\ell _{1}\vee \ell _{2})$ is satisfied if and only if at least one of the 
literals $\{\ell _{1},\ell _{2}\}$ is equal to 1. 
However, although deciding whether $\phi _{\text{2SAT}}$ is satisfiable can be done
in linear time with respect to the size of the formula~\cite{aspvall1979linear}, the
problem Not-All-Equal-3-SAT is NP-complete~\cite{Schaefer78}.

In the second part of our paper (\Cref{completion-sec}) we consider a natural extension of the temporal orientability problem, 
namely the \emph{temporal transitive completion} problem. 
In this problem we are given a (partially oriented) temporal graph $\TG$ and a natural number $k$, 
and the question is whether it is possible to add at most $k$ new edges (with the corresponding time-labels) to $\TG$ such that the resulting 
temporal graph is (strongly/strictly/strongly strictly) transitively orientable. 
We prove that all four versions of temporal transitive completion are NP-complete, even when the input temporal graph is completely unoriented. 
In contrast we show that, if the input temporal graph $\TG$ is \emph{directed} (i.e.~if every time-labeled edge has a fixed orientation) then 
all versions of temporal transitive completion are solvable in polynomial time. 
As a corollary of our results it follows that all four versions of temporal transitive completion are fixed-parameter-tractable (FPT) 
with respect to the number $q$ of unoriented time-labeled edges in $\TG$.

In the third and last part of our paper (\Cref{multilayer-sec}) we consider the \emph{multilayer transitive orientation} problem. 
In this problem we are given an undirected temporal graph $\TG=(G,\lambda)$, where $G=(V,E)$, 
and we ask whether there exists an orientation $F$ of its edges (i.e.~with exactly one orientation for each edge of $G$) such that, 
for every `time-layer'' $t\geq 1$, the (static) oriented graph induced by the edges having time-label $t$ is transitively oriented in $F$. 
Problem definitions of this type are commonly referred to as multilayer problems~\cite{Bre+19}.
Observe that this problem trivially reduces to the static case if we assume that each edge has a single time-label,
as then each layer can be treated independently of all others.
However, if we allow $\TG$ to have multiple time-labels on every edge of~$G$, then we show that the problem becomes NP-complete, 
even when every edge has at most two labels.

\section{Preliminaries and Notation}\label{prelim-sec}

Given a (static) undirected graph $G=(V,E)$, an edge between two vertices $%
u,v\in V$ is denoted by the unordered pair $\{u,v\}\in E$, and in this case
the vertices $u,v$ are said to be \emph{adjacent}. If the graph is directed,
we will use the ordered pair $(u,v)$ (resp.~$(v,u)$) to denote the oriented
edge from $u$ to $v$ (resp.~from $v$ to $u$). For simplicity of the notation, we
will usually drop the parentheses and the comma when denoting an oriented
edge, i.e.~we will denote $(u,v)$ just by $uv$. Furthermore, $\widehat{uv}%
=\{uv,vu\}$ is used to denote the set of both oriented edges $uv$ and $vu$
between the vertices $u$ and $v$.

Let $S\subseteq E$ be a subset of the edges of an undirected (static) graph $G=(V,E)$, 
and let $\widehat{S} =\{uv,vu : \{u,v\}\in S\}$ be the set of both possible orientations $uv$ and $vu$ of every edge $\{u,v\}\in S$. 
Let $F\subseteq \widehat{S}$. If $F$ contains \emph{at least one} of the two possible orientations $uv$ and $vu$ of each edge $\{u,v\}\in S$, 
then $F$ is called an \emph{orientation} of the edges of $S$. 
$F$ is called a \emph{proper orientation} if it contains \emph{exactly one} 
of the orientations $uv$ and $vu$ of every edge $\{u,v\}\in S$. 
Note here that, in order to simplify some technical proofs, the above definition of an orientation 
allows $F$ to be not proper, i.e.~to contain \emph{both} $uv$ and $vu$ for a specific edge $\{u,v\}$. 
However, whenever $F$ is not proper, this means that $F$ can be discarded as it cannot be used as a part of a (temporal) transitive orientation. 
For every orientation $F$ denote by $F^{-1}=\{vu:uv\in F\}$ the \emph{reversal} of $F$. 
Note that $F\cap F^{-1}=\emptyset $ if and only if $F$ is proper.

In a temporal graph $\TG= (G,\lambda)$, where $G=(V,E)$, whenever $\lambda(\{v,w\})=t$ (or simply $\lambda(v,w)=t$), 
we refer to the tuple $(\{v,w\},t)$ as a \emph{time-edge} of $\TG$. 
A triangle of $(G,\lambda )$ on the vertices $u,v,w$ is a \emph{synchronous triangle} if $\lambda (u,v)=\lambda (v,w)=\lambda (w,u)$. 
Let $G=(V,E)$ and let $F$ be a proper orientation of the whole edge set $E$. 
Then $(\TG,F)$, or $(G,\lambda,F)$, is a \emph{proper orientation} of the temporal graph $\TG$; 
for simplicity we may also write that $F$ is a proper orientation of $\TG$. 
A \emph{partial proper orientation} $F$ of a temporal graph $\TG=(G,\lambda)$ is an orientation of a subset of $E$. 
To indicate that the edge $\{u,v\}$ of a time-edge $(\{u,v\},t)$ is oriented from $u$ to $v$ 
(that is, $uv\in F$ in a (partial) proper orientation $F$), we use the term $((u,v),t)$, or simply $(uv,t)$. 
For simplicity we may refer to a (partial) proper orientation just as a (partial) orientation, whenever the term ``proper'' is clear from the context.

A static graph $G=(V,E)$ is a \emph{comparability graph} if there exists a proper orientation $F$ of~$E$ which is \emph{transitive}, 
that is, if $F\cap F^{-1}=\emptyset $ and $F^{2}\subseteq F$, where $F^{2}=\{uw:uv,vw\in F$ for some vertex $v\}$~\cite{Golumbic04}. 
Analogously, in a temporal graph $\TG= (G,\lambda)$, where $G=(V,E)$, we define a proper orientation $F$ of $E$ to be \emph{temporally transitive}, 
if:

\begin{center}\fbox{
		\begin{minipage}{0.95\textwidth}
			\noindent
			whenever $(uv,t_1)$ and $(vw,t_2)$ are oriented time-edges in $(\TG,F)$ such that $t_2 \geq t_1$, 
			there exists an oriented time-edge $(uw,t_3)$ in $(\TG,F)$, for some $t_3\geq t_2$.
	\end{minipage}}
\end{center}

In the above definition of a temporally transitive orientation, if we replace the condition ``$t_3\geq t_2$'' with ``$t_3> t_2$'', 
then $F$ is called \emph{strongly temporally transitive}. 
If we instead replace the condition ``$t_2\geq t_1$'' with ``$t_2> t_1$'',
then $F$ is called \emph{strictly temporally transitive}.
If we do both of these replacements, then $F$ is called \emph{strongly strictly temporally transitive}.
Note that strong (strict) temporal transitivity implies (strict) temporal transitivity, while (strong) temporal transitivity implies (strong) strict temporal transitivity.
Furthermore, similarly to the established terminology for static graphs, we define a temporal graph $\TG= (G,\lambda)$, where $G=(V,E)$, 
to be a \emph{(strongly/strictly) temporal comparability graph} if there exists a proper orientation $F$ of $E$ 
which is \emph{(strongly/strictly) temporally transitive}.

\newcommand{\TTOs}{\textsc{TTO}}
\newcommand{\StrictTTOs}{\textsc{Strict TTO}}
\newcommand{\StrongTTOs}{\textsc{Strong TTO}}
\newcommand{\StrongStrictTTOs}{\textsc{Strong Strict TTO}}
\newcommand{\TTO}{\textsc{Temporal Transitive Orientation (\TTOs)}}
\newcommand{\StrictTTO}{\textsc{Strict Temporal Transitive Orientation (\StrictTTOs)}}
\newcommand{\StrongTTO}{\textsc{Strong Temporal Transitive (\StrongTTOs)}}
\newcommand{\StrongStrictTTO}{\textsc{Strong Strict Temporal Transitive Orientation (\StrongStrictTTOs)}}

We are now ready to formally introduce the following decision problem of recognizing whether 
a given temporal graph is temporally transitively orientable or not.

\problemdef{\TTO}
{A temporal graph $\TG=(G,\lambda)$, where $G=(V,E)$.}
{Does $\TG$ admit a temporally transitive orientation $F$ of $E$?\vspace{-0,2cm}}

In the above problem definition of \TTOs, if we ask for the existence of a strictly (resp.~strongly, or strongly strictly) 
temporally transitive orientation $F$, 
we obtain the decision problem \textsc{Strict} (resp.~\textsc{Strong}, or
\textsc{Strong Strict}) \TTO.

Let $\TG=(G,\lambda)$ be a temporal graph, where $G=(V,E)$. 
Let $G'=(V,E')$ be a graph such that $E\subseteq E'$, and let $\lambda' \colon E' \rightarrow \mathbb N$ be a time-labeling function such that 
$\lambda'(u,v)=\lambda(u,v)$ for every $\{u,v\}\in E$. Then the temporal graph $\TG'=(G',\lambda')$ is called a \emph{temporal supergraph of $\TG$}. 
We can now define our next problem definition regarding computing temporally orientable supergraphs of $\TG$.

\newcommand{\TTCs}{\textsc{TTC}}
\newcommand{\TTC}{\textsc{Temporal Transitive Completion (\TTCs)}}
\problemdef{\TTC}
{A temporal graph $\TG=(G,\lambda)$, where $G=(V,E)$, a (partial) orientation $F$ of $\TG$, and an integer~$k$.}
{Does there exist a temporal supergraph $\TG'=(G',\lambda')$ of $(G,\lambda)$, where $G'=(V,E')$, 
	and a transitive orientation $F' \supseteq F$ of $\TG'$ such that $|E' \setminus E|\le k$?}

Similarly to \TTOs, if we ask in the problem definition of \TTCs\ for the existence of a strictly (resp.~strongly, or strongly strictly) temporally transitive orientation $F'$, 
we obtain the decision problem \textsc{Strict} (resp.~\textsc{Strong}, or \textsc{Strong Strict}) \TTC.

Now we define our final problem which asks for an orientation $F$ of a temporal graph $\TG=(G,\lambda)$ 
(i.e.~with exactly one orientation for each edge of $G$) such that, 
for every ``time-layer'' $t\geq 1$, the (static) oriented graph defined by the edges having time-label $t$ is transitively oriented in $F$. 
This problem does not make much sense if every edge has exactly one time-label in $\TG$, as in this case it 
can be easily solved by just repeatedly applying any known static transitive orientation algorithm. 
Therefore, in the next problem definition, we assume that in the input temporal graph $\TG=(G,\lambda)$ every edge of $G$ 
potentially has multiple time-labels, i.e.~the time-labeling function is $\lambda :E\rightarrow 2^{\mathbb{N}}$.

\newcommand{\MTOs}{\textsc{MTO}}
\newcommand{\MTO}{\textsc{Multilayer Transitive Orientation (\MTOs)}}
\problemdef{\MTO}
{A temporal graph $\TG=(G,\lambda)$, where $G=(V,E)$ and $\lambda :E\rightarrow 2^{\mathbb{N}}$.}
{Is there an orientation $F$ of the edges of $G$ such that, for every $t\geq 1$, 
	the (static) oriented graph induced by the edges having time-label $t$ is transitively oriented?}

\section{The recognition of temporally transitively orientable graphs}\label{recognition-sec}

In this section we investigate the computational complexity of all variants of
\TTOs. We show that \TTOs\, as well as the two variants \StrongTTOs{} and 
\StrongStrictTTOs{}, are solvable in polynomial time, whereas \StrictTTOs{} turns
out to be NP-complete.

The main idea of our approach to solve \TTOs\ and its variants is to create
Boolean variables for each edge of the underlying graph $G$ and interpret setting a
variable to 1 or 0 as the two possible ways of directing the
corresponding edge.

More formally, for every edge $\{u,v\}$ we introduce a variable $x_{uv}$ and
setting this variable to 1 corresponds to the orientation $uv$ while
setting this variable to 0 corresponds to the orientation $vu$. 
Now consider the example of Figure~\ref{forcing-fig}(a), i.e.~an induced path of length two 
in the underlying graph $G$ on three vertices $u,v,w$, 
and let $\lambda(u,v)=1$ and $\lambda(v,w)=2$. 
Then the orientation $uv$ ``forces'' the orientation $wv$. 
Indeed, if we otherwise orient $\{v,w\}$ as $vw$, then the edge $\{u,w\}$ must exist and be oriented as $uw$ in any temporal transitive orientation, 
which is a contradiction as there is no edge between $u$ and $w$. 
We can express this ``forcing'' with the implication $x_{uv}\implies x_{wv}$.
In this way we can deduce the constraints that all triangles or induced paths
on three vertices impose on any (strong/strict/strong strict) temporal transitive orientation. 
We collect all these constraints in \Cref{table:2sat-triangle}.

\begin{table}[t]
	\centering
	\footnotesize
	\setcellgapes{5pt}
	\makegapedcells{}
	\begin{tabular}{r|ccc|cc}
		&\multicolumn{3}{c|}{\begin{tikzpicture}[yscale=1.5]
				\draw[every node/.style={vertex}]
				(0,0) node[label=left:$u$] (u) {}
				(2,0) node[label=right:$w$] (w) {}
				(1,1) node[label=above:$v$] (v) {}
				;
				\draw[edge,every node/.style={timelabel}]
				(u)
				-- node {$t_3$} (w)
				-- node {$t_2$} (v)
				-- node {$t_1$} (u)
				;
		\end{tikzpicture}}&\multicolumn{2}{c}{\begin{tikzpicture}[yscale=1.5]
				\draw[every node/.style={vertex}]
				(0,0) node[label=left:$u$] (u) {}
				(2,0) node[label=right:$w$] (w) {}
				(1,1) node[label=above:$v$] (v) {}
				;
				\draw[edge,every node/.style={timelabel}]
				(u)
				-- node {$t_1$} (v)
				-- node {$t_2$} (w)
				;
		\end{tikzpicture}}\\
		& $t_1=t_2=t_3$  & $ t_1 < t_2 = t_3$  & $t_1 \leq t_2 < t_3$
		& $t_1=t_2$  & $ t_1 < t_2$ \\
		\hline
		\TTOs{} & non-cyclic   & $wu = wv$     & \makecell{$vw \implies uw$\\ $vu \implies
			wu$} & \makecell{$uv = wv$} & \makecell{$uv \implies wv$} \\
		\StrongTTOs{}    & $\bot$       & $wu \land wv$ & \makecell{$vw \implies uw$\\ $vu \implies wu$} & \makecell{$uv = wv$} & \makecell{$uv \implies wv$} \\
		\StrictTTOs{}     & $\top$       & non-cyclic    & \makecell{$vw \implies uw$\\ $vu \implies wu$} & $\top$ & \makecell{$uv \implies wv$} \\
		\textsc{Str.\ Str.\ TTO}        & $\top$       & \makecell{$vu \implies wu$\\ $uv \implies wv$} & \makecell{$vw
			\implies uw$\\
			$vu \implies wu$} & $\top$ & \makecell{$uv \implies wv$} \\
	\end{tabular}
	\caption{Orientation conditions imposed by a triangle (left) and an induced path of
		length two (right) in the underlying graph $G$ for the decision problems
		(\textsc{Strict}/\textsc{Strong}/\textsc{Strong Strict}) \TTOs.
		Here, $\top$ means that no restriction is
		imposed, $\bot$ means that the graph is not orientable, and in the case of
		triangles, ``non-cyclic'' means that all orientations except the ones that
		orient the triangle cyclicly are allowed.
	}
	\label{table:2sat-triangle}
\end{table}

When looking at the conditions imposed on temporal transitive orientations 
collected in \Cref{table:2sat-triangle}, we can
observe that all conditions except ``non-cyclic'' are expressible in 2SAT.
Since 2SAT is solvable in linear time~\cite{aspvall1979linear}, it immediately follows that
the strong variants of temporal transitivity are solvable in polynomial time, as the next theorem states.
\begin{theorem}\label{thm:easytosee}
	\StrongTTOs{} and \StrongStrictTTOs{} are solvable in polynomial time.
\end{theorem}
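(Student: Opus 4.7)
\medskip

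\noindent\textbf{Proof plan.} The plan is to encode the entire problem as a 2SAT instance whose size is polynomial in the input, and then invoke the linear-time 2SAT algorithm of Aspvall, Plass and Tarjan~\cite{aspvall1979linear}. Concretely, for the underlying graph $G=(V,E)$ of the input temporal graph $\TG=(G,\lambda)$ I introduce a Boolean variable $x_{uv}$ for every edge $\{u,v\}\in E$, with the convention that $x_{uv}=1$ encodes the orientation $uv$ and $x_{uv}=0$ encodes the orientation $vu$. This enforces that any satisfying assignment corresponds to a \emph{proper} orientation of $E$ (each edge receives exactly one direction).

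Next, I enumerate every unordered triple $\{u,v,w\}\subseteq V$ and inspect the subgraph of $G$ induced on these three vertices together with the associated time-labels. If the triple induces neither a triangle nor an induced path of length two in $G$, no constraint is produced. Otherwise, I read off the clauses from the corresponding row of \cref{table:2sat-triangle} (the row \StrongTTOs{} or \StrongStrictTTOs{}, depending on which variant we are solving) and add them to the formula. Every entry appearing in these two rows is either the constant $\top$ (no clause), the constant $\bot$ (in which case the instance is immediately a \no-instance), a unit conjunction such as $wu\wedge wv$ (two unit clauses $x_{wu}$ and $x_{wv}$), an equality $uv=wv$ (two implications, hence two 2-clauses), or a single implication $\ell_1\Rightarrow\ell_2$ (one 2-clause); in particular, none of the ``non-cyclic'' entries that caused trouble in the \TTOs{} row appear in the strong variants. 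The resulting formula $\phi$ therefore has $|E|$ variables, $O(|V|^3)$ clauses, and is constructed in polynomial time.

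The main step is to argue correctness, namely that $\phi$ is satisfiable if and only if $\TG$ admits a strongly (resp.\ strongly strictly) temporally transitive orientation. The ``only if'' direction follows because any satisfying assignment yields a proper orientation $F$, and for every triple $\{u,v,w\}$ the clauses of \cref{table:2sat-triangle} are by construction exactly the local requirements that the defining implication of strong (resp.\ strong strict) temporal transitivity places on~$F$. The ``if'' direction is equally direct: a valid orientation $F$ must satisfy these same local conditions on every triple, and hence produces a satisfying assignment of $\phi$. The key conceptual point, which I would state explicitly, is that strong (strict) temporal transitivity is a property whose witnesses involve only three vertices at a time, so that a global orientation is valid exactly when it is locally valid on every triple; this is the reason the reduction to 2SAT suffices.

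The only real obstacle is checking \cref{table:2sat-triangle} carefully to confirm that every case has been accounted for, in particular verifying that in the strong variants the triangle case $t_1=t_2=t_3$ is either uniformly forbidden (strong: $\bot$) or uniformly permitted (strong strict: $\top$), so that no ``non-cyclic'' disjunction of three literals ever appears. Once this case analysis is done, the theorem follows immediately from the linear-time solvability of 2SAT.
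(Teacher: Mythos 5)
Your proposal is correct and follows exactly the paper's argument: the paper likewise observes that every entry in the \StrongTTOs{} and \StrongStrictTTOs{} rows of \cref{table:2sat-triangle} is expressible by 2SAT clauses (no ``non-cyclic'' constraint ever arises), builds the corresponding formula over the edge-orientation variables, and invokes the linear-time 2SAT algorithm of~\cite{aspvall1979linear}. Your added remarks on the locality of the transitivity condition and the clause count are just a more explicit write-up of the same reasoning.
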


In the variants \TTOs{} and \StrictTTOs{}, however, we can have triangles which 
impose a ``non-cyclic'' orientation of three edges (\Cref{table:2sat-triangle}). 
This can be naturally
modeled by a not-all-equal (NAE) clause.\footnote{A not all equal clause is a
	set of literals and it evaluates to \true\ if and only if at least two literals in the
	set evaluate to different truth values.} However, if we now na\"ively model the
conditions with a Boolean formula, we obtain a formula with 2SAT clauses and
3NAE clauses. Deciding whether such a formula is satisfiable is NP-complete in
general~\cite{Schaefer78}. Hence, we have to investigate these two variants more thoroughly.

The only difference between the triangles that impose these ``non-cyclic'' orientations in these two problem variants is that, 
in \TTOs{}, the triangle is \emph{synchronous} (i.e.~all its three edges have the same time-label), 
while in \StrictTTOs{} two of the edges are synchronous and the third one has a smaller time-label than the other two. 
As it turns out, this difference of the two problem variants has important implications on their computational complexity. 
In fact, we obtain a surprising result: \TTOs{} is solvable in polynomial time
while \StrictTTOs{} is NP-complete. 

In~\Cref{NP-hard-StrictTTO-subsec} we prove that \StrictTTOs{} is NP-complete 
and in~\Cref{algorithm-TTO-subsec} we provide our polynomial-time algorithm for \TTOs{}.

\subsection{\StrictTTOs\ is NP-Complete}\label{NP-hard-StrictTTO-subsec}

In this section we show that in contrast to the other variants, \StrictTTOs\ is
NP-complete.
\begin{theorem}\label{thm:sgehard}
	\StrictTTOs\ is NP-complete even if the temporal input graph has only four
	different time labels.
\end{theorem}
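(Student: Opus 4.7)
\textbf{NP membership} is direct: given a candidate orientation $F$ of $\TG=(G,\lambda)$, we verify $F$ in polynomial time by iterating over all pairs of oriented time-edges $(uv,t_1),(vw,t_2)$ in $F$ with $t_2>t_1$ and checking that $F$ contains some oriented time-edge $(uw,t_3)$ with $t_3\geq t_2$.

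For \textbf{NP-hardness}, the plan is to reduce from \NAESat. The key observation comes from inspecting the \StrictTTOs{} row of \cref{table:2sat-triangle}: a triangle with labels $t_1<t_2=t_3$ imposes exactly a ``non-cyclic'' constraint on the three edge-orientations, which is precisely an NAE clause on the corresponding edge-variables; furthermore, paths with $t_1<t_2$ give 2SAT-style implications $x_{uv}\Rightarrow x_{wv}$ that can be chained to transport an orientation from one edge to another. Note that in the strict variant a synchronous triangle ($t_1=t_2=t_3$) imposes no constraint at all, so only configurations with strictly increasing time labels can be exploited; this is precisely the structural difference that lets us encode an NP-hard NAE formula here, whereas for \TTOs{} the synchronous-triangle constraint is what made the analysis in \cref{algorithm-TTO-subsec} work out to polynomial time.

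Given a \NAESat\ formula $\phi$ with variables $x_1,\ldots,x_n$ and clauses $C_1,\ldots,C_m$, I would construct $\TG_\phi$ using only the labels $\{1,2,3,4\}$ as follows. For each variable $x_i$ introduce a \emph{variable edge} $e_i$ whose orientation encodes $x_i$'s truth value. For each clause $C_j=\nae(\ell_1,\ell_2,\ell_3)$ introduce a triangle $T_j$ on three fresh vertices whose edges receive labels $(1,2,2)$; since $t_1<t_2=t_3$, this triangle is forced to be oriented non-cyclically, giving the desired NAE constraint on its three edge-variables. Finally, connect each edge of $T_j$ to the corresponding variable edge $e_i$ via a short \emph{forcing gadget} built from paths of length two with strictly increasing labels (taken from $\{1,2,3,4\}$); a positive literal is realised by a chain that propagates orientation faithfully, while a negative literal is realised by inserting one extra length-two path with $t_1<t_2$ whose forcing reverses the implication polarity. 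Reading the truth value of $x_i$ off from the orientation of $e_i$ then gives a bijective correspondence between strictly temporally transitive orientations of $\TG_\phi$ and NAE-satisfying assignments of $\phi$.

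The \textbf{main obstacle} is calibrating the connector gadgets so that they (a) correctly transport orientation with the right polarity, (b) do not accidentally create additional synchronous-at-different-levels triangles or induced paths that would inject spurious implications making $\TG_\phi$ unorientable even when $\phi$ is NAE-satisfiable, and (c) reuse only the four available time labels. The plan is to place clause triangles at labels $(1,2,2)$, variable edges at label $4$, and route the connector paths through label $3$ between fresh vertices that are non-adjacent to the triangle or to each other, so that the only constraints the gadget introduces are the intended forcing implications. Correctness is then verified by a case analysis over \cref{table:2sat-triangle} showing that every constraint in $\TG_\phi$ is either an intended NAE clause, an intended (or trivially satisfiable) forcing implication between a variable edge and a clause edge, or a vacuous $\top$; hardness for four labels then follows directly from NP-completeness of \NAESat~\cite{Schaefer78}.
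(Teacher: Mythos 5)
Your NP-membership argument and the starting observation (a triangle with $t_1<t_2=t_3$ is exactly a non-cyclic/NAE constraint in the strict setting) are both fine, but the reduction has a genuine gap at its core: the connector gadgets that are supposed to ``transport an orientation faithfully'' from a variable edge to a clause-triangle edge. For \StrictTTOs{}, inspect your own source of constraints, \cref{table:2sat-triangle}: an induced path of length two with $t_1<t_2$ yields only the \emph{one-directional} implication $x_{uv}\Rightarrow x_{wv}$, and a path with $t_1=t_2$ yields no constraint at all. Hence a chain of such paths gives a one-way implication between the variable edge and the clause edge, not the biconditional your encoding needs: if the chain only enforces $x_{e_i}\Rightarrow x_f$, then setting $x_{e_i}=0$ leaves the clause-triangle edge free, and the NAE constraint on the triangle can be satisfied regardless of the truth assignment, breaking the backward direction of the equivalence. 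To pin the clause edge to the variable's value you would need implications in both directions (two oppositely-directed chains, or a cycle of implications whose parity forces alternation), and you would then have to verify that the resulting labels still fit in $\{1,2,3,4\}$, that the polarity of each hop (which depends on which endpoint of each 2-path is shared) composes to the intended sign for positive and negated literals, and that the many chains meeting at the endpoints of a shared variable edge do not create spurious induced 2-paths or triangles. None of this is constructed in your proposal; it is precisely the hard part.

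For contrast, the paper sidesteps the need for biconditional links entirely by reducing from \textsc{(3,4)-SAT} rather than \NAESat{}: each variable is an $8$-cycle with labels alternating $1,2$, whose orientation is forced to alternate by a cyclic parity argument (so each occurrence vertex is locally a source or a sink), and each clause gadget is built so that a \emph{false} literal forces its label-$4$ connecting edge toward the variable gadget, while a \emph{true} literal leaves it free; the clause gadget then becomes unorientable exactly when all three connecting edges are forced inward. In other words, one-way forcing suffices there because an ordinary SAT clause only needs ``at least one literal true,'' whereas your direct NAE encoding demands the stronger two-way coupling that the strict-transitivity constraints do not readily provide.
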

\begin{proof}
	We present a polynomial time reduction from
	\textsc{(3,4)-SAT}~\cite{tovey1984simplified} where, given a CNF formula~$\phi$
	where each clause contains exactly three literals and each variably appears in exactly four clauses, we are asked whether $\phi$ is
	satisfiable or not. Given a formula $\phi$, we construct a temporal graph $\TG$
	as follows.
	
	\smallskip
	
	\noindent \emph{Variable gadget.} For each variable $x$ that appears in $\phi$,
	we add eight vertices $a_x, a'_x, b_x, b'_x, c_x, c'_x, d_x, d'_x$ to $\TG$.
	We connect these vertices as depicted in \Cref{fig:reduction1}, that is, we add the
	following time edges to $\TG$: $(\{a_x, a'_x\},1)$, $(\{a'_x, b_x\},2)$, $(\{b_x,b'_x\},1)$, 
	$(\{b'_x, c_x\},2)$, $(\{c_x, c'_x\},1)$, $(\{c'_x, d_x\},2)$, $(\{d_x,d'_x\},1)$, $(\{d'_x, a_x\},2)$.
	
	\begin{figure}[t]
		\begin{center}
			\begin{tikzpicture}[scale=.75,yscale=.8]
				\draw[every node/.style={vertex,label position=below}]
				(0,0) node[label=$a_x$] (A) {}
				++(2,0) node[label=$a'_x$] (A2) {}
				++(2,0) node[label=$b_x$] (B)  {}
				++(2,0) node[label=$b'_x$] (B2) {}
				++(2,0) node[label=$c_x$] (C) {}
				++(2,0) node[label=$c'_x$] (C2) {}
				++(2,0) node[label=$d_x$] (D) {}
				++(2,0) node[label=$d'_x$] (D2) {}
				;
				
				\draw[edge,every node/.style={timelabel}]
				(A) -- node {1} (A2)
				-- node {2} (B)
				-- node {1} (B2)
				-- node {2} (C)
				-- node {1} (C2)
				-- node {2} (D)
				-- node {1} (D2)
				to[bend right] node {2} (A);
			\end{tikzpicture}
		\end{center}
		\caption{Illustration of the variable gadget used in the reduction in the
			proof of \Cref{thm:sgehard}.}
		\label{fig:reduction1}
	\end{figure}
	
	\smallskip
	
	\noindent \emph{Clause gadget.} For each clause $c$ of $\phi$,
	we add six vertices $u_c, u'_c, v_c, v'_c, w_c, w'_c$ to $\TG$.
	We connect these vertices as depicted in \Cref{fig:reduction2}, that is, we add the
	following time edges to $\TG$: $(\{u_c, u'_c\},2)$, $(v_c, v'_c\},1)$, $(\{w_c,w'_c\},2)$, 
	$(\{u_c, v_c\},2)$, $(\{v_c, w_c\},3)$, $(\{w_c, u_c\},3)$, $(\{v_c,w'_c\},3)$, $(\{w_c, v'_c\},3)$.
	
	\begin{figure}[t]
		\begin{center}
			\begin{tikzpicture}[scale=.6,yscale=1.1]
				\draw[every node/.style={vertex}]
				(-1.5,0)  node[label=below:$u_c$] (A) {}
				(1.5, 0)  node[label=below:$v_c$] (B) {}
				(0, 2)    node[label=left:$w_c$] (C) {}
				(-3,-1.5) node[label=below:$u'_c$] (A2) {}
				(3,-1.5)  node[label=below:$v'_c$] (B2) {}
				(0,4)     node[label=above:$w'_c$] (C2) {}
				;
				
				\node at (0,-2) {(a)};
				
				\draw[diredge, every node/.style={timelabel}]
				(A) edge node {2} (A2)
				(A) edge node	{2} (B)
				(C) edge node {3} (A)
				(B) edge node {3} (C)
				(B2) edge node {1} (B)
				(C2) edge node {2} (C)
				(C2) to[bend right] node {3} (A);
				\draw[diredge, every node/.style={timelabel}]
				(B2) to[bend right] node {3} (C);

				\draw[every node/.style={vertex}]
				(6.5,0)  node[label=below:$u_c$] (A21) {}
				(9.5, 0)  node[label=below:$v_c$] (B21) {}
				(8, 2)    node[label=left:$w_c$] (C21) {}
				(5,-1.5) node[label=below:$u'_c$] (A22) {}
				(11,-1.5)  node[label=below:$v'_c$] (B22) {}
				(8,4)     node[label=above:$w'_c$] (C22) {}
				;
				
				\node at (8,-2) {(b)};
				
				\draw[diredge, every node/.style={timelabel}]
				(A22) edge node {2} (A21)
				(B21) edge node	{2} (A21)
				(C21) edge node {3} (A21)
				(B21) edge node {3} (C21)
				(B21) edge node {1} (B22)
				(C22) edge node {2} (C21)
				(C22) to[bend right] node {3} (A21);
				\draw[diredge, every node/.style={timelabel}]
				(B22) to[bend right] node {3} (C21);
				
				\draw[every node/.style={vertex}]
				(14.5,0)  node[label=below:$u_c$] (A3) {}
				(17.5, 0)  node[label=below:$v_c$] (B3) {}
				(16, 2)    node[label=left:$w_c$] (C3) {}
				(13,-1.5) node[label=below:$u'_c$] (A23) {}
				(19,-1.5)  node[label=below:$v'_c$] (B23) {}
				(16,4)     node[label=above:$w'_c$] (C23) {}
				;
				
				\node at (16,-2) {(c)};
				
				\draw[diredge, every node/.style={timelabel}]
				(A23) edge node {2} (A3)
				(A3) edge node	{2} (B3)
				(C3) edge node {3} (A3)
				(C3) edge node {3} (B3)
				(B23) edge node {1} (B3)
				(C3) edge node {2} (C23)
				(C23) to[bend right] node {3} (A3);
				\draw[diredge, every node/.style={timelabel}]
				(B23) to[bend right] node {3} (C3);
			\end{tikzpicture}
		\end{center}
		\caption{Illustration of the clause gadget used in the reduction in the
			proof of \Cref{thm:sgehard} and three ways how to orient the
			edges in it.}
		\label{fig:reduction2}
	\end{figure}
	
	\smallskip
	
	\noindent \emph{Connecting variable gadgets and clause gadgets.} Let variable
	$x$ appear for the $i$th time in clause $c$ and let $x$ appear in the $j$th
	literal of $c$.
	The four vertex pairs $(a_x, a'_x), (b_x, b'_x), (c_x, c'_x), (d_x, d'_x)$ from the variable gadget of $x$
	correspond to the first, second, third, and fourth appearance of $x$,
	respectively. 
	The three vertices $u'_c, v'_c, w'_c$ correspond to the first, second, and third
	literal of $c$, respectively.
	Let $i=1$ and $j=1$. If $x$ appears non-negated, then we add the time edge
	$(\{a_x, u'_c\},4)$. Otherwise, if $x$ appears negated, we add the time edge
	$(\{a'_x, u'_c\},4)$. For all other values of $i$ and $j$ we add time edges
	analogously.
	
	This finishes the reduction. It can clearly be performed in polynomial time.
	
	\medskip
	
	\noindent \textbf{($\Rightarrow$):} Assume that we have a satisfying assignment
	for~$\phi$, then we can orient $\TG$ as follows. Then if a variable $x$ is set
	to \true, we orient the edges of the corresponding variable gadgets as follows:
	$(a_x, a'_x)$, $(b_x,a'_x)$, $(b_x, b'_x)$, $(c_x,b'_x)$, $(c_x,
	c'_x)$, $(d_x,c'_x)$, $(d_x,d'_x)$, $(a_x,d'_x)$. 
	Otherwise, if $x$ is set to false, we orient as follows: $(a'_x,a_x)$, $(a'_x,
	b_x)$, $(b'_x,b_x)$, $(b'_x, c_x)$, $(c'_x,c_x)$, $(c'_x, d_x)$, $(d'_x,d_x)$, $(d'_x, a_x)$. It
	is easy so see that both orientations are transitive.
	
	Now consider a clause in $\phi$ with literals $u,v,w$ corresponding to vertices
	$u'_c, v'_c, w'_c$ of the clause gadget, respectively. We have that at least one
	of the three literals satisfies the clause. If it is $u$, then we orient the
	edges in the clause gadgets as illustrated in \Cref{fig:reduction2}~(a). It
	is easy so see that this orientation is transitive.
	Furthermore, we orient the three edges connecting the clause gadgets to variable
	gadgets as follows: By construction the vertices $u'_c, v'_c, w'_c$ are each
	connected to a variable gadget. Assume, we have edges $\{u'_c,x\}, \{v'_c,y\},
	\{w'_c,z\}$. Then we orient as follows: $(x,u'_c), (v'_c,y),
	(w'_c,z)$, that is, we orient the edge connecting the literal that satisfies the
	clause towards the clause gadget and the other two edges towards the variable
	gadgets. This yields a transitive in the clause gadget.
	Note that the variable gadgets have time labels $1$ and $2$ so we can always orient the connecting
	edges (which have time label $4$) towards the variable gadget. We do this with
	all connecting edges except $(x,u'_c)$. This edge is oriented from the variable
	gadget towards the clause gadget, however it also corresponds to a literal that
	satisfies the clause. Then by construction, the edges incident to $x$ in the
	variable gadget are oriented away from $x$, hence our orientation is transitive. 
	
	Otherwise and if $v$ satisfies the clause,
	then we orient the edges in the clause gadgets as illustrated in \Cref{fig:reduction2}~(b).
	Otherwise (in this case $w$ has to satisfy the clause), we orient the
	edges in the clause gadgets as illustrated in \Cref{fig:reduction2}~(c). It
	is easy so see that each of these orientation is transitive. In both cases we
	orient the edges connecting the clause gadgets to the variable gadgets
	analogously to the first case discussed above. By analogous arguments we get
	that the resulting orientation is transitive.

	\medskip
	
	\noindent \textbf{($\Leftarrow$):} Note that all variable gadgets
	are cycles of length eight with edges having labels alternating between $1$ and
	$2$ and hence the edges have to also be oriented alternately. Consider the
	variable gadget corresponding to $x$. We interpret the orientation $(a_x, a'_x)$, $(b_x,a'_x)$, $(b_x, b'_x)$, $(c_x,b'_x)$, $(c_x,
	c'_x)$, $(d_x,c'_x)$, $(d_x,d'_x)$, $(a_x,d'_x)$ as setting $x$ to \true\ and we
	interpret the orientation $(a'_x,a_x)$, $(a'_x,
	b_x)$, $(b'_x,b_x)$, $(b'_x, c_x)$, $(c'_x,c_x)$, $(c'_x, d_x)$, $(d'_x,d_x)$, $(d'_x, a_x)$ as
	setting $x$ to true. We claim that this yields a satisfying assignment for
	$\phi$.
	
	Assume for contradiction that there is a clause $c$ in $\phi$ that is not
	satisfied by this assignment. Then by construction of the connection of variable
	gadgets and clause gadgets, the connecting edges have to be oriented towards the
	variable gadget in order to keep the variable gadget transitive. Let the three
	connecting edges be $\{u'_c,x\}, \{v'_c,y\},
	\{w'_c,z\}$ and their orientation $(u'_c,x), (v'_c,y),
	(w'_c,z)$. Then we have that $(u'_c,x)$ forces $(u'_c,u_c)$ which in turn forces
	$(w_c,u_c)$. We have that $(v'_c,y)$ forces $(v'_c,v_c)$ which in turn forces
	$(v_c,u_c)$. Furthermore, we now have that $(w_c,u_c)$ and $(v_c,u_c)$ force
	$(w_c,v_c)$. Lastly, we have that $(w'_c,z)$ forces $(w'_c,w_c)$ which in turn
	forces $(v_c,w_c)$, a contradiction to the fact that we forced $(w_c,v_c)$
	previously.
\end{proof}

\subsection{A polynomial-time algorithm for \TTOs}\label{algorithm-TTO-subsec}

Let $G=(V,E)$ be a static undirected graph. There are various
polynomial-time algorithms for deciding whether $G$ admits a transitive
orientation $F$. However our results in this section are inspired by the
transitive orientation algorithm described by Golumbic~\cite{Golumbic04},
which is based on the crucial notion of \emph{forcing} an orientation. The
notion of forcing in static graphs is illustrated in \Cref%
{forcing-fig}~(a): if we orient the edge $\{u,v\}$ as $uv$ (i.e.,~from $u$
to $v$) then we are forced to orient the edge $\{v,w\}$ as $wv$ (i.e.,~from $%
w $ to $v$) in any transitive orientation $F$ of $G$. Indeed, if we
otherwise orient $\{v,w\}$ as $vw$ (i.e.~from $v$ to $w$), then the edge $%
\{u,w\}$ must exist and it must be oriented as $uw$ in any transitive
orientation $F$ of $G$, which is a contradiction as $\{u,w\}$ is not an edge
of $G$. Similarly, if we orient the edge $\{u,v\}$ as $vu$ then we are
forced to orient the edge $\{v,w\}$ as $vw$. That is, in any transitive
orientation $F$ of $G$ we have that $uv\in F\Leftrightarrow wv\in F$. This
forcing operation can be captured by the binary forcing relation $\Gamma $
which is defined on the edges of a static graph $G$ as follows~\cite%
{Golumbic04}. 
\begin{equation}
	uv\ \Gamma \ u^{\prime }v^{\prime }\text{ \ \ \ if and only if \ \ \ }\left\{ 
	\begin{array}{l}
		\text{either }u=u^{\prime }\text{ and }\{v,v^{\prime }\}\notin E \\ 
		\text{or }v=v^{\prime }\text{ and }\{u,u^{\prime }\}\notin E%
	\end{array}%
	\right. .  \label{Gamma-static-def-eq}
\end{equation}%

\begin{figure}
	\centering
	\begin{tikzpicture}[yscale=1.5]
		\draw[every node/.style={vertex}]
		(0,0) node[label=left:$u$] (u) {}
		(2,0) node[label=right:$w$] (w) {}
		(1,1) node[label=above:$v$] (v) {}
		;
		\node at (1,-.4) {(a)};
		\draw[edge]
		(u)
		-- (v)
		-- (w)
		;
		
		\draw[every node/.style={vertex}]
		(6,0) node[label=left:$u$] (u2) {}
		(8,0) node[label=right:$w$] (w2) {}
		(7,1) node[label=above:$v$] (v2) {}
		;
		\node at (7,-.4) {(b)};
		\draw[edge,every node/.style={timelabel}]
		(u2)
		-- node {$3$} (w2)
		-- node {$5$} (v2)
		-- node {$5$} (u2)
		;
	\end{tikzpicture}
	\caption{The orientation $uv$ forces the orientation $wu$ and vice-versa in
		the examples of (a)~a static graph $G$ where $\{u,v\},\{v,w\}\in E(G)$ and $%
		\{u,w\}\protect\notin E(G)$, and of (b)~a temporal graph $(G,\protect\lambda %
		)$ where $\protect\lambda (u,w)=3<5=\protect\lambda (u,v)=\protect\lambda %
		(v,w)$.}
	\label{forcing-fig}
\end{figure}

We now extend the definition of $\Gamma $ in a natural way to the binary
relation $\Lambda $ on the edges of a temporal graph $(G,\lambda )$, see~%
\Cref{Lambda-def-eq}. For this, observe from %
\Cref{table:2sat-triangle} that the only cases, where we
have $uv\in F\Leftrightarrow wv\in F$ in any temporal transitive orientation
of $(G,\lambda )$, are when (i)~the vertices $u,v,w$ induce a path of length
2 (see \Cref{forcing-fig}~(a)) and $\lambda (u,v)=\lambda (v,w)$,
as well as when (ii)~$u,v,w$ induce a triangle and $\lambda (u,w)<\lambda
(u,v)=\lambda (v,w)$. The latter situation is illustrated in the example of
\Cref{forcing-fig}~(b). The binary forcing relation $\Lambda $ is
only defined on pairs of edges $\{u,v\}$ and $\{u^{\prime },v^{\prime }\}$
where $\lambda (u,v)=\lambda (u^{\prime },v^{\prime })$, as follows.%
\begin{equation}
	uv\ \Lambda \ u^{\prime }v^{\prime }\text{ \ if and only if \ }\lambda
	(u,v)=\lambda (u^{\prime },v^{\prime })=t\text{ and }\left\{ 
	\begin{array}{l}
		u=u^{\prime }\text{ and }\{v,v^{\prime }\}\notin E,  \text{or}\\
		v=v^{\prime }\text{ and }\{u,u^{\prime }\}\notin E, \text{or}\\
		u=u^{\prime }\text{ and }\lambda (v,v^{\prime })<t, \text{or}\\
		v=v^{\prime }\text{ and }\lambda (u,u^{\prime })<t.%
	\end{array}%
	\right.  \label{Lambda-def-eq}
\end{equation}%
Note that, for every edge $\{u,v\}\in E$ we have that $uv\ \Lambda \ uv$.$~$%
The forcing relation $\Lambda $ for temporal graphs shares some properties
with the forcing relation $\Gamma $ for static graphs. In particular, the
reflexive transitive closure $\Lambda ^{\ast }$ of $\Lambda $ is an
equivalence relation, which partitions the edges of each set $%
E_{t}=\{\{u,v\}\in E:\lambda (u,v)=t\}$ into its $\Lambda $\emph{%
	-implication classes} (or simply, into its \emph{implication classes}). Two
edges $\{a,b\}$ and $\{c,d\}$ are in the same $\Lambda $-implication class
if and only $ab\ \Lambda ^{\ast }\ cd$, i.e.~there exists a sequence%
\begin{equation*}
	ab=a_{0}b_{0}\ \Lambda \ a_{1}b_{1}\ \Lambda \ \ldots \ \Lambda \
	a_{k}b_{k}=cd\text{, with }k\geq 0\text{.}
\end{equation*}%
Note that, for this to happen, we must have $\lambda (a_{0},b_{0})=\lambda
(a_{1},b_{1})=\ldots =\lambda (a_{k},b_{k})=t$ for some $t\geq 1$. Such a
sequence is called a $\Lambda $-chain from $ab$ to $cd$, and we say that $ab$
(eventually) $\Lambda $-forces $cd$. Furthermore note that $ab\ \Lambda
^{\ast }\ cd$ if and only if $ba\ \Lambda ^{\ast }\ dc$.\ The next
observation helps the reader understand the relationship between the two
forcing relations $\Gamma $ and $\Lambda $.

\begin{observation}
	\label{Gamma-Lambda-obs}Let $\{u,v\}\in E$, where $\lambda (u,v)=t$, and let 
	$A$ be the $\Lambda $-implication class of $uv$ in the temporal graph $%
	(G,\lambda )$. Let $G^{\prime }$ be the static graph obtained by removing
	from $G$ all edges $\{p,q\}$, where $\lambda (p,q) < t$. Then $A$ is also
	the $\Gamma $-implication class of $uv$ in the static graph $G^{\prime }$.
\end{observation}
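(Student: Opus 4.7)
The plan is to prove the equality of the two implication classes by unfolding the definitions of $\Lambda$ and $\Gamma$ and exhibiting a step-by-step correspondence between $\Lambda$-chains in $(G,\lambda)$ and $\Gamma$-chains in $G'$.

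The core observation that drives the proof is the following: since $G'$ is obtained from $G$ by removing every edge with time-label strictly less than $t$, the statement ``$\{x,y\} \notin E(G')$'' is equivalent to the disjunction ``$\{x,y\} \notin E$, or $\lambda(x,y) < t$''. Comparing this with the definition of $\Lambda$ in~\eqref{Lambda-def-eq}, one sees that for any two edges $\{a,b\}$ and $\{a',b'\}$ that both carry time-label $t$, the ``$u=u'$'' clause of the $\Gamma$-definition~\eqref{Gamma-static-def-eq} applied in $G'$ collapses $\Lambda$'s first and third conditions into a single requirement, and likewise the ``$v=v'$'' clause collapses the second and fourth. Hence, for any two edges of label $t$, we have $ab\ \Lambda\ a'b'$ in $(G,\lambda)$ if and only if $ab\ \Gamma\ a'b'$ in $G'$.

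The inclusion of $A$ into the $\Gamma$-implication class of $uv$ in $G'$ is then immediate: each step $a_ib_i\ \Lambda\ a_{i+1}b_{i+1}$ along a $\Lambda$-chain from $uv$ involves two label-$t$ edges and is therefore also a $\Gamma$-step in $G'$ by the equivalence just established.

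For the reverse inclusion I would argue inductively along an arbitrary $\Gamma$-chain $uv = c_0d_0\ \Gamma\ c_1d_1\ \Gamma\ \ldots\ \Gamma\ c_kd_k$ in $G'$, showing that each $c_id_i$ remains an edge of label $t$, so that the corresponding $\Gamma$-step indeed qualifies as a $\Lambda$-step. The main obstacle, and what I expect to be the most delicate part of the argument, is ruling out that such a chain can ever step onto an edge whose time-label strictly exceeds $t$. I would resolve this by exploiting the explicit construction of $G'$: the only edges that $G'$ retains beyond those of label $t$ are edges with strictly larger labels, and any $\Gamma$-forcing triggered by such an edge traces back to a pattern already captured by the third and fourth $\Lambda$-conditions at level $t$. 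Once this label-preservation property of $\Gamma$-chains starting at a label-$t$ edge is established, equality of the two implication classes follows by a routine induction on the chain length.
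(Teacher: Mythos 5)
Your single-step equivalence is correct and is the heart of the matter: for two oriented edges $pq$ and $p'q'$ that \emph{both} carry time-label $t$, the condition $\{q,q'\}\notin E(G')$ (when $p=p'$) unfolds to ``$\{q,q'\}\notin E$ or $\lambda(q,q')<t$'', so the four cases of \eqref{Lambda-def-eq} collapse exactly to the two cases of \eqref{Gamma-static-def-eq} evaluated in $G'$. This yields the inclusion of $A$ into the $\Gamma$-implication class of $uv$ in $G'$, as you say. (Note that the paper states this as an unproved observation offered only to aid intuition, so there is no authors' argument to compare against.)

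The gap is in the reverse inclusion. The ``label-preservation property'' you rely on --- that a $\Gamma$-chain in $G'$ starting from a label-$t$ edge never steps onto an edge of label greater than $t$ --- is false, and your sketch of why it should hold does not work: conditions 3 and 4 of \eqref{Lambda-def-eq} constrain the label of the \emph{connecting} pair $\{v,v'\}$ (resp.\ $\{u,u'\}$), not the label of the edge being forced, and $\Lambda$ is simply undefined between edges of different labels. Concretely, let $G$ be the path on vertices $u,v,w$ with edges $\{u,v\}$, $\{v,w\}$, where $\lambda(u,v)=1$ and $\lambda(v,w)=2$, and take the edge $\{u,v\}$ with $t=1$. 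Then $G'=G$, and $uv\ \Gamma\ wv$ holds in $G'$ because $\{u,w\}\notin E(G')$; hence $wv$ lies in the $\Gamma$-implication class of $uv$ in $G'$, yet $wv\notin A$ since $A$ contains only label-$1$ edges. So the $\Gamma$-class can be strictly larger than $A$, and no refinement of your induction will close this: the asserted equality can only hold after restricting the $\Gamma$-class to the edges of label $t$, and even then one must separately rule out that a $\Gamma$-chain detouring through higher-label edges merges two distinct $\Lambda$-classes of label-$t$ edges. What your argument genuinely establishes --- and what the surrounding development actually needs --- is the step-by-step correspondence among label-$t$ edges together with the one inclusion $A\subseteq{}$($\Gamma$-implication class of $uv$ in $G'$).
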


For the next lemma, we use the notation $\widehat{A}=\{uv,vu : uv \in A\}$.

\begin{lemma}
	\label{reverse-implication-classes-lem}Let $A$ be a $\Lambda $-implication
	class of a temporal graph $(G,\lambda )$.
	Then either $A=A^{-1}=\widehat{A}$ or $A\cap A^{-1}=\emptyset $.
\end{lemma}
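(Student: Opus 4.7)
The plan is to exploit the symmetry of the forcing relation $\Lambda$ under simultaneous reversal of both arguments. Concretely, I would first establish the auxiliary claim that $ab\ \Lambda\ cd$ if and only if $ba\ \Lambda\ dc$. This is a direct case analysis on the four bullets in \cref{Lambda-def-eq}: each bullet comes paired with its "mirror" bullet (the first and second conditions match each other; the third and fourth do likewise), and reversing the orientations swaps the roles of "first" vertex and "second" vertex, mapping each bullet onto its mirror. Moreover the side conditions $\{v,v'\}\notin E$ and $\lambda(v,v')<t$ are themselves symmetric in their arguments, so nothing else needs checking. Passing to the reflexive-transitive closure, I obtain that $ab\ \Lambda^\ast\ cd$ iff $ba\ \Lambda^\ast\ dc$; this is already noted in the excerpt and amounts to the statement that the set $A^{-1}$ is itself a $\Lambda$-implication class whenever $A$ is.

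The second step is then almost automatic. Since $\Lambda^\ast$ is an equivalence relation on the set of oriented edges at a given time $t$, its equivalence classes partition this set; hence any two implication classes either coincide or are disjoint. Applying this to $A$ and $A^{-1}$: if $A\cap A^{-1}=\emptyset$ we are in the second alternative of the lemma, and otherwise $A=A^{-1}$. In the latter case, unwinding the definition $\widehat{A}=A\cup A^{-1}$ yields $\widehat{A}=A=A^{-1}$, which is precisely the first alternative.

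There is no real obstacle; the only place where care is required is the case analysis in the first step, in particular checking that the "strict inequality" bullets (conditions 3 and 4 of \cref{Lambda-def-eq}) are indeed interchanged by reversal, so that the common time label $t$ plays the same role before and after reversal. Once the symmetry $ab\ \Lambda\ cd\Leftrightarrow ba\ \Lambda\ dc$ is in hand, the lemma reduces to the basic fact that distinct equivalence classes are disjoint.
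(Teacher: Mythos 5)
Your proposal is correct and follows essentially the same route as the paper: both arguments rest on the symmetry $ab\ \Lambda^\ast\ cd \Leftrightarrow ba\ \Lambda^\ast\ dc$ (which the paper records just before the lemma) together with the fact that $\Lambda^\ast$ is an equivalence relation. The paper phrases it by picking $uv\in A\cap A^{-1}$ and chasing an arbitrary $pq\in A$ to conclude $qp\in A$, whereas you package the same content as ``$A^{-1}$ is itself an implication class, and distinct classes are disjoint''; the two are interchangeable.
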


\begin{proof}
	Suppose that $A\cap A^{-1}\neq \emptyset $, and let $uv\in A\cap A^{-1}$, i.e.~$uv,vu\in A$.
	Then, for any 
	$pq\in A$ we have that $pq\ \Lambda ^{\ast }\ uv$ and $qp\ \Lambda ^{\ast }\
	vu$. Since $\Lambda ^{\ast }$ is an equivalence relation and $uv,vu\in A$,
	it also follows that $pq,qp\in A$. Therefore also $pq,qp\in A^{-1}$, and
	thus $A=A^{-1}=\widehat{A}$.
\end{proof}

\begin{definition}
	\label{F-respects-A-def}
	Let $F$ be a proper orientation and $A$ be a $\Lambda$-implication class of a temporal graph $(G,\lambda)$. 
	If $A\subseteq F$, we say that $F$ \emph{respects} $A$.
\end{definition}

\begin{lemma}
	\label{disjoint-implication-classes-lem}
	Let $F$ be a proper orientation and $A$ be a $\Lambda$-implication class of a temporal graph $(G,\lambda)$. 
	Then $F$ respects either $A$ or $A^{-1}$ (i.e.~either $A\subseteq F$ or $A^{-1}\subseteq F$), and in either case $A\cap A^{-1}=\emptyset$.
\end{lemma}

\begin{proof}
	We defined the binary forcing relation $\Lambda $ to capture the fact that,
	for any temporal transitive orientation $F$ of $(G,\lambda )$, if $ab\
	\Lambda \ cd$ and $ab\in F$, then also $cd\in F$. Applying this property
	repeatedly, it follows that either $A\subseteq F$ or $F\cap A=\emptyset $.
	If $A\subseteq F$ then $A^{-1}\subseteq F^{-1}$. On the other hand, if $%
	F\cap A=\emptyset $ then $A\subseteq F^{-1}$, and thus also $A^{-1}\subseteq
	F$. In either case, the fact that $F\cap F^{-1}=\emptyset $ by the
	definition of a temporal transitive orientation implies that also $A\cap
	A^{-1}=\emptyset $.
\end{proof}

Let now $ab=a_{0}b_{0}\ \Lambda \ a_{1}b_{1}\ \Lambda \ \ldots \ \Lambda \
a_{k}b_{k}=cd$ be a given $\Lambda $-chain. Note by \Cref{Lambda-def-eq}
that, for every $i=1,\ldots ,k$, we have that either $a_{i-1}=a_{i}$ or $%
b_{i-1}=b_{i}$. Therefore we can replace the $\Lambda $-implication $%
a_{i-1}b_{i-1}\ \Lambda \ a_{i}b_{i}$ by the implications $a_{i-1}b_{i-1}\
\Lambda \ a_{i}b_{i-1}\ \Lambda \ a_{i}b_{i}$, since either $%
a_{i}b_{i-1}=a_{i-1}b_{i-1}$ or $a_{i}b_{i-1}=a_{i}b_{i}$. Thus, as this
addition of this middle edge is always possible in a $\Lambda $-implication,
we can now define the notion of a canonical $\Lambda $-chain, which always
exists.

\begin{definition}
	\label{canonical-chain-def}Let $ab\ \Lambda ^{\ast }\ cd$. Then any $\Lambda 
	$-chain of the from $$ab=a_{0}b_{0}\ \Lambda \ a_{1}b_{0}\ \Lambda \
	a_{1}b_{1}\ \Lambda \ \ldots \ \Lambda \ a_{k}b_{k-1}\ \Lambda \
	a_{k}b_{k}=cd$$ is a \emph{canonical} $\Lambda $-chain.
\end{definition}

The next lemma extends an important known property of the forcing relation
$\Gamma $ for static graphs~\cite[Lemma 5.3]{Golumbic04} to the temporal case.

\begin{lemma}[Temporal Triangle Lemma]
	\label{triangle-lemma}Let $(G,\lambda )$ be a temporal graph with a
	synchronous triangle on the vertices $a,b,c$, where $\lambda (a,b)=\lambda
	(b,c)=\lambda (c,a)=t$. Let $A,B,C$ be three $\Lambda $-implication classes
	of $(G,\lambda )$, where $ab\in C$, $bc\in A$, and $ca\in B$, where $A\neq
	B^{-1}$ and $A\neq C^{-1}$.
	
	\begin{enumerate}
		\item If some $b^{\prime }c^{\prime }\in A$, then $ab^{\prime }\in C$ and $%
		c^{\prime }a\in B$.
		
		\item If some $b^{\prime }c^{\prime }\in A$ and $a^{\prime }b^{\prime }\in C$,
		then $c^{\prime }a^{\prime }\in B$.
		
		\item No edge of $A$ touches vertex $a$.
	\end{enumerate}
\end{lemma}

\begin{proof}
	\begin{enumerate}
		\item Let $b^{\prime }c^{\prime }\in A$, and let $bc=b_{0}c_{0}\ \Lambda \
		b_{1}c_{0}\ \Lambda \ \ldots \ \Lambda \ b_{k}c_{k-1}\ \Lambda \
		b_{k}c_{k}=b^{\prime }c^{\prime }$ be a canonical $\Lambda $-chain from $bc$
		to $b^{\prime }c^{\prime }$. Thus note that all edges $b_{i}c_{i-1}$ and $%
		b_{i}c_{i}$ of this $\Lambda $-chain have the same time-label $t$ in $%
		(G,\lambda )$. We will prove by induction that $ab_{i}\in C$ and $c_{i}a\in
		B $, for every $i=0,1,\ldots ,k$. The induction basis follows directly by
		the statement of the lemma, as $ab=ab_{0}\in C$ and $ca=c_{0}a\in B$.
		
		Assume now that $ab_{i}\in C$ and $c_{i}a\in B$. If $b_{i+1}=b_{i}$ then
		clearly $ab_{i+1}\in C$ by the induction hypothesis. Suppose now that $%
		b_{i+1}\neq b_{i}$. If $\{a,b_{i+1}\}\notin E$ then $ac_{i}\ \Lambda \
		b_{i+1}c_{i}$. Then, since $c_{i}a\in B$ and $b_{i+1}c_{i}\in A$, it follows
		that $A=B^{-1}$, which is a contradiction to the assumption of the lemma.
		Therefore $\{a,b_{i+1}\}\in E$. Furthermore, since $b_{i}c_{i}\ \Lambda \
		b_{i+1}c_{i}$, it follows that either $\{b_{i},b_{i+1}\}\notin E$ or $%
		\lambda (b_{i},b_{i+1}) < t$. In either case it follows that $ab_{i}\
		\Lambda \ ab_{i+1}$, and thus $ab_{i+1}\in C$.
		
		Similarly, if $c_{i+1}=c_{i}$ then $c_{i+1}a\in B$ by the induction
		hypothesis. Suppose now that $c_{i+1}\neq c_{i}$. If $\{a,c_{i+1}\}\notin E$
		then $ab_{i+1}\ \Lambda \ c_{i+1}b_{i+1}$. Then, since $ab_{i+1}\in C$ and $%
		b_{i+1}c_{i+1}\in A$, it follows that $A=C^{-1}$, which is a contradiction
		to the assumption of the lemma. Therefore $\{a,c_{i+1}\}\in E$. Furthermore,
		since $b_{i+1}c_{i}\ \Lambda \ b_{i+1}c_{i+1}$, it follows that either $%
		\{c_{i},c_{i+1}\}\notin E$ or $\lambda (c_{i},c_{i+1}) < t$. In either
		case it follows that $c_{i}a\ \Lambda \ c_{i+1}a$, and thus $c_{i+1}a\in C$.
		This completes the induction step.
		
		\item Let $b^{\prime }c^{\prime }\in A$ and $a^{\prime }b^{\prime }\in C$.
		Then part 1 of the lemma implies that $c^{\prime }a\in B$. Now let $%
		ab=a_{0}b_{0}\ \Lambda \ a_{1}b_{0}\ \Lambda \ \ldots \ \Lambda \ a_{\ell
		}b_{\ell -1}\ \Lambda \ a_{\ell }b_{\ell }=a^{\prime }b^{\prime }$ be a
		canonical $\Lambda $-chain from $ab$ to $a^{\prime }b^{\prime }$. Again,
		note that all edges $a_{i}b_{i-1}$ and $a_{i}b_{i}$ of this $\Lambda $-chain
		have the same time-label $t$ in $(G,\lambda )$. We will prove by induction
		that $c^{\prime }a_{i}\in B$ and $b_{i}c^{\prime }\in A$ for every $%
		i=0,1,\ldots ,k$. First recall that $c^{\prime }a=c^{\prime }a_{0}\in B$.
		Furthermore, by applying part 1 of the proof to the triangle with vertices $%
		a_{0},b_{0},c$ and on the edge $c^{\prime }a_{0}\in B$, it follows that $%
		b_{0}c^{\prime }\in A$. This completes the induction basis.
		
		For the induction step, assume that $c^{\prime }a_{i}\in B$ and $%
		b_{i}c^{\prime }\in A$. If $a_{i+1}=a_{i}$ then clearly $c^{\prime
		}a_{i+1}\in B$. Suppose now that $a_{i+1}\neq a_{i}$. If $%
		\{a_{i+1},c^{\prime }\}\notin E$ then $a_{i+1}b_{i}\ \Lambda \ c^{\prime
		}b_{i}$. Then, since $a_{i+1}b_{i}\in C$ and $b_{i}c^{\prime }\in A$, it
		follows that $A=C^{-1}$, which is a contradiction to the assumption of the
		lemma. Therefore $\{a_{i+1},c^{\prime }\}\in E$. Now, since $%
		a_{i}b_{i}~\Lambda ~a_{i+1}b_{i}$, it follows that either $%
		\{a_{i},a_{i+1}\}\notin E$ or $\lambda (a_{i},a_{i+1}) < t$. In either
		case it follows that $c^{\prime }a_{i}\ \Lambda \ c^{\prime }a_{i+1}$.
		Therefore, since $c^{\prime }a_{i}\in B$, it follows that $c^{\prime
		}a_{i+1}\in B$.
		
		If $b_{i+1}=b_{i}$ then clearly $b_{i+1}c^{\prime }\in A$. Suppose now that $%
		b_{i+1}\neq b_{i}$. Then, since $c^{\prime }a_{i+1}\in B$, $a_{i+1}b_{i}\in
		C $, and $b_{i}c^{\prime }\in A$, we can apply part 1 of the lemma to the
		triangle with vertices $a_{i+1},b_{i},c^{\prime }$ and on the edge $%
		a_{i+1}b_{i+1}\in C$, from which it follows that $b_{i}c^{\prime }\in A$.
		This completes the induction step, and thus $c^{\prime }a_{k}=c^{\prime
		}a^{\prime }\in B$.
		
		\item Suppose that $ad\in A$ (resp.~$da\in A$), for some vertex $d$.
		Then, by setting $b^{\prime }=a$ and $c^{\prime }=d$ (resp.~$b^{\prime }=d$
		and $c^{\prime }=a$), part 1 of the lemma implies that $ab^{\prime }=aa\in C$
		(resp.~$c^{\prime }a=aa\in B$). Thus is a contradiction, as the underlying
		graph $G$ does not have the edge~$aa$.\qedhere
	\end{enumerate}
\end{proof}

\subparagraph{Deciding temporal transitivity using Boolean
	satisfiability.\label{boolean-transitivity-subsec}}

Starting with any undirected edge $\{u,v\}$ of the underlying graph $G$, we
can clearly enumerate in polynomial time the whole $\Lambda $-implication
class $A$ to which the oriented edge $uv$ belongs (cf.~\Cref{Lambda-def-eq}). If the reversely directed
edge $vu\in A$ then \Cref{reverse-implication-classes-lem} implies that 
$A=A^{-1}=\widehat{A}$. Otherwise, if $vu\notin A$ then $vu\in A^{-1}$ and
\Cref{reverse-implication-classes-lem} implies that $A\cap
A^{-1}=\emptyset $. Thus, we can also decide in polynomial time whether $%
A\cap A^{-1}=\emptyset $. If we encounter at least one $\Lambda $%
-implication class $A$ such that $A\cap A^{-1}\neq \emptyset $, then it
follows by \Cref{disjoint-implication-classes-lem} that $(G,\lambda )$
is not temporally transitively orientable. 

In the remainder of the section we will assume that $A\cap A^{-1}=\emptyset $
for every $\Lambda $-implication class $A$ of $(G,\lambda )$, which is a 
\emph{necessary} condition for $(G,\lambda )$ to be temporally transitive 
orientable. Moreover it follows by \Cref%
{disjoint-implication-classes-lem} that, if $(G,\lambda )$ admits a
temporally transitively orientation~$F$, then either $A\subseteq F$ or $%
A^{-1}\subseteq F$. This allows us to define a Boolean variable $x_{A}$ for
every $\Lambda $-implication class $A$, where $x_{A}=\overline{x_{A^{-1}}}$.
Here $x_{A}=1$ (resp.~$x_{A^{-1}}=1$) means that $A\subseteq F$ (resp.~$%
A^{-1}\subseteq F$), where $F$ is the temporally transitive orientation
which we are looking for. Let $\{A_{1},A_{2},\ldots ,A_{s}\}$ be a set of $%
\Lambda $-implication classes such that $\{\widehat{A_{1}},\widehat{A_{2}}%
,\ldots ,\widehat{A_{s}}\}$ 
is a partition of the edges of the underlying graph $G$.\footnote{Here we slightly abuse the notation by 
	identifying the undirected edge $\{u,v\}$ with the set of both its orientations $\{uv,vu\}$.} 
Then any truth assignment $\tau $ of the variables $%
x_{1},x_{2},\ldots ,x_{s}$ (where $x_i = x_{A_i}$ for every $i=1,2,\ldots,s$) corresponds bijectively to one possible orientation 
of the temporal graph $(G,\lambda )$, in which every $\Lambda $-implication
class is oriented consistently.

Now we define two Boolean formulas $\phi _{\text{3NAE}}$ and $\phi _{\text{%
		2SAT}}$ such that $(G,\lambda )$ admits a temporal transitive orientation if
and only if there is a truth assignment $\tau $ of the variables $%
x_{1},x_{2},\ldots ,x_{s}$ such that both $\phi _{\text{3NAE}}$ and $\phi _{%
	\text{2SAT}}$ are simultaneously satisfied. 
Intuitively, $\phi _{\text{3NAE}}$ captures the ``non-cyclic'' condition from \Cref{table:2sat-triangle} while $\phi _{\text{%
		2SAT}}$ captures the remaining conditions.
Here $\phi _{\text{3NAE}}$ is a
\textsc{3NAE} formula, i.e.,~the disjunction of clauses with three literals each,
where every clause $\nae(\ell _{1}, \ell _{2}, \ell _{3})$ is
satisfied if and only if at least one of the literals $\{\ell
_{1},\ell _{2},\ell _{3}\}$ is equal to 1 and at least one of them is equal
to 0. Furthermore $\phi _{\text{2SAT}}$ is a 2SAT formula, i.e.,~the
disjunction of 2CNF clauses with two literals each, where every clause $%
(\ell _{1}\vee \ell _{2})$ is satisfied if and only if at least one of the
literals $\{\ell _{1},\ell _{2}\}$ is equal to 1.

For simplicity of the
presentation we also define a variable $x_{uv}$ for every directed edge $uv$. 
More specifically, if $uv\in A_{i}$ (resp. $uv\in A_{i}^{-1}$) then we set 
$x_{uv}=x_{i}$ (resp. $x_{uv}=\overline{x_{i}}$). That is, $x_{uv}=\overline{%
	x_{vu}}$ for every undirected edge $\{u,v\}\in E$. Note that, although $%
\{x_{uv},x_{vu}:\{u,v\}\in E\}$ are defined as variables, they can
equivalently be seen as \emph{literals} in a Boolean formula over the
variables $x_{1},x_{2},\ldots ,x_{s}$. 
The process of building all $\Lambda$-implication classes and 
all variables $\{x_{uv}, x_{vu}:\{u,v\}\in E\}$ is given by
\Cref{edge-variables-alg}.

\begin{algorithm}[t]
	\caption{Building the $\Lambda$-implication classes and the edge-variables.}
	\label{edge-variables-alg}
	\begin{algorithmic}[1]
		\footnotesize
		\REQUIRE{A temporal graph $(G,\lambda)$, where $G=(V,E)$.}
		\ENSURE{The variables $\{x_{uv}, x_{vu}:\{u,v\}\in E\}$, or the announcement that $(G,\lambda)$ is temporally not transitively orientable.}
		
		\medskip
		
		\STATE{$s\leftarrow 0$; \ \ $E_0 \leftarrow E$} 
		\WHILE{$E_0 \neq \emptyset$}
		\STATE{$s\leftarrow s+1$; \ \ Let $\{p,q\}\in E_0$ be arbitrary}
		\STATE{Build the $\Lambda$-implication class $A_s$ of the oriented edge $pq$ (by~\Cref{Lambda-def-eq})}
		\IF[$A_s \cap A_s^{-1} \neq \emptyset$]{$qp\in A_s$}
		\RETURN{``NO''}
		\ELSE
		\STATE{$x_s$ is the variable corresponding to the directed edges of $A_s$}
		\FOR{every $uv\in A_s$}
		\STATE{$x_{uv}\leftarrow x_{s}$; \ \ $x_{vu}\leftarrow \overline{x_{s}}$}
		\COMMENT{$x_{uv}$ and $x_{vu}$ become aliases of $x_s$ and $\overline{x_s}$}
		\ENDFOR
		\STATE{$E_0 \leftarrow E_0 \setminus \widehat{A_s}$}
		\ENDIF
		\ENDWHILE
		\RETURN{$\Lambda$-implication classes $\{A_1, A_2, \ldots, A_s\}$ and variables $\{x_{uv}, x_{vu}:\{u,v\}\in E\}$}
	\end{algorithmic}
\end{algorithm}

\subparagraph{Description of the \textsc{3NAE} formula $\phi _{\text{3NAE}}$.} The formula $\phi _{\text{3NAE}}$ captures 
the ``non-cyclic'' condition of the problem variant \TTOs{} (presented in \Cref{table:2sat-triangle}). 
The formal description of $\phi _{\text{3NAE}}$ is as follows. 
Consider a synchronous triangle of $(G,\lambda )$ on the vertices $u,v,w$. 
Assume that $x_{uv}= x_{wv}$, i.e., $x_{uv}$ is the same variable as $x_{wv}$. 
Then the pair $\{uv,wv\}$ of oriented edges belongs 
to the same $\Lambda$-implication class $A_i$. 
This implies that the triangle on the vertices $u,v,w$ is never cyclically oriented in any proper orientation $F$ that respects $A_i$ or~$A_i^{-1}$. 
Note that, by symmetry, the same happens if $x_{vw}= x_{uw}$ or if $x_{wu}= x_{vu}$. 
Assume, on the contrary, that $x_{uv}\neq x_{wv}$, $x_{vw}\neq x_{uw}$, and $x_{wu}\neq x_{vu}$. 
In this case we add to $\phi _{\text{3NAE}}$ the clause $\nae(x_{uv},x_{vw}, x_{wu})$. 
Note that the triangle on $u,v,w$ is transitively oriented if and only if
$\nae(x_{uv}, x_{vw}, x_{wu})$ is satisfied, i.e.,~at least one of the
variables $\{x_{uv},x_{vw},x_{wu}\}$ receives the value 1 and at least one of them receives the value 0.

\subparagraph{Description of the 2SAT formula $\phi _{\text{2SAT}}$.} 
The formula $\phi _{\text{2SAT}}$ captures all conditions apart from the ``non-cyclic'' condition of the problem variant \TTOs{} (presented in \Cref{table:2sat-triangle}). 
The formal description of $\phi _{\text{2SAT}}$ is as follows. 
Consider a triangle of $(G,\lambda )$ on the vertices $u,v,w$,
where $\lambda (u,v)=t _{1}$, $\lambda (v,w)=t _{2}$, $\lambda
(w,v)=t _{3}$, and $t _{1}\leq t _{2}\leq t _{3}$. If $t
_{1}<t _{2}=t _{3}$ then we add to $\phi _{\text{2SAT}}$ the clauses $%
(x_{uw}\vee x_{wv})\wedge (x_{vw}\vee x_{wu})$; note that these clauses are
equivalent to $x_{wu}=x_{wv}$. If $t _{1}\leq t _{2}<t _{3}$ then
we add to $\phi _{\text{2SAT}}$ the clauses $(x_{wv}\vee x_{uw})\wedge
(x_{uv}\vee x_{wu})$; note that these clauses are equivalent to $%
(x_{vw}\Rightarrow x_{uw})\wedge (x_{vu}\Rightarrow x_{wu})$. Now consider a
path of length 2 that is induced by the vertices $u,v,w$, where $\lambda
(u,v)=t _{1}$, $\lambda (v,w)=t _{2}$, and $t _{1}\leq t _{2}$.
If $t _{1}=t _{2}$ then we add to $\phi _{\text{2SAT}}$ the clauses $%
(x_{vu}\vee x_{wv})\wedge (x_{vw}\vee x_{uv})$; note that these clauses are
equivalent to $(x_{uv}=x_{wv})$. Finally, if $t _{1}<t _{2}$ then we add
to $\phi _{\text{2SAT}}$ the clause $(x_{vu}\vee x_{wv})$; note that this
clause is equivalent to $(x_{uv}\Rightarrow x_{wv})$.

\bigskip In what follows, we say that $\phi _{\text{%
		3NAE}}\wedge \phi _{\text{2SAT}}$ is \emph{satisfiable} if and only if there
exists a truth assignment $\tau $ which simultaneously satisfies both $\phi _{\text{3NAE}}$ and $\phi _{\text{2SAT}}$. 
Given the above definitions of $\phi _{\text{3NAE}}$ and $\phi _{\text{2SAT}}$, it is easy
to check that their clauses model all conditions of the oriented edges
imposed by the row of ``\TTOs'' in~\Cref{table:2sat-triangle}.

\begin{observation}
	\label{transitivity-sat-first-equivalence-obs}
	The temporal graph $(G,\lambda )$ is transitively orientable if and only if 
	$\phi_{\text{3NAE}} \land \phi_{\text{2SAT}}$ is satisfiable.
\end{observation}

\newcommand{\InitalForcing}{\hyperref[initial-forcing-alg]{\textsc{Initial-Forcing}}}
\newcommand{\BooleanForcing}{\hyperref[phi-2sat-nae-forcing-alg]{\textsc{Boolean-Forcing}}}

Although deciding whether $\phi _{\text{2SAT}}$ is satisfiable can be done
in linear time with respect to the size of the formula~\cite{aspvall1979linear}, the
problem Not-All-Equal-3-SAT is NP-complete~\cite{Schaefer78}. 
We overcome this problem and present a polynomial-time algorithm for deciding 
whether $\phi_{\text{3NAE}}\wedge \phi_{\text{2SAT}}$ is satisfiable as follows.

\subparagraph{Roadmap of the entire process.}
Our algorithm iteratively produces at iteration $j$ a formula $\phi_{\text{3NAE}}^{(j)} \wedge \phi_{\text{2SAT}}^{(j)}$, which is 
computed from the previous formula $\phi_{\text{3NAE}}^{(j-1)} \wedge \phi_{\text{2SAT}}^{(j-1)}$ by (almost) simulating the classical greedy algorithm 
that solves 2SAT~\cite{aspvall1979linear}. 
The classical 2SAT-algorithm proceeds greedily as follows. For every variable $x_i$, if setting $x_i=1$ (resp.~$x_i=0$) leads to an immediate contradiction, 
the algorithm is forced to set $x_i=0$ (resp.~$x_i=1$). Otherwise, if each of the truth assignments $x_i=1$ and $x_i=0$ does not lead 
to an immediate contradiction, the algorithm arbitrarily chooses to set $x_i=1$ or $x_i=0$, and thus some clauses are removed from the formula 
as they were satisfied. 
The argument for the correctness of this classical 2SAT-algorithm is that new clauses are \emph{never added} to the formula at any step. 
The main technical difference between the 2SAT-algorithm and our algorithm is that, in our case, the formula $\phi_{\text{3NAE}}^{(j)} \wedge \phi_{\text{2SAT}}^{(j)}$ 
is \emph{not} necessarily a sub-formula of $\phi_{\text{3NAE}}^{(j-1)} \wedge \phi_{\text{2SAT}}^{(j-1)}$, as in some cases we need to also add clauses.

Our main technical result is that, nevertheless, 
if the algorithm does not return ``NO'' while applying variable \emph{forcings} at the initialization phase (during which $\phi_{\text{3NAE}}^{(0)} \wedge \phi_{\text{2SAT}}^{(0)}$ is computed), then the input instance is a \emph{yes}-instance. 
In this case, the algorithm proceeds by computing the formulas $\phi_{\text{3NAE}}^{(j)} \wedge \phi_{\text{2SAT}}^{(j)}$, for $j=1,2,\ldots$, which eventually determine a valid temporally transitive orientation of the input temporal graph. 
The proof of this result (see~\Cref{lemma:never-say-no} and~\Cref{thm-correctness-runningtime}) relies on a sequence of structural properties of temporal transitive orientations which we establish.
This phenomenon of deducing a polynomial-time algorithm for an algorithmic graph problem by deciding satisfiability of a mixed Boolean formula 
(i.e.~with both clauses of two and three literals) occurs rarely; this approach has been successfully used for the efficient recognition of 
simple-triangle (known also as ``PI'') graphs~\cite{Mertzios15}.

\subparagraph{Brief outline of the algorithm.} 
In the \emph{initialization phase}, we exhaustively check which truth values are \emph{forced} in $\phi _{\text{3NAE}} \wedge \phi _{\text{2SAT}}$ 
by using \InitalForcing{} (see \Cref{initial-forcing-alg}) as
a subroutine.
During the execution of \InitalForcing{}, we either replace the formulas $\phi _{\text{3NAE}}$ and $\phi _{\text{2SAT}}$ 
by the equivalent formulas $\phi_{\text{3NAE}}^{(0)}$ and $\phi_{\text{2SAT}}^{(0)}$, respectively, 
or we reach a contradiction by showing that $\phi _{\text{3NAE}} \wedge \phi _{\text{2SAT}}$ is unsatisfiable.

The \emph{main phase} of the algorithm starts once the formulas $\phi_{\text{3NAE}}^{(0)}$ and $\phi_{\text{2SAT}}^{(0)}$ have been computed. 
During this phase, we iteratively modify the formulas such that, 
at the end of iteration $j$ we have the formulas $\phi _{\text{3NAE}}^{(j)}$ and $\phi _{\text{2SAT}}^{(j)}$. 
Note that, during the execution of the algorithm, we can \emph{both add and remove} clauses from $\phi _{\text{2SAT}}^{(j)}$. 
On the other hand, we can \emph{only remove} clauses from $\phi_{\text{3NAE}}^{(j)}$. 
Thus, at some iteration $j$, we obtain $\phi_{\text{3NAE}}^{(j)}=\emptyset$, and after that iteration we only need to decide satisfiability of $\phi_{\text{2SAT}}^{(j)}$ which can be done efficiently~\cite{aspvall1979linear}.

\subparagraph{Two crucial technical lemmas.} 
For the remainder of the section we write $x_{ab}\overset{\ast }{\Rightarrow}_{\phi_{\text{2SAT}}}x_{uv}$ 
(resp.~$x_{ab}\overset{\ast }{\Rightarrow}_{\phi_{\text{2SAT}}^{(j)}}x_{uv}$) if the truth assignment $x_{ab}=1$ forces (in 0 or more iterations)
the truth assignment $x_{uv}=1$ from the clauses of $\phi_{\text{2SAT}}$ 
(resp.~of $\phi_{\text{2SAT}}^{(j)}$ at the iteration $j$ of the algorithm); 
in this case we say that $x_{ab}$ \emph{implies} $x_{uv}$ in $\phi_{\text{2SAT}}$ (resp.~in $\phi_{\text{2SAT}}^{(j)}$). 
We next introduce the notion of \emph{uncorrelated} triangles, which lets us
formulate some important properties of the implications in $\phi_{\text{2SAT}}$
and $\phi _{\text{2SAT}}^{(0)}$.
\begin{definition}
	\label{uncorellated-triangle-def}Let $u,v,w$ induce a synchronous triangle
	in $(G,\lambda )$, where each of the variables of the set $%
	\{x_{uv},x_{vu},x_{vw},x_{wv},x_{wu},x_{uw}\}$ belongs to a \emph{different} $%
	\Lambda $-implication class. If \emph{none} of the variables of the set $%
	\{x_{uv},x_{vu},x_{vw},x_{wv},x_{wu},x_{uw}\}$ implies any other variable of the same set 
	in the formula $\phi _{\text{2SAT}}$ (resp.~in the formula $\phi _{\text{2SAT}}^{(0)}$), 
	then the triangle of $u,v,w$ is \emph{$\phi _{\text{2SAT}}$-uncorrelated} (resp.~\emph{$\phi _{\text{2SAT}}^{(0)}$-uncorrelated}).
\end{definition}

Now we present our two crucial technical lemmas (\Cref{lemma-0,lemma-0-after-forcing}) which prove some structural properties of 
the 2SAT formulas $\phi _{\text{2SAT}}$ and $\phi _{\text{2SAT}}^{(0)}$. 
These structural properties will allow us to prove the correctness of our main algorithm 
in this section (\Cref{temporal-orientation-alg}). 
In a nutshell, these two lemmas guarantee that, whenever we have specific implications 
in $\phi _{\text{2SAT}}$ (resp.~in~$\phi _{\text{2SAT}}^{(0)}$), 
then we also have some specific \emph{other} implications in the same formula.

\begin{lemma}
	\label{lemma-0}Let $u,v,w$ induce a synchronous and $\phi _{\text{2SAT}}$-uncorrelated triangle in 
	$(G,\lambda )$, and let $\{a,b\}\in E$ be an edge of $G$ such that $%
	|\{a,b\} \cap \{u,v,w\}|\le 1$. If $x_{ab}\overset{\ast }{\Rightarrow
	}_{\phi _{\text{2SAT}}}x_{uv}$, then $x_{ab}$ also implies in $\phi _{\text{2SAT}}$ 
	at least one of the four variables in the set $\{x_{vw},x_{wv},x_{uw},x_{wu}\}$.
\end{lemma}

\begin{proof}
	Let $t$ be the common time-label of all the edges in the synchronous
	triangle of the vertices $u,v,w$. That is, $\lambda (u,v)=\lambda
	(v,w)=\lambda (w,u)=t$. Denote by $A$, $B$, and $C$ the $\Lambda $%
	-implication classes in which the directed edges $uv$, $vw$, and $%
	wu $ belong, respectively. Let $x_{ab}=x_{a_{0}b_{0}}\Rightarrow _{\phi_{\text{2SAT}%
	}}x_{a_{1}b_{1}}\Rightarrow _{\phi_{\text{2SAT}}}\ldots \Rightarrow _{\phi_{%
			\text{2SAT}}}x_{a_{k-1}b_{k-1}}\Rightarrow _{\phi_{\text{2SAT}%
	}}x_{a_{k}b_{k}}=x_{uv}$ be a $\phi_{\text{2SAT}}$-implication chain from $x_{ab}$ to $x_{uv}$. 
	Note that, without loss of generality, for each variable $x_{a_{i}b_{i}}$ in this chain, the
	directed edge $a_{i}b_{i}$ is a representative of a different $\Lambda $-implication class 
	than all other directed edges in the chain 
	(otherwise we can just shorten the $\phi_{\text{2SAT}}$-implication chain from $x_{ab}$ to $x_{uv}$). 
	Furthermore, since $x_{a_{k}b_{k}}=x_{uv}$, note that $a_{k}b_{k}$ and $uv$ are both
	representatives of the same $\Lambda $-implication class $A$. Therefore
	\Cref{triangle-lemma} (the temporal triangle lemma) implies that $%
	wa_{k}\in C$ and $b_{k}w\in B$. Therefore we can assume without loss of
	generality that $u=a_{k}$ and $v=b_{k}$. 
	Moreover, let $A'\notin \{A,A^{-1},B,B^{-1},C,C^{-1}\}$ be the $\Lambda$-implication class in which the directed edge $a_{k-1}b_{k-1}$ belongs. 
	Since $x_{a_{k-1}b_{k-1}}\Rightarrow _{\phi_{\text{2SAT}}}x_{a_{k}b_{k}}$, 
	note that without loss of generality we can choose the directed edge $a_{k-1}b_{k-1}$ to be such a representative of the $\Lambda$-implication 
	class $A'$ such that either $a_{k-1}=a_{k}$ or $b_{k-1}=b_{k}$. We now distinguish these two cases.

	\subparagraph{Case 1:} $u=a_{k}=a_{k-1}$ and $v=b_{k}\neq b_{k-1}$\textbf{.}
	Then, since $x_{a_{k-1}b_{k-1}}=x_{a_{k}b_{k-1}}\Rightarrow _{\phi_{\text{%
				2SAT}}}x_{a_{k}b_{k}}=x_{uv}$ and $\lambda (a_{k},b_{k})=t$, it follows that 
	$\lambda (u,b_{k-1})\geq t+1$. Suppose that $\{w,b_{k-1}\}\notin E$. Then $%
	x_{ub_{k-1}}\Rightarrow _{\phi_{\text{2SAT}}}x_{uw}$, which proves the
	lemma. Now suppose that $\{w,b_{k-1}\}\in E$. If $\lambda (w,b_{k-1})\leq
	\lambda (u,b_{k-1})-1$ then $x_{ub_{k-1}}\Rightarrow _{\phi_{\text{2SAT}%
	}}x_{uw}$, which proves the lemma. Suppose that $\lambda (w,b_{k-1})\geq
	\lambda (u,b_{k-1})+1 $. Then $x_{ub_{k-1}}\Rightarrow _{\phi_{\text{2SAT}%
	}}x_{wb_{k-1}}\Rightarrow _{\phi_{\text{2SAT}}}x_{wu}$, i.e.~$x_{ub_{k-1}}%
	\overset{\ast }{\Rightarrow }_{\phi_{\text{2SAT}}}x_{wu}$, which again proves
	the lemma. Suppose finally that $\lambda (w,b_{k-1})=\lambda (u,b_{k-1})$.
	Then, since $\lambda (u,w)=t<\lambda (w,b_{k-1})=\lambda (u,b_{k-1})$, it
	follows that $wb_{k-1}\ \Lambda \ ub_{k-1}$. If $\{v,b_{k-1}\}\notin E$ then 
	$x_{ub_{k-1}}=x_{wb_{k-1}}\Rightarrow _{\phi_{\text{2SAT}}}x_{wv}$, which
	proves the lemma. Now let $\{v,b_{k-1}\}\in E$. If $\lambda (v,b_{k-1})\leq
	\lambda (w,b_{k-1})-1$ then $x_{ub_{k-1}}=x_{wb_{k-1}}\Rightarrow _{\phi_{%
			\text{2SAT}}}x_{wv}$, which proves the lemma. If $\lambda (v,b_{k-1})\geq
	\lambda (w,b_{k-1})+1$ then $x_{ub_{k-1}}=x_{wb_{k-1}}\Rightarrow _{\phi_{%
			\text{2SAT}}}x_{vb_{k-1}}\Rightarrow _{\phi_{\text{2SAT}}}x_{wv}$, which
	proves the lemma. If $\lambda (v,b_{k-1})=\lambda (w,b_{k-1})$ then $%
	ub_{k-1}\ \Lambda \ vb_{k-1}$, and thus $x_{ub_{k-1}}=x_{a_{k-1}b_{k-1}}%
	\nRightarrow _{\phi_{\text{2SAT}}}x_{a_{k}b_{k}}=x_{uv}$, which is a
	contradiction.
	
	\subparagraph{Case 2:} $u=a_{k}\neq a_{k-1}$ and $v=b_{k}=b_{k-1}$\textbf{.}
	Then, since $x_{a_{k-1}b_{k-1}}=x_{a_{k-1}b_{k}}\Rightarrow _{\phi_{\text{%
				2SAT}}}x_{a_{k}b_{k}}=x_{uv}$ and $\lambda (a_{k},b_{k})=t$, it follows that 
	$\lambda (v,a_{k-1})\leq t-1$. Suppose that $\{w,a_{k-1}\}\notin E$. Then $%
	x_{a_{k-1}v}\Rightarrow _{\phi_{\text{2SAT}}}x_{wv}$, which proves the
	lemma. Now suppose that $\{w,a_{k-1}\}\in E$. If $\lambda (w,a_{k-1})\leq
	t-1 $ then $x_{a_{k-1}v}\Rightarrow _{\phi_{\text{2SAT}}}x_{wv}$, which
	proves the lemma. Suppose that $\lambda (w,a_{k-1})=t$. Then, since $\lambda
	(v,a_{k-1})\leq t-1$, it follows that $vw\ \Lambda \ a_{t-1}w$. If $%
	\{u,a_{k-1}\}\notin E$ then also $a_{t-1}w\ \Lambda \ uw$, and thus $%
	x_{wv}=x_{wu}$, which is a contradiction to the assumption that the triangle
	of $u,v,w$ is uncorrelated. Thus $\{u,a_{k-1}\}\in E$. If $\lambda
	(u,a_{k-1})\leq t-1$ then again $a_{k-1}w\ \Lambda \ uw$, which is a
	contradiction. On the other hand, if $\lambda (u,a_{k-1})\geq t$ then $%
	x_{a_{k-1}v}=x_{a_{k-1}b_{k-1}}\nRightarrow _{\phi_{\text{2SAT}%
	}}x_{a_{k}b_{k}}=x_{uv}$, which is a contradiction.
	
	Finally suppose that $\lambda (w,a_{k-1})\geq t+1$. Then, since $\lambda
	(v,w)=t$ and $\lambda (v,a_{k-1})\leq t-1$, it follows that $%
	x_{vw}\Rightarrow _{\phi_{\text{2SAT}}}x_{a_{k-1}w}\Rightarrow _{\phi_{\text{%
				2SAT}}}x_{a_{k-1}v}$. However, since $x_{a_{k-1}v}=x_{a_{k-1}b_{k}}%
	\Rightarrow _{\phi_{\text{2SAT}}}x_{a_{k}b_{k}}=x_{uv}$, it follows that $%
	x_{vw}\overset{\ast }{\Rightarrow }_{\phi_{\text{2SAT}}}x_{uv}$, which is a
	contradiction to the assumption that the triangle of $u,v,w$ is uncorrelated.
\end{proof}

\begin{lemma}
	\label{lemma-0-after-forcing}Let $u,v,w$ induce a synchronous and $\phi_{\text{2SAT}}^{(0)}$-uncorrelated triangle in $(G,\lambda )$, 
	and let $\{a,b\}\in E$ be an edge of $G$ such that $|\{a,b\} \cap \{u,v,w\}|\le 1$. 
	If $x_{ab}\overset{\ast}{\Rightarrow}_{\phi_{\text{2SAT}}^{(0)}}x_{uv}$, then
	$x_{ab}$ also implies in $\phi_{\text{2SAT}}^{(0)}$ 
	at least one of the four variables in the set $\{x_{vw},x_{wv},x_{uw},x_{wu}\}$. 
\end{lemma}

\begin{proof}
	Assume we have $|\{a,b\} \cap \{u,v,w\}|\le 1$ and
	$x_{ab}\overset{\ast}{\Rightarrow}_{\phi_{\text{2SAT}}^{(0)}}x_{uv}$. Then we
	make a case distinction on the last implication in the implication chain
	$x_{ab}\overset{\ast}{\Rightarrow}_{\phi_{\text{2SAT}}^{(0)}}x_{uv}$.
	\begin{enumerate}
		\item The last implication is an implication from $\phi_{\text{2SAT}}$, i.e.,
		$x_{ab}\overset{\ast}{\Rightarrow}_{\phi_{\text{2SAT}}^{(0)}}x_{pq}{\Rightarrow}_{\phi_{\text{2SAT}}}x_{uv}$.
		If $\{p,q\}\subseteq \{u,v,w\}$ then we are done, since we can assume that
		$\{p,q\}\neq \{u,v\}$ because no such implications are contained in
		$\phi_{\text{2SAT}}$. Otherwise \Cref{lemma-0} implies that $x_{pq}$ also implies at least
		one of the four variables in the set $\{x_{vw},x_{wv},x_{uw},x_{wu}\}$ in
		$\phi_{\text{2SAT}}$. If follows that $x_{ab}$ also implies at least
		one of the four variables in the set $\{x_{vw},x_{wv},x_{uw},x_{wu}\}$ in $\phi_{\text{2SAT}}^{(0)}$.
		\item The last implication is \emph{not} an implication from
		$\phi_{\text{2SAT}}$, i.e.,
		$x_{ab}\overset{\ast}{\Rightarrow}_{\phi_{\text{2SAT}}^{(0)}}x_{pq}{\Rightarrow}_{\phi_{\text{INIT}}}x_{uv}$,
		there the implication $x_{pq}{\Rightarrow}_{\phi_{\text{INIT}}}x_{uv}$ was
		added to $\phi_{\text{2SAT}}^{(0)}$ by \InitalForcing{}. If
		$x_{pq}{\Rightarrow}_{\phi_{\text{INIT}}}x_{uv}$ was added in Line~\ref{BF1} or
		Line~\ref{BF2} of \InitalForcing{}, then we have that
		$\{p,q\}\subseteq \{u,v,w\}$ and $\{p,q\}\neq \{u,v\}$, hence the $u,v,w$ is not
		a $\phi_{\text{2SAT}}^{(0)}$-uncorrelated triangle, a contradiction.
		If
		$x_{pq}{\Rightarrow}_{\phi_{\text{INIT}}}x_{uv}$ was added in Line~\ref{initadd} of \InitalForcing{}, then we have that
		$x_{pq}{\Rightarrow}_{\phi_{\text{INIT}}}x_{uw}$, hence we are done.
		\qedhere 
	\end{enumerate}
\end{proof}

\subparagraph{Detailed description of the algorithm.}
We are now ready to present our polynomial-time algorithm (Algorithm~\ref{temporal-orientation-alg}) for deciding whether a given temporal graph 
$(G,\lambda)$ is temporally transitively orientable. 
The main idea of our algorithm is as follows. 
First, the algorithm computes all $\Lambda$-implication classes $A_1,\ldots, A_s$ by calling~\Cref{edge-variables-alg} as a subroutine. 
If there exists at least one $\Lambda$-implication class $A_i$ where $uv,vu\in A_i$ for some edge $\{u,v\}\in E$, 
then we announce that $(G,\lambda)$ is a \no-instance, due
to~\Cref{disjoint-implication-classes-lem}.
Otherwise we associate to each $\Lambda$-implication class $A_i$ a variable $x_i$, 
and we build the \textsc{3NAE} formula $\phi_{\text{3NAE}}$ and the 2SAT formula $\phi_{\text{2SAT}}$, 
as described in~\Cref{boolean-transitivity-subsec}.

In the \emph{initialization phase} of Algorithm~\ref{temporal-orientation-alg}, 
we call algorithm \InitalForcing{} (see \Cref{initial-forcing-alg}) as a subroutine. 
Starting from the formulas $\phi_{\text{3NAE}}$ and $\phi_{\text{2SAT}}$, 
in \InitalForcing{} we build the formulas $\phi_{\text{3NAE}}^{(0)}$ and $\phi_{\text{2SAT}}^{(0)}$ by both 
(i)~checking which truth values are being \emph{forced} in $\phi _{\text{3NAE}} \wedge \phi _{\text{2SAT}}$ 
(lines~\ref{initial-first-forcing-line}-\ref{BF2}), 
and (ii)~adding to $\phi_{\text{2SAT}}$ some clauses that are implicitly implied in $\phi_{\text{3NAE}} \wedge \phi_{\text{2SAT}}$ 
(lines~\ref{initial-second-forcing-line}-\ref{initadd}). 
More specifically, \InitalForcing{} proceeds as follows:
(i) whenever setting $x_i=1$ (resp.~$x_i=0$) forces $\phi _{\text{3NAE}} \wedge \phi _{\text{2SAT}}$ 
to become unsatisfiable, we choose to set $x_i=0$ (resp.~$x_i=1$); 
(ii) if $x\Rightarrow_{\phi_{\text{2SAT}}^{(0)}}a$
and $x\Rightarrow_{\phi_{\text{2SAT}}^{(0)}}b$, and if we also have that
$\nae(a,b,c)\in\phi_{\text{3NAE}}^{(0)}$, then we add
$x\Rightarrow_{\phi_{\text{2SAT}}^{(0)}}\overline{c}$ to
$\phi_{\text{2SAT}}^{(0)}$, since clearly, if $x=1$ then $a=b=1$ and we have
to set $c=0$ to satisfy the NAE clause $\nae(a,b,c)$. 
The next observation follows easily by~\Cref{transitivity-sat-first-equivalence-obs} 
and by the construction of $\phi_{\text{3NAE}}^{(0)}$ and $\phi_{\text{2SAT}}^{(0)}$ in \InitalForcing{}.

\begin{observation}
	\label{transitivity-sat-second-equivalence-obs}
	The temporal graph $(G,\lambda )$ is
	transitively orientable if and only if $\phi_{\text{3NAE}}^{(0)} \land \phi_{\text{2SAT}}^{(0)}$ is satisfiable.
\end{observation}

The \emph{main phase} of the algorithm starts once the formulas $\phi_{\text{3NAE}}^{(0)}$ and $\phi_{\text{2SAT}}^{(0)}$ have been computed. 
As we prove in~\Cref{lemma:never-say-no}, if the algorithm does not conclude at the initialization phase that the input instance is a \emph{no}-instance, the the instance is a \emph{yes}-instance. 
During any iteration $j\geq 1$ of the algorithm, we pick an arbitrary variable $x_i$ and we assign it the truth value 1 (note that this is an arbitrary choice; we could equally choose to assign to $x_i$ the value~0). 
Once we have set $x_i=1$, we call algorithm \BooleanForcing{} (see \Cref{phi-2sat-nae-forcing-alg}) as a subroutine to check which
implications this value of $x_i$ has on the current formulas $\phi_{\text{3NAE}}^{(j-1)}$ and $\phi_{\text{2SAT}}^{(j-1)}$ and which other truth values of variables are forced. 
The correctness of \BooleanForcing{} can be easily verified by checking all subcases of \BooleanForcing{}.
During such a call of \BooleanForcing{} (i.e.~during an iteration $j\geq 1$ in the main phase of the algorithm), 
we replace the current formulas by $\phi_{\text{3NAE}}^{(j)}$ and $\phi_{\text{2SAT}}^{(j)}$, respectively. 
Summarizing, in its initialization phase, the algorithm decides whether the input temporal graph can be transitively oriented (i.e.~solves the decision version of the problem), while in its main phase it computes a temporally transitive orientation.

\begin{algorithm}[t!]
	\caption{\InitalForcing{}}
	\label{initial-forcing-alg}
	\begin{algorithmic}[1]
		\footnotesize
		\REQUIRE{A 2-SAT formula $\phi_{\text{2SAT}}$ and a 3-NAE formula $\phi_{\text{3NAE}}$\vspace{0,1cm}}
		\ENSURE{A 2-SAT formula $\phi_{\text{2SAT}}^{(0)}$ and a 3-NAE formula $\phi_{\text{3NAE}}^{(0)}$ 
			such that $\phi_{\text{2SAT}}^{(0)} \wedge \phi_{\text{3NAE}}^{(0)}$ is satisfiable 
			if and only if $\phi_{\text{2SAT}} \wedge \phi_{\text{3NAE}}$ is satisfiable, 
			or the announcement that $\phi_{\text{2SAT}} \wedge \phi_{\text{3NAE}}$ is not satisfiable.}
		
		\medskip
		
		\STATE{$\phi_{\text{3NAE}}^{(0)} \leftarrow \phi_{\text{3NAE}}$; \ \ $\phi_{\text{2SAT}}^{(0)} \leftarrow \phi_{\text{2SAT}}$} \COMMENT{initialization}
		
		\vspace{0,2cm}
		
		\FOR{every variable $x_i$ appearing in $\phi_{\text{3NAE}}^{(0)} \wedge \phi_{\text{2SAT}}^{(0)}$} \label{initial-first-forcing-line}
		
		\vspace{0,2cm}
		
		\IF{$\BooleanForcing{}\left(\phi_{\text{3NAE}}^{(0)},\phi_{\text{2SAT}}^{(0)}, x_i, 1\right) = \text{``NO''}$}
		
		\vspace{0,2cm}
		
		\IF{$\BooleanForcing{}\left(\phi_{\text{3NAE}}^{(0)},\phi_{\text{2SAT}}^{(0)}, x_i, 0\right) = \text{``NO''}$}
		
		\vspace{0,2cm}
		
		\RETURN{``NO''} \COMMENT{both $x_i=1$ and $x_i=0$ invalidate the formulas} \label{initial-return-no-line}
		
		\vspace{0,2cm}
		
		\ELSE 
		
		\vspace{0,2cm}
		
		\STATE{$\left(\phi_{\text{3NAE}}^{(0)},\phi_{\text{2SAT}}^{(0)}\right)
			\leftarrow
			\BooleanForcing{}\left(\phi_{\text{3NAE}}^{(0)},\phi_{\text{2SAT}}^{(0)},
			x_i, 0\right)$}\label{BF1}
		\ENDIF
		
		\vspace{0,2cm}
		
		\ELSE
		
		\vspace{0,2cm}
		
		\IF{$\BooleanForcing{}\left(\phi_{\text{3NAE}}^{(0)},\phi_{\text{2SAT}}^{(0)}, x_i, 0\right) = \text{``NO''}$}
		
		\vspace{0,2cm}
		
		\STATE{$\left(\phi_{\text{3NAE}}^{(0)},\phi_{\text{2SAT}}^{(0)}\right)
			\leftarrow
			\BooleanForcing{}\left(\phi_{\text{3NAE}}^{(0)},\phi_{\text{2SAT}}^{(0)},
			x_i, 1\right)$}\label{BF2}
		\ENDIF
		\ENDIF
		
		\ENDFOR
		
		\vspace{0,2cm}
		
		\FOR{every clause $\nae(x_{uv}, x_{vw}, x_{wu})$ of $\phi_{\text{3NAE}}^{(0)}$} \label{initial-second-forcing-line}
		\vspace{0,2cm}
		\FOR{every variable $x_{ab}$}
		\vspace{0,2cm}
		\IF[add $(x_{ab} \Rightarrow x_{uw})$ to $\phi_{\text{2SAT}}^{(0)}$]{$x_{ab} \overset{\ast }{\Rightarrow}_{\phi_{\text{2SAT}}^{(0)}} x_{uv}$ 
			and $x_{ab} \overset{\ast }{\Rightarrow}_{\phi_{\text{2SAT}}^{(0)}} x_{vw}$}
		\vspace{0,2cm}
		\STATE{$\phi_{\text{2SAT}}^{(0)} \leftarrow \phi_{\text{2SAT}}^{(0)} \wedge
			(x_{ba} \vee x_{uw})$}\label{initadd}
		\ENDIF
		\ENDFOR
		\ENDFOR

		\vspace{0,2cm}

		\STATE{Repeat Lines~\ref{initial-first-forcing-line} and~\ref{initial-second-forcing-line} until no changes occur on $\phi_{\text{2SAT}}^{(0)}$ and $\phi_{\text{3NAE}}^{(0)}$} 
		\label{initial-last-check-line}

		\vspace{0,2cm}
		
		\RETURN{$\left(\phi_{\text{3NAE}}^{(0)},\phi_{\text{2SAT}}^{(0)}\right)$}
		
	\end{algorithmic}
\end{algorithm}

\begin{algorithm}[t!]
	\caption{\BooleanForcing{}}
	\label{phi-2sat-nae-forcing-alg}
	\begin{algorithmic}[1]
		\footnotesize
		\REQUIRE{A 2-SAT formula $\phi_2$, a 3-NAE formula $\phi_3$, and a variable $x_i$ of $\phi_2 \wedge \phi_3$, and a truth value $\textsc{Value}\in \{0,1\}$}
		\ENSURE{A 2-SAT formula $\phi'_2$ and a 3-NAE formula $\phi'_3$, obtained from $\phi_2$ and $\phi_3$ by setting $x_i=\textsc{Value}$, or the announcement that $x_i=\textsc{Value}$ does not satisfy $\phi_2 \wedge \phi_3$.}
		
		\medskip
		
		\STATE{Let $a$ and $b$ be such that $x_{ab}=x_i$; \ \ $x_{ab} \leftarrow\textsc{Value}$}
		
		\medskip
		
		\STATE{$\phi'_2 \leftarrow \phi_2$; \ \ $\phi'_3 \leftarrow \phi_3$}
		
		\medskip
		
		\WHILE{$\phi'_2$ has a clause $(x_{uv} \vee x_{pq})$ and $x_{uv}=1$} \label{forcing-phi2-1-line}
		\STATE{Remove the clause $(x_{uv} \vee x_{pq})$ from $\phi'_2$}
		\ENDWHILE
		
		\medskip
		
		\WHILE{$\phi'_2$ has a clause $(x_{uv} \vee x_{pq})$ and $x_{uv}=0$} \label{forcing-phi2-2-line}
		\STATE{\textbf{if} $x_{pq}=0$ \textbf{then return} ``NO''}\label{forcing-phi2-2-b-line}
		\STATE{Remove the clause $(x_{uv} \vee x_{pq})$ from $\phi'_2$}\label{line:2satforce}
		\STATE{$x_{pq}	\leftarrow 1$; \ \ Repeat lines~\ref{forcing-phi2-1-line} and~\ref{forcing-phi2-2-line} until no changes occur in $\phi'_2$.}
		\COMMENT{Implement all changes to $\phi_2'$ that are implied by setting $x_{pq} = 1$}\label{line:clean1}
		
		\ENDWHILE

		\medskip
		
		\FOR[synchronous triangle on vertices $u,v,w$]{every clause $\nae(x_{uv}, x_{vw}, x_{wu})$ of $\phi'_3$} \label{forcing-phi3-line}
		
		\IF[add $(x_{uv} \Rightarrow x_{uw}) \wedge (x_{uw} \Rightarrow x_{vw})$ to $\phi'_2$]{$x_{uv} \overset{\ast }{\Rightarrow}_{\phi'_2} x_{vw}$} 
		\label{chain-length-phi3-line}
		\STATE{$\phi'_2 \leftarrow \phi'_2 \wedge (x_{vu} \vee x_{uw}) \wedge (x_{wu}
			\vee x_{vw})$}\label{line:add1}
		\STATE{Remove the clause $\nae(x_{uv},
			x_{vw}, x_{wu})$ from
			$\phi'_3$}\label{line:rmnae1}
		\ENDIF
		
		\vspace{0,2cm}
		
		\IF{$x_{uv}$ already got the value 1 or 0}
		\STATE{Remove the clause $\nae(x_{uv}, x_{vw}, x_{wu})$ from $\phi'_3$}\label{line:rmnae2}
		
		\vspace{0,2cm}
		
		\IF{$x_{vw}$ and $x_{wu}$ do not have yet a truth value}
		
		\vspace{0,2cm}
		
		\IF[add $(x_{vw}\Rightarrow x_{uw})$ to $\phi'_2$]{$x_{uv}=1$}
		\STATE{$\phi'_2 \leftarrow \phi'_2 \wedge (x_{wv} \vee x_{uw})$}\label{line:add2-1}
		\vspace{0,2cm}
		\ELSE[$x_{uv}=0$; \ in this case add $(x_{uw}\Rightarrow x_{vw})$ to $\phi'_2$]
		\STATE{$\phi'_2 \leftarrow \phi'_2 \wedge (x_{wu} \vee x_{vw})$}\label{line:add2-2}
		\ENDIF
		\ENDIF
		
		\vspace{0,2cm}
		
		\IF{$x_{vw}=x_{uv}$ and $x_{wu}$ does not have yet a truth value}
		\STATE{$x_{wu} \leftarrow 1-x_{uv}$; \ \ Repeat lines~\ref{forcing-phi2-1-line} and~\ref{forcing-phi2-2-line} until no changes occur in $\phi'_2$.}\label{line:add2-3}
		\COMMENT{Implement all changes to $\phi_2'$ that are implied by setting $x_{wu} = 1-x_{uv}$}\label{line:clean2}
		\ENDIF
		
		\vspace{0,2cm}
		
		\STATE{\textbf{if} $x_{vw}=x_{wu}=x_{uv}$ \textbf{then return} ``NO''} \label{line-all-equal-no}
		\ENDIF
		\ENDFOR
		
		\medskip
		
		\STATE{Repeat lines~\ref{forcing-phi2-1-line},~\ref{forcing-phi2-2-line}, and~\ref{forcing-phi3-line} until no changes occur in $\phi'_2$ and $\phi'_3$.}

		\medskip
		
		\RETURN{$(\phi'_2,\phi'_3)$}
		
	\end{algorithmic}
\end{algorithm}

\begin{algorithm}[t!]
	\caption{Temporal transitive orientation.}
	\label{temporal-orientation-alg}
	\begin{algorithmic}[1]
		\footnotesize
		\REQUIRE{A temporal graph $(G,\lambda)$, where $G=(V,E)$.}
		\ENSURE{A temporal transitive orientation $F$ of $(G,\lambda)$, or the announcement that $(G,\lambda)$ is temporally not transitively orientable.}
		
		\medskip
		
		\STATE{Execute Algorithm~\ref{edge-variables-alg} to build the $\Lambda$-implication classes $\{A_1, A_2, \ldots, A_s\}$ and the Boolean variables $\{x_{uv}, x_{vu}:\{u,v\}\in E\}$}
		\STATE{\textbf{if} Algorithm~\ref{edge-variables-alg} returns ``NO'' \textbf{then return} ``NO''}
		
		\vspace{0,2cm}
		
		\STATE{Build the \textsc{3NAE} formula $\phi_{\text{3NAE}}$ and the 2SAT formula $\phi_{\text{2SAT}}$}
		
		\vspace{0,2cm}
		
		\IF[Initialization phase]{$\InitalForcing{}\left(\phi_{\text{3NAE}},\phi_{\text{2SAT}}\right) \neq \text{``NO''}$}
		
		\vspace{0,2cm}
		
		\STATE{$\left(\phi_{\text{3NAE}}^{(0)},\phi_{\text{2SAT}}^{(0)}\right) \leftarrow \InitalForcing{}\left(\phi_{\text{3NAE}},\phi_{\text{2SAT}}\right)$} \label{force-initial-line}
		
		\vspace{0,2cm}
		
		\ELSE[$\phi_{\text{3NAE}} \wedge \phi_{\text{2SAT}}$ leads to a contradiction]
		\RETURN{``NO''} \label{initialize-return-no-line}
		\ENDIF
		
		\vspace{0,2cm}
		
		\STATE{$j\leftarrow 1$; \ \ $F\leftarrow \emptyset$} \COMMENT{Main phase}
		\WHILE[arbitrary choice of $x_i$]{a variable $x_{i}$ appearing in $\phi_{\text{3NAE}}^{(j-1)} \wedge \phi_{\text{2SAT}}^{(j-1)}$ did not yet receive a truth value}\label{line-star-1-main-alg}
		
		\vspace{0,2cm}

		\STATE{$\left(\phi_{\text{3NAE}}^{(j)},\phi_{\text{2SAT}}^{(j)}\right) \leftarrow \BooleanForcing{}\left(\phi_{\text{3NAE}}^{(j-1)},\phi_{\text{2SAT}}^{(j-1)}, x_i, 1\right)$} \label{force-1-line}
		
		\STATE{$j\leftarrow j+1$}\label{line-star-2-main-alg}
		\ENDWHILE
		
		\vspace{0,2cm}
		
		\FOR{$i=1$ to $s$}
		\STATE{\textbf{if} $x_i$ did not yet receive a truth value \textbf{then} $x_i \leftarrow 1$}\label{line-star-3-main-alg}
		\STATE{\textbf{if} $x_i=1$ \textbf{then} $F \leftarrow F \cup A_i$ \textbf{else} $F \leftarrow F \cup \overline{A_i}$}\label{line-star-4-main-alg}
		\ENDFOR
		
		\vspace{0,2cm}
		
		\RETURN{the temporally transitive orientation $F$ of $(G,\lambda)$}
	\end{algorithmic}
\end{algorithm}

\subparagraph{Correctness of the algorithm.}
We now formally prove that \Cref{temporal-orientation-alg} is correct. 
More specifically, we show that if \Cref{temporal-orientation-alg} gets a \yes-instance as input 
then it outputs a temporally transitive orientation, while if it gets a \no-instance as input then it outputs ``NO''.
The \emph{main result} of this section is~\Cref{thm-correctness-runningtime}, in which we prove that \TTO\ is correct and runs in polynomial time.

The next crucial observation follows immediately by the construction of $\phi_{\text{3NAE}}$ in~\Cref{boolean-transitivity-subsec}, and by the 
fact that, at every iteration $j$,~\Cref{temporal-orientation-alg} can only remove clauses from $\phi_{\text{3NAE}}^{(j-1)}$.

\begin{observation}\label{obs:naesat}
	When \BooleanForcing{} (\Cref{phi-2sat-nae-forcing-alg}) removes a clause from 
	$\phi_{\text{3NAE}}^{(j-1)}$, then this clause is satisfied by all satisfying
	assignments of $\phi_{\text{2SAT}}^{(j)}$.
\end{observation}

Next, we prove a crucial and involved technical lemma about the Boolean forcing steps of \Cref{temporal-orientation-alg}. 
This lemma will allow us to deduce that, during the \emph{main phase} of~\Cref{temporal-orientation-alg}, 
whenever a new clause is added to the 2SAT part of the formula, this happens only in lines~\ref{line:add2-1} and~\ref{line:add2-2} 
of \BooleanForcing{} (\Cref{phi-2sat-nae-forcing-alg}). 
That is, whenever a new clause is added to the 2SAT part of the formula in line~\ref{line:add1} of \Cref{phi-2sat-nae-forcing-alg}, 
this can only happen during the \emph{initialization phase} of~\Cref{temporal-orientation-alg}. 

\begin{lemma}\label{newsuperlemma}
	Consider an execution of \BooleanForcing{} (\Cref{phi-2sat-nae-forcing-alg}) which 
	is called in an iteration $j\geq 1$ (i.e.~in the main phase) of \Cref{temporal-orientation-alg}. 
	Then Lines~\ref{line:add1} and~\ref{line:rmnae1} of \BooleanForcing{} 
	are not executed.
\end{lemma}
\begin{proof}
	Assume for contradiction that Lines~\ref{line:add1} and~\ref{line:rmnae1} of
	\Cref{phi-2sat-nae-forcing-alg} are executed in iteration $j$ of
	\Cref{temporal-orientation-alg}.
	Let $j\geq 1$ be the first iteration where this happens. This means that there is a clause $\nae(x_{uv}, x_{vw}, x_{wu})$ of
	$\phi'_3$ and an implication $x_{uv} \overset{\ast }{\Rightarrow}_{\phi'_2}
	x_{vw}$ during the execution of \Cref{phi-2sat-nae-forcing-alg}.

	We first partition the implication chain $x_{uv} \overset{\ast }{\Rightarrow}_{\phi'_2}
	x_{vw}$ into ``old'' and ``new'' implications, where ``old'' implications are
	contained in $\phi_{\text{2SAT}}^{(0)}$ and all other implications (that were
	added in the previous iterations $1,2,\ldots,j-1$) are considered ``new''. 
	For simplicity of notation, we will refer to these ``new'' implications using the symbol ``$\Rightarrow_{\text{BF}}$''. 
	Recall here that, whenever $x_{ab}\Rightarrow_{\text{BF}} x_{cd}$, we have that $\lambda(a,b)=\lambda(c,d)$ by \BooleanForcing. 
	If there are several NAE clauses and 
	implication chains that fulfill the condition in Line~\ref{chain-length-phi3-line} of
	\Cref{phi-2sat-nae-forcing-alg}, we assume that $x_{uv} \overset{\ast }{\Rightarrow}_{\phi'_2}
	x_{vw}$ is one that contains a minimum number of ``new'' implications.
	Observe that, since we assume  $x_{uv} \overset{\ast }{\Rightarrow}_{\phi'_2}
	x_{vw}$ is a condition for the first execution of Lines~\ref{line:add1}
	and~\ref{line:rmnae1} of \Cref{phi-2sat-nae-forcing-alg}, it follows that all
	``new'' implications in $x_{uv} \overset{\ast }{\Rightarrow}_{\phi'_2}
	x_{vw}$ were added in Line~\ref{line:add2-1}
	or Line~\ref{line:add2-2} of \BooleanForcing\ (i.e.~\Cref{phi-2sat-nae-forcing-alg}) in previous
	iterations.

	Assume that $x_{uv} \overset{\ast }{\Rightarrow}_{\phi'_2} x_{vw}$ contains only ``old'' implications. 
	Then, this execution of Lines~\ref{line:add1} and~\ref{line:rmnae1} of \Cref{phi-2sat-nae-forcing-alg} happens during the initialization phase of \Cref{temporal-orientation-alg}. This is a contradiction to the assumption that this execution of Lines~\ref{line:add1} and~\ref{line:rmnae1} of \Cref{phi-2sat-nae-forcing-alg} happens at iteration $j\geq 1$ of \Cref{temporal-orientation-alg}.
	Therefore $x_{uv} \overset{\ast }{\Rightarrow}_{\phi'_2} x_{vw}$ contains at least one ``new'' implication. We now distinguish the cases where $x_{uv} \overset{\ast }{\Rightarrow}_{\phi'_2} x_{vw}$ contains ``old'' implications or not.

	\noindent\textbf{Case I: $x_{uv} \overset{\ast }{\Rightarrow}_{\phi'_2}
		x_{vw}$ contains at least one ``old'' implication.} We assume without loss of generality that $x_{uv} \overset{\ast }{\Rightarrow}_{\phi'_2}
	x_{vw}$ contains an ``old'' implication that is directly followed by a ``new''
	implication (if this is not the case, then we can consider the contraposition
	of the implication chain).
	
	Note that, since
	the ``new'' implication was added in Line~\ref{line:add2-1}
	or Line~\ref{line:add2-2} of \Cref{phi-2sat-nae-forcing-alg}, we can assume
	without loss of generality that the ``new'' implication is $x_{ab} {\Rightarrow}_{\text{BF}}
	x_{cb}$ and that $x_{ca}=1$ for some synchronous triangle on the vertices $a,b,c$ (this is the
	Line~\ref{line:add2-1} case, Line~\ref{line:add2-2} works analogously). 
	That is, we have $\nae(x_{ab},x_{bc},x_{ca})\in\phi_{\text{3NAE}}^{(0)}$. Let
	$x_{pq} {\Rightarrow}_{\phi_{\text{2SAT}}^{(0)}} x_{ab}$ be the ``old''
	implication. Then we have that $x_{pq} {\Rightarrow}_{\phi_{\text{2SAT}}^{(0)}}
	x_{ab} {\Rightarrow}_{\text{BF}} x_{cb}$ is contained in $x_{uv} \overset{\ast
	}{\Rightarrow}_{\phi'_2} x_{vw}$. Furthermore, by definition of
	$\phi_{\text{2SAT}}^{(0)}$, we have that $|\{p,q\}\cap\{a,b,c\}|\le 1$, hence
	we can apply \Cref{lemma-0-after-forcing} and obtain one of the following four scenarios:
	\begin{enumerate}
		\item $x_{pq} \overset{\ast }{\Rightarrow}_{\phi_{\text{2SAT}}^{(0)}} x_{cb}$:
		
		In this case we can replace $x_{pq}
		{\Rightarrow}_{\phi_{\text{2SAT}}^{(0)}} x_{ab} {\Rightarrow}_{\text{BF}}
		x_{cb}$ with $x_{pq} {\Rightarrow}_{\phi_{\text{2SAT}}^{(0)}} x_{cb}$ in the
		implication chain $x_{uv} \overset{\ast }{\Rightarrow}_{\phi_{\text{2SAT}}^{(j)}}
		x_{vw}$ to obtain an implication chain from $x_{uv}$ to $x_{vw}$ with strictly
		fewer ``new'' implications, a contradiction.
		\item $x_{pq} \overset{\ast }{\Rightarrow}_{\phi_{\text{2SAT}}^{(0)}} x_{bc}$:
		
		Now we have that $x_{pq} {\Rightarrow}_{\phi_{\text{2SAT}}^{(0)}}
		x_{ab}$ and $x_{pq} \overset{\ast }{\Rightarrow}_{\phi_{\text{2SAT}}^{(0)}} x_{bc}$. Then by
		definition of $\phi_{\text{2SAT}}^{(0)}$ we also have that $x_{pq}
		{\Rightarrow}_{\phi_{\text{2SAT}}^{(0)}} x_{ac}$. Recall that we have set $x_{ca}=1$, that is, $x_{ac}=0$. Therefore, by Lines~\ref{line:clean1} and~\ref{line:clean2} of \BooleanForcing,
		we have already set $x_{pq}=0$, and therefore the implication $x_{pq} {\Rightarrow}_{\phi_{\text{2SAT}}^{(0)}}
		x_{ab}$ does not exist in $\phi_2'$ anymore, which is a contradiction.
		\item $x_{pq} \overset{\ast }{\Rightarrow}_{\phi_{\text{2SAT}}^{(0)}} x_{ca}$:
		
		Now we have that $x_{pq} {\Rightarrow}_{\phi_{\text{2SAT}}^{(0)}}
		x_{ab}$ and $x_{pq} \overset{\ast }{\Rightarrow}_{\phi_{\text{2SAT}}^{(0)}} x_{ca}$. Then by
		definition of $\phi_{\text{2SAT}}^{(0)}$ we also have that $x_{pq}
		{\Rightarrow}_{\phi_{\text{2SAT}}^{(0)}} x_{cb}$. From here it is the same as Case 1.
		\item $x_{pq} \overset{\ast }{\Rightarrow}_{\phi_{\text{2SAT}}^{(0)}} x_{ac}$:   
		Same as Case 2.
	\end{enumerate}
	Hence, we have a contradiction in every case and can conclude that $x_{uv} \overset{\ast }{\Rightarrow}_{\phi'_2}
	x_{vw}$ does not contain any ``old'' implications.

	\medskip
	
	\noindent\textbf{Case II: $x_{uv} \overset{\ast }{\Rightarrow}_{\phi'_2} x_{vw}$ contains only ``new'' implications.} 
	To analyze this case, we first introduce the notion of \emph{alternating} and \emph{non-alternating} sequences of ``new'' implications, as follows. 
	Whenever the sequence $x_{uv} \overset{\ast }{\Rightarrow}_{\text{BF}} x_{vw}$ 
	contains at least one pair of consecutive direct implications of the form $x_{ab} {\Rightarrow}_{\text{BF}} x_{ac} {\Rightarrow}_{\text{BF}} x_{ad}$ (see~Figure~\ref{fig:alternating-non-alternating}(a)), 
	or of the form $x_{ba} {\Rightarrow}_{\text{BF}} x_{ca} {\Rightarrow}_{\text{BF}} x_{da}$ (see~Figure~\ref{fig:alternating-non-alternating}(b)), 
	we call $x_{uv} \overset{\ast }{\Rightarrow}_{\text{BF}} x_{vw}$ a \emph{non-alternating} sequence of implications; otherwise we call it \emph{alternating} 
	(see~Figure~\ref{fig:alternating-non-alternating}(c)). That is, if $x_{uv} \overset{\ast }{\Rightarrow}_{\text{BF}} x_{vw}$ is \emph{alternating}, then it either has the form
	\begin{equation}
		x_{uv} = x_{u_{1}v_{1}} {\Rightarrow}_{\text{BF}} 
		x_{u_{2}v_{1}} {\Rightarrow}_{\text{BF}} 
		x_{u_{2}v_{2}} {\Rightarrow}_{\text{BF}} 
		x_{u_{3}v_{2}} \overset{\ast }{\Rightarrow}_{\text{BF}} 
		x_{u_{j}v_{i}} = x_{vw},
		\label{alternating-sequence-1}
	\end{equation}
	or it has the form
	\begin{equation}
		x_{uv} = x_{u_{1}v_{1}} {\Rightarrow}_{\text{BF}} 
		x_{u_{1}v_{2}} {\Rightarrow}_{\text{BF}} 
		x_{u_{2}v_{2}} {\Rightarrow}_{\text{BF}} 
		x_{u_{2}v_{3}} \overset{\ast }{\Rightarrow}_{\text{BF}} 
		x_{u_{i}v_{j}} = x_{vw},
		\label{alternating-sequence-2}
	\end{equation}
	where either $j=i$ or $j=i+1$. 
	Figure~\ref{fig:alternating-non-alternating} illustrates some examples of alternating and non-alternating sequences of implications.

	\begin{figure}[t]
		\centering
		\begin{tikzpicture}[yscale=1.5]
			\begin{scope}[every node/.style={vertex}]
				\node[label=left:$a$] (u) at (0, 0) {};
				\node[label=right:$c$] (w) at (2, 0) {};
				\node[label=above:$b$] (v) at (1, 1) {};
				\node[label=below:$d$] (d) at (1, -1) {};
			\end{scope}
			\node at (1,-1.6) {(a)};
			\draw[edge,dash dot, bend left, green!60!black] (v) to node {} (d);
			\draw[edge] 
			(v)
			-- node[timelabel] {$t$} (u)
			(u)
			-- node[timelabel] {$t$} (w)
			(u)
			-- node[timelabel] {$t$} (d);	
			\draw[diredge,red,line width=1.5pt] 
			(v)
			-- node[timelabel] {$t$} (w);		
			\draw[diredge,red,line width=1.5pt] 
			(w)
			-- node[timelabel] {$t$} (d);

			\begin{scope}[every node/.style={vertex}]
				\node[label=left:$a$] (u2) at (4, 0) {};
				\node[label=right:$c$] (w2) at (6, 0) {};
				\node[label=above:$b$] (v2) at (5, 1) {};
				\node[label=below:$d$] (d2) at (5, -1) {};
			\end{scope}
			\node at (5,-1.6) {(b)};
			\draw[edge,dash dot, bend left, green!60!black] (v2) to node[] {}
			(d2);
			\draw[edge] 
			(v2)
			-- node[timelabel] {$t$} (u2)
			(u2)
			-- node[timelabel] {$t$} (w2)
			(u2)
			-- node[timelabel] {$t$} (d2);	
			\draw[diredge,red,line width=1.5pt] 
			(w2)
			-- node[timelabel] {$t$} (v2);		
			\draw[diredge,red,line width=1.5pt] 
			(d2)
			-- node[timelabel] {$t$} (w2);

			\begin{scope}[every node/.style={vertex}]
				\node[label=left:$a$] (u3) at (8, 0) {};
				\node[label=right:$c$] (w3) at (10, 0) {};
				\node[label=above:$b$] (v3) at (9, 1) {};
				\node[label=below:$d$] (d3) at (9, -1) {};
			\end{scope}
			\node at (9,-1.6) {(c)};
			\draw[edge,specialedge,dash dot, bend left, green!60!black] (v3) to node {}
			(d3);
			\draw[edge] 
			(v3)
			-- node[timelabel] {$t$} (u3)
			(w3)
			-- node[timelabel] {$t$} (d3)
			(w3)
			-- node[timelabel] {$t$} (u3);	
			\draw[diredge,red,line width=1.5pt] 
			(v3)
			-- node[timelabel] {$t$} (w3);		
			\draw[diredge,red,line width=1.5pt] 
			(d3)
			-- node[timelabel] {$t$} (u3);	
		\end{tikzpicture}
		\caption{Illustration of alternating and non-alternating sequences of implications that can occur at the two synchronous triangles on the vertices $\{a,b,c\}$ and $\{a,c,d\}$. The red directed edges illustrate variables that have already been set to $1$ by the algorithm \BooleanForcing. 
			Figure~(a): \emph{non-alternating} implications 
			$x_{ab} {\Rightarrow}_{\text{BF}} x_{ac} {\Rightarrow}_{\text{BF}} x_{ad}$, which occur whenever $x_{bc}=x_{cd}=1$ (red edges). 
			Figure~(b): \emph{non-alternating} implications 
			$x_{ba} {\Rightarrow}_{\text{BF}} x_{ca} {\Rightarrow}_{\text{BF}} x_{da}$, which occur whenever $x_{cb}=x_{dc}=1$ (red edges). 
			Figure~(c): \emph{alternating} implications 
			$x_{ab} {\Rightarrow}_{\text{BF}} x_{ac} {\Rightarrow}_{\text{BF}} x_{dc}$, which occur whenever $x_{bc}=x_{da}=1$ (red edges). 
			In all three figures, the green
			dash-dotted line indicates that edge $\{a,d\}$ may exist (with some time label) or may not exist.}
		\label{fig:alternating-non-alternating}
	\end{figure}

	We now distinguish the cases where $x_{uv} \overset{\ast }{\Rightarrow}_{\text{BF}} x_{vw}$ is an alternating or a non-alternating sequence of implications. 
	Note that, as all these are ``new'' implications, 
	all edges which are involved in $x_{uv} \overset{\ast }{\Rightarrow}_{\text{BF}} x_{vw}$ have the same label $t$. 
	That is, for every variable $x_{ab}$ that appears in the sequence $x_{uv} \overset{\ast }{\Rightarrow}_{\text{BF}} x_{vw}$ of implications, we have that $\lambda(a,b)=t$.
	
	\medskip
	
	\noindent\textbf{Case II-A: $x_{uv} \overset{\ast }{\Rightarrow}_{\text{BF}} x_{vw}$ is a non-alternating sequence of implications.} 
	Without loss of generality, let this sequence $x_{uv} \overset{\ast }{\Rightarrow}_{\text{BF}} x_{vw}$ contain the pair of consecutive direct implications 
	$x_{ab} {\Rightarrow}_{\text{BF}} x_{ac} {\Rightarrow}_{\text{BF}} x_{ad}$ 
	(the case where it contains the implications $x_{ba} {\Rightarrow}_{\text{BF}} x_{ca} {\Rightarrow}_{\text{BF}} x_{da}$ can be treated in exactly the same way). 
	
	Let $a,b,c$ be the vertices of the synchronous triangle that caused the implication $x_{ab} {\Rightarrow}_{\text{BF}} x_{ac}$, 
	and let $a',c',d$ be the vertices of the synchronous triangle that caused the implication $x_{ac} {\Rightarrow}_{\text{BF}} x_{ad}$, where $x_{ac}=x_{a'c'}$ and $x_{ad}=x_{a'd}$. Then, 
	\Cref{triangle-lemma} (the temporal triangle lemma) implies that
	the edges $\{a,d\}$ and $\{c,d\}$ exist in the graph and that $ad$ (resp.~$cd$) belongs to the same $\Lambda$-implication class with $a'd$ (resp.~$c'd$). 
	Therefore we can assume without loss of generality that $a=a'$ and $c=c'$.

	Then, since $x_{ab} {\Rightarrow}_{\text{BF}} x_{ac}$ and $x_{ac} {\Rightarrow}_{\text{BF}} x_{ad}$ are direct ``new'' implications, it follows that $x_{bc}=x_{cd}=1$ (as these implications have only been added by Lines~\ref{line:add2-1}
	or~\ref{line:add2-2} of \BooleanForcing).

	Let $\{b,d\}\notin E$ or $\lambda(b,d)<t$. Then 
	$\phi_{\text{2SAT}}^{(0)}$ by definition contains
	$x_{ab} {\Rightarrow}_{\phi_{\text{2SAT}}^{(0)}} x_{ad}$. 
	Thus, we can replace within $x_{uv} \overset{\ast }{\Rightarrow}_{\text{BF}} x_{vw}$ the two ``new'' implications 
	$x_{ab} {\Rightarrow}_{\text{BF}} x_{ac} {\Rightarrow}_{\text{BF}} x_{ad}$ 
	by the ``old'' implication $x_{ab} {\Rightarrow}_{\phi_{\text{2SAT}}^{(0)}} x_{ad}$, 
	thus resulting to a a sequence of implications from $x_{uv}$ to~$x_{vw}$ that has fewer ``new'' implications, a contradiction to our assumption. 
	
	Let $\lambda(b,d)>t$. Then $\phi_{\text{2SAT}}^{(0)}$ by definition contains
	$x_{cd} {\Rightarrow}_{\phi_{\text{2SAT}}^{(0)}} x_{bd}$ and 
	$x_{bd} {\Rightarrow}_{\phi_{\text{2SAT}}^{(0)}} x_{ba}$. 
	Thus, since $x_{cd}=1$, it follows \BooleanForcing\ sets $x_{ab}=0$, which is a contradiction to the assumption that the implication $x_{ab} {\Rightarrow}_{\text{BF}} x_{ac}$ belongs to $\phi_{2}'$.
	
	Let now $\lambda(b,d)=t$. Then $\nae(x_{bc},x_{cd},x_{db})\in\phi_{\text{3NAE}}^{(0)}$. 
	If $x_{bc}$ is set to $1$ before $x_{cd}$ is set to $1$ (i.e.~at an earlier iteration of \BooleanForcing), then \BooleanForcing\ adds (in Line~\ref{line:add2-1}) to $\phi_{2}'$ the direct implication $x_{cd} {\Rightarrow}_{\text{BF}} x_{bd}$. 
	In this case, when $x_{cd}$ is set to $1$ at a subsequent iteration of \BooleanForcing, $x_{bd}$ is also set to $1$. 
	Similarly, if $x_{cd}$ is set to $1$ before $x_{bc}$ is set to $1$, then \BooleanForcing\ adds to $\phi_{2}'$ the direct implication $x_{db} {\Rightarrow}_{\text{BF}} x_{cb}$, 
	which is equivalent to $x_{bc} {\Rightarrow}_{\text{BF}} x_{bd}$. In this case, when $x_{bd}$ is set to $1$ at a subsequent iteration of \BooleanForcing, $x_{bd}$ is also set to $1$. 
	Finally, if both $x_{bc}$ and $x_{cd}$ are set to $1$ at the same iteration, \BooleanForcing\ also sets $x_{bd}$ to $1$ in Line~\ref{line:add2-3}. 
	Summarizing, in any case \BooleanForcing\ always sets $x_{bd}=1$, and thus it also adds to $\phi_{2}'$ the implication $x_{ab} {\Rightarrow}_{\text{BF}} x_{ad}$. 
	Thus, we can replace within $x_{uv} \overset{\ast }{\Rightarrow}_{\text{BF}} x_{vw}$ the two implications 
	$x_{ab} {\Rightarrow}_{\text{BF}} x_{ac} {\Rightarrow}_{\text{BF}} x_{ad}$ 
	by the single implication $x_{ab} {\Rightarrow}_{\text{BF}} x_{ad}$, 
	thus resulting to a sequence of implications from $x_{uv}$ to~$x_{vw}$ that has fewer ``new'' implications, a contradiction to our assumption. 
	
	\medskip
	
	\noindent\textbf{Case II-B: $x_{uv} \overset{\ast }{\Rightarrow}_{\text{BF}} x_{vw}$ is an alternating sequence of implications.} 
	Let this sequence be of the form of (\ref{alternating-sequence-1}) where $j=i$ (the cases where $j=i+1$ or where the sequence is of the form of (\ref{alternating-sequence-2}) can be treated analogously), that is,
	\begin{equation}
		x_{uv} = x_{u_{1}v_{1}} {\Rightarrow}_{\text{BF}} 
		x_{u_{2}v_{1}} {\Rightarrow}_{\text{BF}} 
		x_{u_{2}v_{2}} {\Rightarrow}_{\text{BF}} 
		x_{u_{3}v_{2}} \overset{\ast }{\Rightarrow}_{\text{BF}} 
		x_{u_{i}v_{i}} = x_{vw},
		\label{alternating-sequence-3}
	\end{equation}
	
	Similarly to Case II-A, by iteratively applying \Cref{triangle-lemma} (the temporal triangle lemma), we may assume without loss of generality that all implications of (\ref{alternating-sequence-3}) are added to $\phi_2'$ by the synchronous triangles on the vertices 
	$\{u_1,v_1,u_2\}$, 
	$\{v_1,u_2,v_2\}$, 
	$\{u_2,v_2,u_3\}$, $\ldots$, 
	$\{v_{i-1},u_i,v_i\}$. 
	Furthermore, as all the implications of (\ref{alternating-sequence-3}) have been added to $\phi_2'$ by \BooleanForcing, it follows that $x_{u_{i}u_{i-1}}=x_{u_{i-1}u_{i-2}}=\ldots=x_{u_{2}u_{1}}=1$ and $x_{v_{1}v_{2}}=x_{v_{2}v_{3}}=\ldots=x_{v_{i-1}v_{i}}=1$. 
	
	Now, since $x_{u_{i}v_{i}} = x_{vw}$ (i.e.~$u_{i}v_{i}$ belongs to the same $\Lambda$-implication class with $vw$), it follows by 
	\Cref{triangle-lemma} (the temporal triangle lemma) that 
	the edge $\{u_{1},u_{i}\}$ exists in the graph and that $u_{1}u_{i}$ belongs to the same $\Lambda$-implication class with $u_{1}v = uv$ (and thus, in particular, $\lambda(u_{1},u_{i}) = \lambda(u_{1},v) = t$).

	Recall that $\lambda(u_{1},u_{2})= t$ and $x_{u_{2}u_{1}}=1$. 
	We now prove by induction that, for every $j=3,\ldots,i$, we have $\lambda(u_{1},u_{j})\geq t$ and $x_{u_{j}u_{1}}=1$. 
	
	For the induction basis, let $j=3$.
	If $\{u_{1},u_{3}\} \notin E$ or $\lambda(u_{1},u_{3})<t$, then $\phi_{\text{2SAT}}^{(0)}$ by definition contains
	$x_{u_{3}u_{2}} {\Rightarrow}_{\phi_{\text{2SAT}}^{(0)}} x_{u_{1}u_{2}}$. This is a contradiction, as $x_{u_{3}u_{2}}=x_{u_{2}u_{1}}=1$. 
	Therefore $\{u_{1},u_{3}\} \in E$ and $\lambda(u_{1},u_{3})\geq t$. 
	If $\lambda(u_{1},u_{3})= t$ then \BooleanForcing\ sets $x_{u_{3}u_{1}}=1$ (see Line~\ref{line:add2-3} of \BooleanForcing). 
	If $\lambda(u_{1}u_{3})> t$ then $\phi_{\text{2SAT}}^{(0)}$ contains 
	$x_{u_{2}u_{1}} {\Rightarrow}_{\phi_{\text{2SAT}}^{(0)}} x_{u_{3}u_{1}}$. Therefore, since $x_{u_{2}u_{1}}=1$, it follows in this case as well that \BooleanForcing\ sets $x_{u_{3}u_{1}}=1$. This completes the induction basis.
	
	For the induction step, let $4\leq j \leq i$, and assume by the induction hypothesis that 
	$t'=\lambda(u_{1},u_{j-1})\geq t$ and $x_{u_{j-1}u_{1}}=1$. Recall that $\lambda(u_{j-1},u_{j})= t$ and $x_{u_{j}u_{j-1}}=1$. 
	If $\{u_{1},u_{j}\} \notin E$ or $\lambda(u_{1},u_{j})<t'$, then $\phi_{\text{2SAT}}^{(0)}$ by definition contains
	$x_{u_{j}u_{j-1}} {\Rightarrow}_{\phi_{\text{2SAT}}^{(0)}} x_{u_{1}u_{j-1}}$. This is a contradiction, as $x_{u_{j}u_{j-1}}=x_{u_{j-1}u_{1}}=1$. 
	Therefore $\{u_{1},u_{j}\} \in E$ and $\lambda(u_{1},u_{j})\geq t'$. 
	If $\lambda(u_{1},u_{j})= t'=t$ then \BooleanForcing\ sets $x_{u_{j}u_{1}}=1$ (see Line~\ref{line:add2-3} of \BooleanForcing). 
	If $\lambda(u_{1},u_{j})= t'>t$ 
	or if $\lambda(u_{1},u_{j})> t'\geq t$ 
	then $\phi_{\text{2SAT}}^{(0)}$ contains 
	$x_{u_{j-1}u_{1}} {\Rightarrow}_{\phi_{\text{2SAT}}^{(0)}} x_{u_{j}u_{1}}$. Therefore, since $x_{u_{j-1}u_{1}}=1$, it follows in this case as well that \BooleanForcing\ sets $x_{u_{j}u_{1}}=1$. 
	This completes the induction step.

	Therefore, in particular, for $j=i$ we have that 
	$x_{u_{i}u_{1}}=1$. Thus, since $u_{1}u_{i}$ belongs to the same $\Lambda$-implication class with $u_{1}v = uv$, it follows that $x_{uv}=1$, which is a contradiction to the assumption that $x_{uv} \overset{\ast }{\Rightarrow}_{\text{BF}} x_{vw}$ is contained in $\phi_{2}'$. 
	This completes the proof.
\end{proof}

In the next lemma we prove that, if \Cref{temporal-orientation-alg} does not return ``NO'' after the initialization phase (in Line~\ref{initialize-return-no-line}), then the 2SAT formula $\phi_{\text{2SAT}}^{(0)}$ is satisfiable. 
Furthermore, as we prove in~\Cref{super2satlemma}, in this case also the 2SAT formulas $\phi_{\text{2SAT}}^{(j)}$ are satisfiable for every~$j\geq 1$.

\begin{lemma}\label{phi-2-0-satisfiable}
	Assume that \Cref{temporal-orientation-alg} does not return ``NO'' in the initialization phase (i.e.~in Line~\ref{initialize-return-no-line}). 
	Then there exists no variable $x_{uv}$ in $\phi_{\text{2SAT}}^{(0)}$ such that $x_{uv} \overset{\ast }{\Rightarrow}_{\phi_{\text{2SAT}}^{(0)}} x_{vu}$, 
	and thus $\phi_{\text{2SAT}}^{(0)}$ is satisfiable.
\end{lemma}

\begin{proof}
	Since \Cref{temporal-orientation-alg} does not return ``NO'' in Line~\ref{initialize-return-no-line}, 
	it follows that Line~\ref{initial-return-no-line} of~\InitalForcing{} (\Cref{initial-forcing-alg}) is not executed, when~\InitalForcing{} is called by \Cref{temporal-orientation-alg}. 
	Furthermore, before~\InitalForcing{} finishes, it checks in Line~\ref{initial-last-check-line} whether any of the formulas $\phi_{\text{3NAE}}^{(0)}$ or $\phi_{\text{2SAT}}^{(0)}$ 
	have been changed since the last iteration of Lines~\ref{initial-first-forcing-line} and~\ref{initial-second-forcing-line}. 
	
	Let $x_{uv}$ be an arbitrary variable in $\phi_{\text{2SAT}}^{(0)}$, i.e.~in the 2SAT part of the formula after~\InitalForcing{} has finished. 
	Since $x_{uv}$ did not get a Boolean value during the execution of~\InitalForcing{}, it follows that, when~\InitalForcing{} stops, 
	setting $x_{uv}$ to 1 (resp.~to 0) does not cause a contradiction. Indeed, otherwise~\InitalForcing{} would set $x_{uv}$ equal to 0 (resp.~1). 
	Therefore, once~\InitalForcing{} finishes, there cannot exist any variable $x_{uv}$ in $\phi_{\text{2SAT}}^{(0)}$ such that $x_{uv} \overset{\ast }{\Rightarrow}_{\phi_{\text{2SAT}}^{(0)}} x_{vu}$ (as otherwise~\InitalForcing{} would set $x_{uv}=0$). 
	This completes the lemma.
\end{proof}

\begin{lemma}\label{super2satlemma}
	Assume that \Cref{temporal-orientation-alg} does not return ``NO'' in the initialization phase (i.e.~in Line~\ref{initialize-return-no-line}). 
	Then, at any point during an arbitrary call of \BooleanForcing\ at the iteration $j\ge 1$ of \Cref{temporal-orientation-alg}, 
	there does not exist any variable $x_{uv}$ in $\phi_2'$ such that $x_{uv} \overset{\ast }{\Rightarrow}_{\phi_2'} x_{vu}$, and thus $\phi_2'$ is satisfiable.
\end{lemma}

\begin{proof}
	Let $j=1$. At the very beginning of iteration $j=1$ (where no changes have been made to $\phi_2'$ by \BooleanForcing) it follows immediately by~\Cref{phi-2-0-satisfiable} that there is no variable $x_{uv}$ in $\phi_2'=\phi_{\text{2SAT}}^{(0)}$ such that $x_{uv} \overset{\ast }{\Rightarrow}_{\phi_2'} x_{vu}$.

	Now, let $j\ge 1$. 
	Assume that, at the very beginning of iteration $j$, there is no variable $x_{uv}$ in $\phi_2'=\phi_{\text{2SAT}}^{(j-1)}$ such that $x_{uv} \overset{\ast }{\Rightarrow}_{\phi_2'} x_{vu}$.
	For the sake of contradiction, assume that, at some point during the execution of this call of \BooleanForcing, 
	there exists a variable $x_{uv}$ in $\phi_2'$ such that $x_{uv} \overset{\ast }{\Rightarrow}_{\phi_2'} x_{vu}$. 
	Assume that this is the earliest point during the execution of this call of \BooleanForcing\ where such an implication chain $x_{uv} \overset{\ast }{\Rightarrow}_{\phi_2'} x_{vu}$ exists in $\phi_2'$. Furthermore, among all implication chains $x_{uv} \overset{\ast }{\Rightarrow}_{\phi_2'} x_{vu}$, consider one that has the smallest number of ``new'' implications.

	Similarly to the proof of~\Cref{newsuperlemma}, we partition the implication chain $x_{uv} \overset{\ast }{\Rightarrow}_{\phi_2'} x_{vu}$ into ``old'' implications 
	(which are also present in $\phi_{\text{2SAT}}^{(0)}$) and ``new'' implications 
	(which were added by \BooleanForcing{} during some iteration $j'\in\{1,2,\ldots,j\}$). 
	Similarly to~\Cref{newsuperlemma}, for simplicity of notation we refer to these ``new'' implications using the symbol ``$\Rightarrow_{\text{BF}}$''. 
	Recall that, whenever $x_{ab}\Rightarrow_{\text{BF}} x_{cd}$, we have that $\lambda(a,b)=\lambda(c,d)$ by \BooleanForcing.  
	Note that $x_{uv} \overset{\ast }{\Rightarrow}_{\phi_2'} x_{vu}$ contains at least one ``new'' implication, 
	as otherwise $x_{uv} \overset{\ast }{\Rightarrow}_{\phi_{\text{2SAT}}^{(0)}} x_{vu}$ which is a contradiction by~\Cref{phi-2-0-satisfiable}.

	\medskip
	
	\noindent\textbf{Case I: $x_{uv} \overset{\ast }{\Rightarrow}_{\phi'_2}
		x_{vu}$ contains at least one ``old'' implication.} 
	Consider an ``old'' implication $x_{pq} {\Rightarrow}_{\phi_{\text{2SAT}}^{(0)}} x_{ab}$ within the implication chain $x_{uv} \overset{\ast }{\Rightarrow}_{\phi_2'} x_{vu}$, 
	which is followed by a ``new'' implication (if there is no such pair of consecutive implications, then
	there is one in the contraposition of the implication chain).  By
	\Cref{newsuperlemma}, the ``new'' implication was added by \BooleanForcing\ in Line~\ref{line:add2-1} or
	Line~\ref{line:add2-2}.
	We can assume
	without loss of generality that the ``new'' implication is $x_{ab} {\Rightarrow}_{\text{BF}}
	x_{cb}$ and that $x_{ca}=1$ for some synchronous triangle on the vertices $a,b,c$ (this is the case of Line~\ref{line:add2-1}, Line~\ref{line:add2-2} works analogously). 
	That is, we have
	$\nae(x_{ab},x_{bc},x_{ca})\in\phi_{\text{3NAE}}^{(0)}$. 
	Summarizing, we have that $x_{pq} {\Rightarrow}_{\phi_{\text{2SAT}}^{(0)}}
	x_{ab} {\Rightarrow}_{\text{BF}} x_{cb}$ is contained in $x_{uv} \overset{\ast }{\Rightarrow}_{\phi_2'} x_{vu}$. 
	Furthermore, by construction of
	$\phi_{\text{2SAT}}^{(0)}$, we have that $|\{p,q\}\cap\{a,b,c\}|\le 1$, hence
	we can apply \Cref{lemma-0-after-forcing} and obtain one of the following four scenarios:
	
	\begin{enumerate}
		\item $x_{pq} \overset{\ast }{\Rightarrow}_{\phi_{\text{2SAT}}^{(0)}} x_{cb}$:
		
		In this case we can replace $x_{pq}
		{\Rightarrow}_{\phi_{\text{2SAT}}^{(0)}} x_{ab} {\Rightarrow}_{\text{BF}}
		x_{cb}$ with $x_{pq} {\Rightarrow}_{\phi_{\text{2SAT}}^{(0)}} x_{cb}$ in the
		implication chain $x_{uv} \overset{\ast }{\Rightarrow}_{\phi_{2}'}
		x_{vu}$ to obtain an implication chain from $x_{uv}$ to $x_{vu}$ in $\phi_{2}'$ with strictly
		fewer ``new'' implications, a contradiction.
		
		\item $x_{pq} \overset{\ast }{\Rightarrow}_{\phi_{\text{2SAT}}^{(0)}} x_{bc}$:
		
		Now we have that $x_{pq} {\Rightarrow}_{\phi_{\text{2SAT}}^{(0)}}
		x_{ab}$ and $x_{pq} \overset{\ast }{\Rightarrow}_{\phi_{\text{2SAT}}^{(0)}} x_{bc}$. Then by
		definition of $\phi_{\text{2SAT}}^{(0)}$ we also have that $x_{pq}
		{\Rightarrow}_{\phi_{\text{2SAT}}^{(0)}} x_{ac}$. Recall that we have set $x_{ca}=1$ (which triggered the addition of the implication $x_{ab} {\Rightarrow}_{\text{BF}} x_{cb}$), that is, $x_{ac}=0$. Therefore, by Lines~\ref{line:clean1} and~\ref{line:clean2} of \BooleanForcing,
		we have already set $x_{qp}=1$, i.e.~$x_{pq}=0$, and therefore the implication $x_{pq} {\Rightarrow}_{\phi_{\text{2SAT}}^{(0)}}
		x_{ab}$ does not exist in $\phi_2'$ anymore, which is a contradiction.
		\item $x_{pq} \overset{\ast }{\Rightarrow}_{\phi_{\text{2SAT}}^{(0)}} x_{ca}$:
		
		Now we have that $x_{pq} {\Rightarrow}_{\phi_{\text{2SAT}}^{(0)}}
		x_{ab}$ and $x_{pq} \overset{\ast }{\Rightarrow}_{\phi_{\text{2SAT}}^{(0)}} x_{ca}$. Then by
		definition of $\phi_{\text{2SAT}}^{(0)}$ we also have that $x_{pq}
		{\Rightarrow}_{\phi_{\text{2SAT}}^{(0)}} x_{cb}$. From here it is the same as Case 1.
		\item $x_{pq} \overset{\ast }{\Rightarrow}_{\phi_{\text{2SAT}}^{(0)}} x_{ac}$:   
		Same as Case 2.
	\end{enumerate}

	\medskip

	\noindent\textbf{Case II: $x_{uv} \overset{\ast }{\Rightarrow}_{\phi'_2} x_{vu}$ contains only ``new'' implications.} 
	Similarly to Case II of the proof of~\Cref{newsuperlemma}, we use the notion of \emph{alternating} and \emph{non-alternating} sequences of ``new'' implications. 
	In a nutshell, 
	whenever the sequence $x_{uv} \overset{\ast }{\Rightarrow}_{\text{BF}} x_{vu}$ 
	contains at least one pair of consecutive direct implications of the form $x_{ab} {\Rightarrow}_{\text{BF}} x_{ac} {\Rightarrow}_{\text{BF}} x_{ad}$, 
	or of the form $x_{ba} {\Rightarrow}_{\text{BF}} x_{ca} {\Rightarrow}_{\text{BF}} x_{da}$, 
	the sequence of implications $x_{uv} \overset{\ast }{\Rightarrow}_{\text{BF}} x_{vu}$ is called \emph{non-alternating}; otherwise it is called \emph{alternating}. 
	That is, if $x_{uv} \overset{\ast }{\Rightarrow}_{\text{BF}} x_{vu}$ is \emph{alternating}, then it either has the form
	\begin{equation}
		x_{uv} = x_{u_{1}v_{1}} {\Rightarrow}_{\text{BF}} 
		x_{u_{2}v_{1}} {\Rightarrow}_{\text{BF}} 
		x_{u_{2}v_{2}} \overset{\ast }{\Rightarrow}_{\text{BF}} 
		x_{vu} = x_{v_{1}u_{1}},
		\label{alternating-sequence-UV-1}
	\end{equation}
	or it has the form
	\begin{equation}
		x_{uv} = x_{u_{1}v_{1}} {\Rightarrow}_{\text{BF}} 
		x_{u_{1}v_{2}} {\Rightarrow}_{\text{BF}} 
		x_{u_{2}v_{2}} \overset{\ast }{\Rightarrow}_{\text{BF}} 
		x_{vu} = x_{v_{1}u_{1}}.
		\label{alternating-sequence-UV-2}
	\end{equation}

	We now distinguish the cases where $x_{uv} \overset{\ast }{\Rightarrow}_{\text{BF}} x_{vw}$ is an alternating or a non-alternating sequence of implications. 
	Note that, as all these are ``new'' implications, 
	all edges which are involved in $x_{uv} \overset{\ast }{\Rightarrow}_{\text{BF}} x_{vu}$ have the same label $t$. 
	That is, for every variable $x_{ab}$ that appears in the sequence $x_{uv} \overset{\ast }{\Rightarrow}_{\text{BF}} x_{vu}$ of implications, we have that $\lambda(a,b)=t$.

	\medskip
	
	\noindent\textbf{Case II-A: $x_{uv} \overset{\ast }{\Rightarrow}_{\text{BF}} x_{vu}$ is a non-alternating sequence of implications.} 
	This case can be treated in exactly the same way as Case II-A in the proof of~\Cref{newsuperlemma}, where we just replace ``$x_{vw}$'' with ``$x_{vu}$''. The main idea of the proof is that, if $x_{uv} \overset{\ast }{\Rightarrow}_{\text{BF}} x_{vu}$ is non-alternating, then there exists an implication sequence that contains fewer ``new'' implications, which is a contradiction.
	
	\medskip
	
	\noindent\textbf{Case II-B: $x_{uv} \overset{\ast }{\Rightarrow}_{\text{BF}} x_{vu}$ is an alternating sequence of implications.} 
	First let this sequence be of the form of (\ref{alternating-sequence-UV-1}). 
	As the implication $x_{u_{1}v_{1}} {\Rightarrow}_{\text{BF}} 
	x_{u_{2}v_{1}}$ of (\ref{alternating-sequence-UV-1}) has been added to $\phi_2'$ by \BooleanForcing, it follows that 
	$x_{u_{2}u_{1}}=1$ and $\lambda(u_{1},u_{2})=t$. 
	That is, there is a synchronous triangle on the vertices $\{u_{1},v_{1},u_{2}\}$, and we have the implication sequence 
	$x_{u_{2}v_{1}} \overset{\ast }{\Rightarrow}_{\text{BF}} x_{v_{1}u_{1}}$. Therefore, Lines~\ref{line:add1} and~\ref{line:rmnae1} of \BooleanForcing{} are executed during some iteration $j\geq 1$ (i.e.~in the main phase) of \Cref{temporal-orientation-alg}, which is a contradiction by~\Cref{newsuperlemma}.

	Now let the sequence $x_{uv} \overset{\ast }{\Rightarrow}_{\text{BF}} x_{vu}$ be of the form of (\ref{alternating-sequence-UV-2}). 
	Similarly to Case II-A in the proof of~\Cref{newsuperlemma}, by iteratively applying \Cref{triangle-lemma} (the temporal triangle lemma), we may assume without loss of generality that the first two implications of (\ref{alternating-sequence-UV-2}) are added to $\phi_2'$ by the synchronous triangles on the vertices 
	$\{u_1,v_1,v_2\}$ and 
	$\{u_1,v_2,u_2\}$. 
	Furthermore, as the implications $x_{u_{1}v_{1}} {\Rightarrow}_{\text{BF}} 
	x_{u_{1}v_{2}}$ 
	and 
	$x_{u_{1}v_{2}} {\Rightarrow}_{\text{BF}} 
	x_{u_{2}v_{2}}$
	of (\ref{alternating-sequence-UV-2}) have been added to $\phi_2'$ by \BooleanForcing, it follows that 
	$x_{u_{2}u_{1}}=1$ and $x_{v_{1}v_{2}}=1$. 
	
	Assume that $\{u_{2},v_{1}\} \notin E$ or $\lambda(u_{2},v_{1}) < t$. Then $\phi_{\text{2SAT}}^{(0)}$ by definition contains
	$x_{u_{2}u_{1}} {\Rightarrow}_{\phi_{\text{2SAT}}^{(0)}} x_{v_{1}u_{1}}$. 
	Thus, since $x_{u_{2}u_{1}}=1$, it follows \BooleanForcing\ sets $x_{v_{1}u_{1}}=1$, which is a contradiction to the assumption that the implication $x_{u_{1}v_{1}} {\Rightarrow}_{\text{BF}} x_{u_{1}v_{2}}$ belongs to~$\phi_{2}'$. 
	
	Assume that $\lambda(u_{2},v_{1}) > t$. 
	Then $\phi_{\text{2SAT}}^{(0)}$ by definition contains
	$x_{u_{1}v_{1}} {\Rightarrow}_{\phi_{\text{2SAT}}^{(0)}} x_{u_{2}v_{1}}$ and 
	$x_{u_{2}v_{1}} {\Rightarrow}_{\phi_{\text{2SAT}}^{(0)}} x_{u_{2}v_{2}}$, while both these implications are ``old'' (as these are implications that involve non-synchronous edges). 
	Therefore there exists the sequence of implications 
	$x_{uv} = x_{u_{1}v_{1}} {\Rightarrow}_{\phi_\text{2SAT}^{(0)}} 
	x_{u_{2}v_{1}} {\Rightarrow}_{\phi_\text{2SAT}^{(0)}}
	x_{u_{2}v_{2}} \overset{\ast }{\Rightarrow}_{\text{BF}} 
	x_{vu} = x_{v_{1}u_{1}}$, which contains fewer ``new'' implications, a contradiction.
	
	Finally assume that $\lambda(u_{2},v_{1}) = t$. 
	Then, since $x_{u_{2}u_{1}}=1$ and $x_{v_{1}v_{2}}=1$ and the triangles on the vertices $\{u_1, v_1, u_2\}$ and $\{v_1, u_2, v_2\}$ are synchronous, 
	it follows that we also have the implications $x_{u_{1}v_{1}} {\Rightarrow}_{\text{BF}} x_{u_{2}v_{1}} {\Rightarrow}_{\text{BF}} x_{u_{2}v_{2}}$. 
	Therefore, additionally to (\ref{alternating-sequence-UV-2}), also (\ref{alternating-sequence-UV-1}) is a sequence of (equally many) ``new'' implications from $x_{uv}$ to $x_{vu}$, and thus a contradiction is implied as explained above.
	This completes the proof.
\end{proof}

In the next lemma we prove a strong structural property of our algorithm. Given this property, we will be able to show that, 
if the algorithm does not return ``NO'' during the initialization phase, then the instance is actually a \emph{yes}-instance. That is, 
during all the subsequent iterations $j\geq 1$, the algorithm only constructs a valid transitive orientation, while the decision variant of the problem 
can simply be answered at the end of the initialization phase.

\begin{lemma}\label{lemma:never-say-no}
	For every iteration $j\geq 1$ of~\Cref{temporal-orientation-alg}, every call of \BooleanForcing\ (in Line~\ref{force-1-line} of~\Cref{temporal-orientation-alg}) does \emph{not return ``NO''}.
\end{lemma}

\begin{proof}
	\BooleanForcing\ can possibly return ``NO'' either in Lines~\ref{forcing-phi2-2-line}-\ref{line:2satforce} 
	or in Line~\ref{line-all-equal-no}. 
	First note that, for every call of \BooleanForcing\ in~\Cref{temporal-orientation-alg}, there is a variable $x_{ab}$ which is set to 1 (in Line~\ref{force-1-line} of~\Cref{temporal-orientation-alg}).

	Assume that \BooleanForcing\ returns ``NO'' in Lines~\ref{forcing-phi2-2-line}-\ref{line:2satforce}. 
	Let $(x_{uv} \vee x_{pq})$ be the clause of $\phi'_2$ which is considered in Line~\ref{forcing-phi2-2-line} of \BooleanForcing.
	As all forcings during the execution of \BooleanForcing\ are made by assuming that a specific variable $x_{ab}=1$, we have the following: 
	\begin{itemize}
		\item $x_{ab} \overset{\ast }{\Rightarrow}_{\phi_{2}'} x_{vu}$ \ (as $x_{uv}=0$ in Line~\ref{forcing-phi2-2-line} of \BooleanForcing)
		\item $x_{ab} \overset{\ast }{\Rightarrow}_{\phi_{2}'} x_{qp}$ \ (as $x_{pq}=0$ in Line~\ref{forcing-phi2-2-b-line} of \BooleanForcing)
		\item $x_{vu} {\Rightarrow}_{\phi_{2}'} x_{pq}$ \ (due to the existence of the clause $(x_{uv} \vee x_{pq})$ in $\phi'_2$)
	\end{itemize}
	From the above implications we have that 
	\[
	x_{ab} \overset{\ast }{\Rightarrow}_{\phi_{2}'} x_{vu}
	{\Rightarrow}_{\phi_{2}'} x_{pq}
	\overset{\ast }{\Rightarrow}_{\phi_{2}'} x_{ba},
	\]
	which is a contradiction by~\Cref{super2satlemma}.

	Assume that \BooleanForcing\ returns ``NO'' in Line~\ref{line-all-equal-no}. 
	Then, there exists a clause \textsc{NAE}$(x_{uv}, x_{vw}, x_{wu})$ in $\phi_{\text{3NAE}}^{(0)}$ such that, during the execution of iteration $j$ of~\Cref{temporal-orientation-alg}, we are forced to set each of the variables $x_{uv}, x_{vw}, x_{wu}$ to the same truth value, say without loss of generality, $x_{uv}= x_{vw}= x_{wu}=1$. 
	Furthermore assume without loss of generality that, among these three variables, $x_{uv}$ is the first one that is set to 1 by \BooleanForcing.
	
	Let $x_{uv}$ be set to $1$ at an earlier iteration of \BooleanForcing\ than $x_{vw}$ and $x_{wu}$. 
	Then \BooleanForcing\ adds (in Line~\ref{line:add2-1}) to $\phi_{2}'$ the clause $(x_{wv} \vee x_{uw})$. 
	In this case, when $x_{vw}$ (resp.~$x_{wu}$) is set to $1$ at a subsequent iteration of \BooleanForcing, $x_{uw}$ (resp.~$x_{wv}$) is also set to $1$ (in Lines~\ref{forcing-phi2-2-line}-\ref{line:clean1} of \BooleanForcing). This is a contradiction to our assumption that \BooleanForcing\ sets $x_{uv}= x_{vw}= x_{wu}=1$.

	Let $x_{uv}$ be set to $1$ at the same iteration of \BooleanForcing\ as one of the variables $x_{vw}$ or $x_{wu}$; say, without loss of generality, with $x_{vw}$. 
	Then, as $x_{uv}=x_{vw}=1$, \BooleanForcing\ sets $x_{wu}=0$ (in Line~\ref{line:add2-3}). This is again a contradiction to our assumption that \BooleanForcing\ sets $x_{uv}= x_{vw}= x_{wu}=1$.
\end{proof}

We are now ready to combine all the above technical results to obtain the main result of this section in the next theorem, regarding the correctness and the running time of~\Cref{temporal-orientation-alg}.

\begin{theorem}\label{thm-correctness-runningtime}
	\Cref{temporal-orientation-alg} correctly solves \TTOs\ in polynomial time.
\end{theorem}
\begin{proof}
	First assume that~\Cref{temporal-orientation-alg} returns ``NO''. Due to~\Cref{lemma:never-say-no}, this can only happen in Line~\ref{initialize-return-no-line} of~\Cref{temporal-orientation-alg}, which means that 
	\InitalForcing\ has found a contradiction in $\phi_{\text{3NAE}}^{(0)}\wedge\phi_{\text{2SAT}}^{(0)}$. 
	Thus, clearly $\phi_{\text{3NAE}}^{(0)}\wedge\phi_{\text{2SAT}}^{(0)}$ is not satisfiable, i.e.~$(G,\lambda)$ is not transitively orientable.

	Now assume that~\Cref{temporal-orientation-alg} does not return ``NO''. Than, during its main phase, \Cref{temporal-orientation-alg} repeatedly calls \BooleanForcing, and it repeatedly removes clauses from $\phi_{\text{3NAE}}^{(0)}$, until they are all removed. 
	By Observation~\ref{obs:naesat}, whenever such a clause is removed during the iteration $j\geq 1$ of~\Cref{temporal-orientation-alg}, this clause is satisfied by all satisfying
	assignments of $\phi_{\text{2SAT}}^{(j)}$, 
	and thus it remains satisfied by the truth assignment that is eventually computed by~\Cref{temporal-orientation-alg}. 
	Let $j_0\geq 1$ be the iteration of~\Cref{temporal-orientation-alg}, 
	after which all clauses of $\phi_{\text{3NAE}}^{(0)}$ have been removed. 
	Then $\phi_{\text{2SAT}}^{(j_0)}$ is satisfiable by~\Cref{super2satlemma}, and the subsequent calls of \BooleanForcing\ (in Line~\ref{force-1-line} 
	of~\Cref{temporal-orientation-alg}) provide a satisfying assignment of $\phi_{\text{2SAT}}^{(j_0)}$. 
	
	Let $j_1\geq j_0$ be the last iteration of~\Cref{temporal-orientation-alg}; 
	note that $\phi_{\text{3NAE}}^{(j_1)} \land \phi_{\text{2SAT}}^{(j_1)}$ is empty. 
	Then, in Line~\ref{line-star-3-main-alg}, the algorithm gives the arbitrary truth value $x_i=1$ to every variable $x_i$ 
	which did not yet get any truth value yet. This is a correct decision as all these variables are not involved in any Boolean constraint 
	of $\phi_{\text{3NAE}}^{(j_1)} \land \phi_{\text{2SAT}}^{(j_1)}$ (which is empty). Finally, the algorithm orients in Line~\ref{line-star-4-main-alg}
	all edges of $G$ according to the corresponding truth assignment. The returned orientation $F$ of $(G,\lambda)$ is temporally transitive 
	as every variable was assigned a truth value according to the Boolean constraints throughout the execution of the algorithm.

	Summarizing, the truth assignment produced by~\Cref{temporal-orientation-alg} satisfies $\phi_{\text{3NAE}}^{(0)}\wedge\phi_{\text{2SAT}}^{(0)}$, and thus the algorithm returns a valid temporally transitive orientation of the input temporal graph $(G,\lambda)$. This completes the proof of correctness of~\Cref{temporal-orientation-alg}.
	
	Lastly, we prove that \Cref{temporal-orientation-alg} runs in polynomial time. 
	The $\Lambda$-implication classes of $(G,\lambda)$ can be clearly computed by \Cref{edge-variables-alg} in polynomial time. 
	\BooleanForcing\ iteratively adds and removes clauses from the 2SAT formula $\phi'_2$, while 
	it can only remove clauses from the 3NAE formula $\phi'_3$. 
	Whenever a clause is added to $\phi'_2$, a clause of $\phi'_3$ is removed. Therefore, as the initial 3NAE formula $\phi_3$ has at most polynomially-many 
	clauses, we can add clauses to $\phi'_2$ only polynomially-many times. 
	In all remaining steps,~\BooleanForcing\ just 
	checks clauses of $\phi'_2$ and $\phi'_3$ and it forces certain truth values to variables, and thus the total running time of~\BooleanForcing\ 
	is polynomial.
	Furthermore, in \InitalForcing\ and \Cref{temporal-orientation-alg} (the main algorithm) 
	the \BooleanForcing\ subroutine is only invoked at most four times for every variable in $\phi_{\text{3NAE}}^{(0)} \wedge \phi_{\text{2SAT}}^{(0)}$. Hence, we have an overall polynomial running time.
\end{proof}

\section{Temporal Transitive Completion}\label{completion-sec}

We now study the computational complexity of \TTC. 
In the static case, the so-called \emph{minimum comparability completion} problem, 
i.e.~adding the smallest number of edges to a static graph to turn it into a comparability graph, 
is known to be NP-hard~\cite{hakimi1997orienting}. 
Note that minimum comparability completion on static graphs is a special case of \TTCs\, and thus it follows that \TTCs\ is NP-hard too. 
Our other variants, however, do not generalize static comparability
completion in such a straightforward way. Note that for \textsc{Strict} \TTCs\
we have that the corresponding recognition problem \StrictTTOs\ is NP-complete
(\Cref{thm:sgehard}), hence it follows directly that \textsc{Strict} \TTCs\ is NP-hard. 
For the remaining two variants of our problem, we show in the following
that they are also NP-hard, giving the result that all four variants of \TTCs\ are NP-hard. 
Furthermore, we present a polynomial-time algorithm for all four problem variants for
the case that all edges of underlying graph are oriented, see~\Cref{thm:full-oriented}. This allows directly
to derive an FPT algorithm for the number of unoriented edges as a parameter.

\begin{theorem}
	All four variants of \TTCs\ are NP-hard, even when the input temporal graph is completely unoriented.
\end{theorem}
\begin{proof}
	\newcommand{\vx}[1]{\tn{v}_{#1}}
	\newcommand{\wx}[2]{\tn{w}_{#1, #2}}

	\begin{figure}
		\centering
		\begin{tikzpicture}[scale=1]
			\newcommand{\vargadget}[1]{
				\draw[every node/.style={vertex}]
				(-1,0)  node[label=below:$\vx{#1}$] ({#1}) {}
				(1,0)   node[label=below:$\vx{\overline{#1}}$] (not{#1})  {}
				(-1,2)  node (not{#1}1) {}
				(1,2)   node ({#1}1)  {}
				(-0.7,1.4) node (not{#1}2) {}
				(0.7,1.4)  node ({#1}2) {}
				;
				
				\draw[diredge,every node/.style={timelabel}]
				({#1}) edge node {$1$} (not{#1})
				({#1}1) edge node {$4$} (not{#1})
				({#1}2) edge (not{#1})
				({#1}) edge node {$4$} (not{#1}1)
				({#1}) edge (not{#1}2)
				({#1}1) edge node {$1$} (not{#1}1)
				({#1}2) edge (not{#1}2)
				;
			}
			
			\newcommand{\vargadgetleft}[1]{
				\draw[every node/.style={vertex}]
				(-1,0)  node[label=below:$\vx{#1}$] ({#1}) {}
				(1,0)   node[label=left:$\vx{\overline{#1}}$] (not{#1})  {}
				(-1,2)  node (not{#1}1) {}
				(1,2)   node ({#1}1)  {}
				(-0.7,1.4) node (not{#1}2) {}
				(0.7,1.4)  node ({#1}2) {}
				;
				
				\draw[diredge,every node/.style={timelabel}]
				({#1}) edge node {$1$} (not{#1})
				({#1}1) edge node {$4$} (not{#1})
				({#1}2) edge (not{#1})
				({#1}) edge node {$4$} (not{#1}1)
				({#1}) edge (not{#1}2)
				({#1}1) edge node {$1$} (not{#1}1)
				({#1}2) edge (not{#1}2)
				;
			}
			\newcommand{\vargadgetright}[1]{
				\draw[every node/.style={vertex}]
				(-1,0)  node[label=right:$\vx{#1}$] ({#1}) {}
				(1,0)   node[label=below:$\vx{\overline{#1}}$] (not{#1})  {}
				(-1,2)  node (not{#1}1) {}
				(1,2)   node ({#1}1)  {}
				(-0.7,1.4) node (not{#1}2) {}
				(0.7,1.4)  node ({#1}2) {}
				;
				
				\draw[diredge,every node/.style={timelabel}]
				({#1}) edge node {$1$} (not{#1})
				({#1}1) edge node {$4$} (not{#1})
				({#1}2) edge (not{#1})
				({#1}) edge node {$4$} (not{#1}1)
				({#1}) edge (not{#1}2)
				({#1}1) edge node {$1$} (not{#1}1)
				({#1}2) edge (not{#1}2)
				;
			}
			\begin{scope}[shift={(90:2)}]
				\vargadget{x}
			\end{scope}
			\begin{scope}[shift={(210:2)},rotate=120]
				\tikzset{diredge/.append style={<-}} 
				\vargadgetleft{y}
			\end{scope}
			\begin{scope}[shift={(330:2)},rotate=-120]
				\vargadgetright{z}
			\end{scope}
			
			\node[vertex,label=left:$\wx{x}{\overline{y}}$] (w1) at (150:3) {};
			\draw[diredge] (x) --node[timelabel] {$2$} (w1);
			\draw[diredge] (not{y}) --node[timelabel] {$3$} (w1);
			
			\node[vertex,label=right:$\wx{\overline{x}}{z}$] (w2) at (30:3) {};
			\draw[diredge] (w2) --node[timelabel] {$2$} (not{x});
			\draw[diredge] (z) --node[timelabel] {$3$} (w2);
			
			\node[vertex,label=above:$\wx{\overline{y}}{\overline{z}}$] (w3) at (0,0) {};
			\draw[diredge] (not{y}) --node[timelabel] {$2$} (w3);
			\draw[diredge] (w3) --node[timelabel] {$3$} (not{z});
			
			\draw[diredge,specialedge] (not{y}) --node[timelabel] {$5$} (not{z});
			
		\end{tikzpicture}
		\caption{Temporal graph constructed from the formula $(x \Rightarrow \overline{y}) \land (\overline{x} \Rightarrow z) \land (\overline{y} \Rightarrow \overline{z})$
			for $k=1$ with orientation corresponding to the assignment $x = \true$, $y = \false$, $z = \true$.
			Since this assignment does not satisfy the third clause, the dashed blue edge is required to make the graph temporally transitive.}
		\label{fig:completion-hardness}
	\end{figure}
	
	We give a reduction from the NP-hard \maxtwosat{} problem \cite{GAREY1976237}. 
	\problemdef{\maxtwosat}
	{ A boolean formula $\phi$ in implicative normal form\footnotemark{}  and an integer $k$.}
	{Is there an assignment of the variables which satisfies at least $k$ clauses in $\phi$?}
	\footnotetext{i.e.~a conjunction of clauses of the form $(a \Rightarrow b)$ where $a, b$ are literals.}
	We only describe the reduction from \maxtwosat{} to \TTCs. 
	However, the same construction can be used to show NP-hardness of the other variants. 
	
	Let $(\phi,k)$ be an instance of \maxtwosat\ with $m$ clauses.
	We construct a temporal graph $\TG$ as follows.
	For each variable $x$ of $\phi$ 
	we add two vertices denoted $\vx{x}$ and 
	$\vx{\overline{x}}$, connected by an edge with label~$1$.
	Furthermore, for each $1 \leq i \leq m-k+1$ we add two vertices $\vx{x}^i$ and $\vx{\overline{x}}^i$ connected by an edge with label~$1$.  
	We then connect $\vx{x}^i$ with $\vx{\overline{x}}$ and $\vx{\overline{x}}^i$ with $\vx{x}$ using two edges labeled~$4$.
	Thus $\vx{x}, \vx{\overline{x}}, \vx{x}^i, \vx{\overline{x}}^i$ is a 4-cycle whose edges alternating between $1$ and $4$.
	Afterwards, for each clause $(a \Rightarrow b)$ of $\phi$ with $a, b$ being literals,
	we add a new vertex $\wx{a}{b}$.
	Then we connect $\wx{a}{b}$ to $\vx{a}$ by an edge labeled~$2$
	and to $\vx{b}$ by an edge labeled~$3$.
	Consider \Cref{fig:completion-hardness} for an illustration.
	Observe that $\TG$ can be computed in polynomial time.
	
	We claim that $(\TG=(G,\lambda),\emptyset, m-k)$ is a yes-instance of \TTCs\ if and only if $\phi$ has a truth assignment satisfying $k$ clauses.
	
	For the proof, begin by observing that $\TG$ does not contain any triangle.
	Thus an orientation of $\TG$ is (weakly) (strict) transitive if and only if it does not have any oriented temporal 2-path, i.e.~a temporal path of two edges with both edges being directed forward.
	We call a vertex $v$ of $\TG$ \emph{happy} about some orientation if $v$ is not the center vertex of an oriented temporal 2-path.
	Thus an orientation of $\TG$ is transitive if and only if all vertices are happy.

	\medskip
	
	\noindent \textbf{($\Leftarrow$):}
	Let $\alpha$ be a truth assignment to the variables (and thus literals) of $\phi$ satisfying $k$ clauses of $\phi$.
	For each literal $a$ with $\alpha(a) = \true$, orient all edges such that they point away from $\vx{a}$ and $\vx{a}^i$, $1 \leq i \leq m-k+1$.
	For each literal $a$ with $\alpha(a) = \false$, orient all edges such that they point towards $\vx{a}$ and $\vx{a}^i$, $1 \leq i \leq m-k+1$.
	Note that this makes all vertices $\vx{a}$ and $\vx{a}^i$ happy.
	Now observe that a vertex $\wx{a}{b}$ is happy unless its edge with $\vx{a}$ is oriented towards $\wx{a}{b}$ 
	and its edge with $\vx{b}$ is oriented towards $\vx{b}$.
	In other words, $\wx{a}{b}$ is happy if and only if $\alpha$ satisfies the clause $(a \Rightarrow b)$.
	Thus there are at most $m-k$ unhappy vertices.
	For each unhappy vertex $\wx{a}{b}$, we add a new oriented edge from $\vx{a}$ to $\vx{b}$ with label~$5$.
	Note that this does not make $\vx{a}$ or $\vx{b}$ unhappy as all adjacent edges are directed away from $\vx{a}$ and towards $\vx{b}$.
	The resulting temporal graph is transitively oriented.

	\medskip
	
	\noindent \textbf{($\Rightarrow$):} 
	Now let a transitive orientation $F'$ 
	of $\TG'=(G',\lambda')$ be given, where $\TG'$ is obtained from $\TG$ by adding at most $m-k$ time edges.
	Clearly we may also interpret $F'$ 
	as an orientation induced of $\TG$.
	Set $\alpha(x)= \true{}$ if and only if the edge between $\vx{x}$ and $\vx{\overline{x}}$ is oriented towards $\vx{\overline{x}}$.
	We claim that this assignment $\alpha$ satisfies at least $k$~clauses of $\phi$.
	
	First observe that for each variable~$x$ and $1 \leq i \leq m-k+1$, $F'$ 
	is a transitive orientation of the 4-cycle
	$\vx{x}, \vx{\overline{x}}, \vx{x}^i, \vx{\overline{x}}^i$ if and only if the edges are oriented alternatingly.
	Thus, for each variable, at least one of these $k+1$ 4-cycles is oriented alternatingly.
	In particular, for every literal $a$ with $\alpha(a)= \true{}$, there is an edge with label~$4$ that is oriented away from $\vx{a}$.
	Conversely, if $\alpha(b) = \false{}$, then there is an edge with label~$1$ oriented towards $\vx{b}$ 
	(this is simply the edge from $\vx{\overline{b}}$).
	
	This implies that every edge with label $2$ or $3$ oriented from some vertex $\wx{c}{d}$ (where either $a=c$ or $a=d$) towards $\vx{a}$ with $\alpha(a) = \true{}$ requires $E(G') \setminus E(G)$ 
	to contain an edge from $\wx{c}{d}$ to some $\vx{\overline{a}}^i$.
	Analogously every edge with label $2$ or $3$ oriented from $\vx{a}$ with $\alpha(a) = \false{}$ to some $\wx{c}{d}$ requires $E(G') \setminus E(G)$ to contain an edge from $\vx{\overline{a}}$ to $\wx{c}{d}$.
	
	Now consider the alternative orientation $F''$ obtained from $\alpha$ as detailed in the converse orientation of the proof.
	For each edge between $\vx{a}$ and $\wx{c}{d}$ where $F'$ and $F''$ disagree,
	$F''$ might potentially require $E(G') \setminus E(G)$ to contain the edge $\vx{c}\vx{d}$ (labeled~$5$, say),
	but in turn saves the need for some edge $\wx{c}{d}\vx{\overline{a}}^i$ or $\vx{\overline{a}}\wx{c}{d}$, respectively.
	Thus, overall, $F''$ requires at most as many edge additions as~$F'$, which are at most~$m-k$.
	As we have already seen in the converse direction, 
	$F''$ requires exactly one edge to be added for every clause of $\phi$ which is not satisfied.
	Thus, $\alpha$ satisfies at least $k$~clauses of $\phi$.
\end{proof}

We now show that \TTCs\ can be solved in polynomial time, if all edges are
already oriented, as the next theorem states. 
While we only discuss the algorithm for \TTCs\, the algorithm only needs
marginal changes to work for all other variants.
\begin{theorem}
	\label{thm:full-oriented}
	An instance $(\TG,F,k)$ of \TTCs\, where $\TG=(G,\lambda)$ and $G=(V,E)$, 
	can be solved in $O(m^2)$ time if $F$ is an orientation of $E$,
	where $m=|E|$.
\end{theorem}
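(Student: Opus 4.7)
The plan is to exploit the fact that, since $F$ is already a proper orientation of every edge of $E$, any temporally transitive extension $F' \supseteq F$ is obtained by \emph{adding} new oriented time-edges (on vertex pairs currently not in $E$), without altering any existing orientation or time-label. The minimum such set of additions is uniquely determined as the \emph{temporal transitive closure} of $F$, which can be computed by a straightforward iterative propagation of the 2-hop condition. I describe the algorithm for \TTCs; the strict, strong, and strong strict variants only require adjusting the inequalities on time-labels in the forcing rule.

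The algorithm goes as follows. Start with $F^{\text{cur}} \leftarrow F$. For each ordered pair of oriented time-edges $(uv, t_1), (vw, t_2) \in F^{\text{cur}}$ with $t_2 \geq t_1$, check whether some $(uw, t_3) \in F^{\text{cur}}$ with $t_3 \geq t_2$ exists. If so, continue. If the underlying edge $\{u,w\} \in E$ is oriented in $F$ as $wu$, or as $uw$ with label strictly less than $t_2$, then no transitive extension exists, since existing edges can be neither reoriented nor relabelled; return \no. Otherwise, add the oriented time-edge $(uw, t_2)$ to $F^{\text{cur}}$ (or, if a prior closure edge $(uw, t')$ with $t' < t_2$ was already added, update its label to $t_2$). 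Repeat until no violating 2-hop remains. Finally, return \yes\ iff $|F^{\text{cur}} \setminus F| \leq k$.

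Correctness rests on two observations. First, every added edge is \emph{forced}: the triggering 2-hop mandates the existence of a transitive edge with label at least $t_2$ in \emph{any} temporal transitive extension, so $|F^{\text{cur}} \setminus F|$ is a lower bound on the number of additions that any such extension must perform. Conversely, upon termination, $F^{\text{cur}}$ satisfies the 2-hop condition by construction and is a proper orientation by the checks made, hence is temporally transitive. Consequently, $|F^{\text{cur}} \setminus F|$ equals the minimum number of additions, and it suffices to compare it against $k$.

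For the $O(m^2)$ running time, maintain for each vertex $v$ the sorted in- and out-adjacency lists of time-edges, and process violating 2-hops in non-decreasing order of the second edge's label. The number of 2-hops entirely within $F$ is already bounded by $\sum_v \tn{indeg}(v) \cdot \tn{outdeg}(v) \leq n \cdot m = O(m^2)$ (using $n \leq 2m$ after removing isolated vertices). The main obstacle is bounding the cost of cascading additions, since new closure edges can themselves participate in new 2-hops. The resolution is the observation that the label of every new edge coincides with a time-label already appearing in $F$; processing the cascade in non-decreasing order of time-label lets us re-use the sorted adjacency framework, and each pair of time-edges contributes only $O(1)$ additional work by a standard amortisation argument, preserving the $O(m^2)$ bound.
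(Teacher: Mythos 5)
Your overall strategy---compute the uniquely determined forced closure of $F$, reject if it conflicts with an existing orientation or label, and otherwise compare its size to $k$---is the same as the paper's, which characterizes the closure directly via \emph{tail-heavy paths} (\cref{lem:add-this}) and computes it with one backward search per edge. However, your write-up has two genuine gaps. First, the lower-bound argument (``every added edge is forced'') is only valid for the first round of additions. Once a triggering $2$-hop $(uv,t_1),(vw,t_2)$ contains a previously \emph{added} edge, you only know that an arbitrary solution contains $uv$ with \emph{some} label $t_1'\geq t_1$; the solution is free to pick $t_1'>t_2$, in which case this $2$-hop imposes no constraint in that solution and your justification collapses. (Example: $F=\{(ab,1),(bc,3),(cd,3)\}$; you add $(ac,3)$ and then justify $(ad,\geq 3)$ by the $2$-hop $(ac,3),(cd,3)$, but a valid solution may set the label of $ac$ to $5$, killing that $2$-hop.) The conclusion is still true, but it must be derived differently---the paper's induction peels vertices off the \emph{end} of a tail-heavy path, so that only the label of the last edge grows and the path stays tail-heavy; your forward-chaining induction needs an analogous invariant rather than the literal $2$-hop you triggered on. Relatedly, ``is a proper orientation by the checks made'' is not quite justified: your checks only compare new edges against $F$, so you must additionally reject when both orientations of the same \emph{new} pair get forced (condition~1 of \cref{lem:clear-no}).

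Second, the $O(m^2)$ running time is asserted rather than proved. The closure can contain $\Theta(n^2)$ new oriented pairs, each of whose labels can be raised several times, and each raise can trigger re-examination of all edges incident to its endpoints---including other closure edges---so the naive cascade is closer to $\Theta(n^3)$ or worse, which exceeds $m^2$ on sparse graphs. The ``standard amortisation'' over pairs of time-edges does not account for pairs involving closure edges. The clean way to get $O(m^2)$ is to avoid materializing intermediate labels altogether: for each original edge $vw$, one backward search from $v$ in the subgraph of edges with label at most $\lambda(v,w)$ finds every $u$ with $T_{u,w}\geq\lambda(v,w)$ (this is \cref{obs:X-poly-time}); taking maxima over the $m$ searches yields all final labels directly.
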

The actual proof of \Cref{thm:full-oriented} is deferred  to the end of this section.
The key idea for the proof is based on the following definition. 
Assume a temporal graph $\TG$ and an orientation $F$ of $\TG$ to be given. 
Let $G' = (V, F)$ be the underlying graph of $\TG$ with its edges directed according to $F$. 
We call a (directed) path $P$ in $G'$ \emph{tail-heavy} if the time-label of its last edge 
is largest among all edges of $P$, and we define $t(P)$ to be the time-label of that last edge of $P$. 
For all $u,v \in V$, denote by $T_{u,v}$ the maximum value~$t(P)$ over all tail-heavy $(u,v)$-paths~$P$ of length at least $2$ in $G'$; 
if such a path does not exist then $T_{u,v} = \bot$. 
If the temporal graph $\TG$ with orientation $F$ can be completed to be transitive, then
adding the time edges of the set 
\begin{align*}
	X(\TG,F) \coloneqq \left\{ (uv, T_{u,v}) \mvert T_{u,v} \neq \bot \right\},
\end{align*} 
which are not already present in $\TG$ is an optimal way to do so. Consider \Cref{fig:tail-heavy} for an example.
\begin{figure}
	\centering
	\begin{tikzpicture}[scale=1]
		\draw[every node/.style={vertex}]
		(0,0) node[label=left:$a$] (a) {}
		++(2,0) node[label=$b$] (b) {}
		++(2,0) node[label=$c$] (c) {}
		++(2,0) node[label=right:$d$] (d) {}
		;
		
		\draw[diredge,every node/.style={timelabel}]
		(a) edge node {$2$} (b)
		(b) edge node {$1$} (c)
		(c) edge node {$3$} (d)
		;
		
		\draw[diredge,specialedge,shorten >=5pt,shorten <=5pt]
		(b) edge[bend right] node[timelabel] {$T_{b,d}=3$} (d)
		(a) edge[bend left] node[timelabel] {$T_{a,d}=3$} (d);

	\end{tikzpicture}
	\caption{Example of a tail-heavy path.}
	\label{fig:tail-heavy}
\end{figure}
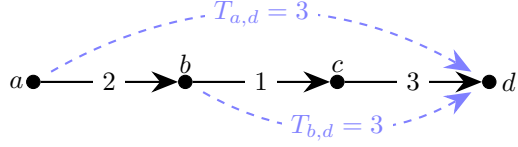

\begin{lemma}
	\label{obs:X-poly-time}
	The set $X(\TG,F)$ can be computed in $O(m^2)$ time, where $\TG$ is a temporal graph
	with $m$~time-edges and $F$ an orientation of $\TG$.
\end{lemma}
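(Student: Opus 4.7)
The plan is to treat every oriented time-edge of $F$ as a candidate ``last edge'' of a tail-heavy path. Directly from the definition one checks that
\[
T_{u,v} \;=\; \max\bigl\{ \lambda(w,v) : wv \in F,\ u \notin \{v,w\},\ \text{some } u\text{-to-}w \text{ directed path in } G' \text{ uses only edges of label} \leq \lambda(w,v) \bigr\},
\]
with the convention that the max of an empty set is $\bot$. Indeed, every tail-heavy $(u,v)$-path $P$ of length $\geq 2$ with $t(P)=t$ decomposes uniquely as a $u$-to-$w$ prefix of length $\geq 1$ followed by its last edge $(w,v)$ of label $t$, and tail-heaviness is exactly the condition that the prefix uses only edges of label $\leq t$.

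From this reformulation the algorithm is immediate. Initialize $T_{u,v} \leftarrow \bot$ for every pair $(u,v)$. For each oriented time-edge $(wv,t) \in F$, perform a reverse BFS from $w$ in the subgraph $G'_{\leq t}$ obtained from $G' = (V, F)$ by keeping only edges of label at most $t$; for every vertex $u \notin \{v,w\}$ reached, set $T_{u,v} \leftarrow \max(T_{u,v}, t)$. After processing all time-edges, output $X(\TG,F) = \{(uv, T_{u,v}) : T_{u,v} \neq \bot\}$. Correctness follows directly: running the loop over all $(wv,t) \in F$ and taking the running maximum produces exactly the value of $T_{u,v}$ specified by the formula above.

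For the running time, note that we may restrict attention to the vertices incident to at least one edge of $F$, since any $T_{u,v} \neq \bot$ requires both $u$ and $v$ to be endpoints of edges; this bounds the effective vertex count by $2m$. Each reverse BFS then runs in time linear in the size of $G'_{\leq t}$, hence in $O(m)$, and one such BFS is performed per time-edge, giving a total of $O(m^2)$.

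There is essentially no combinatorial obstacle; the proof is a matter of bookkeeping. The two points requiring minor care are (i) enforcing the length-$\geq 2$ constraint by excluding $u \in \{v,w\}$ when recording updates, and (ii) ensuring that each BFS truly costs $O(m)$ rather than $O(n+m)$, which is achieved by storing adjacency lists and touching only vertices actually discovered.
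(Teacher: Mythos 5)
Your algorithm is the same per-edge reachability search in a label-restricted subgraph as the paper's, but your reformulation of $T_{u,v}$ is not correct, and the algorithm inherits the error. The problem is that a directed $u$-to-$w$ path in $G'_{\leq \lambda(w,v)}$ may pass through $v$; concatenating such a prefix with the last edge $wv$ then yields a walk that visits $v$ twice, not a $(u,v)$-path. So your displayed formula only bounds $T_{u,v}$ from above and can strictly overshoot. Concretely, take $V=\{u,v,a,b\}$ with oriented time-edges $(uv,1)$, $(va,2)$, $(ab,3)$, $(bv,5)$. The only simple $(u,v)$-path in $G'$ is the single edge $uv$, so $T_{u,v}=\bot$ by the definition in the paper; but your reverse BFS from $b$ for the candidate last edge $(bv,5)$ reaches $u$ (via $v$ and $a$), and since $u\notin\{b,v\}$ you record $T_{u,v}=5$ and place $(uv,5)$ into your output. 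That is not the set $X(\TG,F)$ the lemma asks for, and the difference is not harmless bookkeeping: the downstream algorithm consumes $X(\TG,F)$ through comparisons of $T_{u,v}$ against existing edges and their labels.

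The repair is exactly the step the paper's proof performs and yours omits: before searching for the vertices that can reach $w$, delete the head $v$ of the candidate last edge from $G'_{\leq \lambda(w,v)}$. Then every vertex $u$ found is joined to $w$ by a simple path avoiding $v$, whose concatenation with $wv$ is a genuine tail-heavy $(u,v)$-path of length at least $2$, and conversely every such path arises this way (its prefix avoids $v$ and uses only labels at most $\lambda(w,v)$); the exclusion $u\neq w$ then correctly enforces the length requirement. Your running-time analysis is unaffected by this change and matches the paper's $O(m^2)$ bound.
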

\begin{proof}
	For each edge~$vw$,
	we can take $G'$ (defined above),
	remove $w$ and all arcs whose label is larger than $\lambda(v,w)$,
	and do a depth-first-search from $v$ to find all vertices~$u$ which can reach~$v$ in the resulting graph.
	Each of these then has $T_{u,w} \geq \lambda(v, w)$.
	By doing this for every edge~$vw$, we obtain~$T_{u,w}$ for every vertex pair $u,w$. 
	The overall running time is clearly $\bigO(m^2)$.
\end{proof}

Until the end of this section we are only considering the instance $(\TG,F,k)$ of \TTCs, 
where $\TG=(G,\lambda)$, $G=(V,E)$, and $F$ is an orientation of $\TG$.
Hence, we can say a set~$X$ of oriented time-edges is a \emph{solution} to $I$ if 
$X' := \{ \{u,v\} \mid (uv,t) \in X\}$ is disjoint from $E$, satisfies $|X| = |X'| \leq k$,
and $F':= F \cup \{ uv \mid (uv,t) \in X\}$ is a transitive orientation of 
the temporal graph $\TG+X := ((V,E \cup X'),\lambda')$,
where $\lambda'(e) := \lambda(e)$ if $e\in E$ and 
$\lambda'(u,v) := t$ if $X$ contains $(uv,t)$ or $(vu, t)$.

The algorithm we use to show \Cref{thm:full-oriented} will use $X(\TG,F)$ to
construct a solution (if there is any) of a given instance $(\TG,F,k)$ of \TTCs\ where $F$ is a orientation of $E$.
To prove the correctness of this approach, we make use of the following.
\begin{lemma}
	\label{lem:add-this}
	Let $I = (\TG = (G,\lambda),F,k)$ be an instance of \TTCs, where $G=(V,E)$ and $F$ is an orientation of $E$
	and $X$ an solution for $I$.
	Then, for any $(vu,T_{v,u}) \in X(\TG,F)$ there is a $(vu,t)$ in $\TG+X$ with $t \geq T_{v,u}$.
\end{lemma}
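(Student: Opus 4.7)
The plan is to exploit the tail-heavy structure of the witnessing path and apply the temporal transitivity axiom iteratively from the tail back toward the head.

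More precisely, fix a time-edge $(vu, T_{v,u}) \in X(\TG, F)$, and let $F'$ be the transitive orientation of $\TG + X$ extending $F$. By definition of $T_{v,u}$, there is a tail-heavy directed $(v,u)$-path $P = v_0 v_1 \cdots v_\ell$ in $G'$ of length $\ell \geq 2$ with $v_0 = v$, $v_\ell = u$, such that the edge $v_{i-1}v_i$ carries time-label $t_i \in \mathbb N$, where $t_\ell = T_{v,u}$ and $t_i \le t_\ell$ for all $1 \le i \le \ell$. Each directed edge $v_{i-1}v_i$ lies in $F \subseteq F'$.

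The main claim, which I would prove by downward induction on $i$ from $\ell-1$ to $0$, is that for every such $i$ there exists an oriented time-edge $(v_i v_\ell, s_i)$ in $\TG+X$ with $s_i \ge t_\ell$. The base case $i = \ell-1$ is immediate, since $v_{\ell-1}v_\ell \in F$ has label $t_\ell$, so we may take $s_{\ell-1} = t_\ell$. For the inductive step, assume $(v_i v_\ell, s_i)$ lies in $F'$ with $s_i \ge t_\ell$. The time-edge $(v_{i-1} v_i, t_i)$ lies in $F'$ as well, and by the tail-heaviness of $P$ we have $t_i \le t_\ell \le s_i$. Temporal transitivity of $F'$ applied to the chain $(v_{i-1} v_i, t_i)$, $(v_i v_\ell, s_i)$ then yields an oriented time-edge $(v_{i-1} v_\ell, s_{i-1})$ with $s_{i-1} \ge s_i \ge t_\ell$, completing the induction.

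Specializing the claim to $i = 0$ gives an oriented time-edge $(vu, s_0) = (v_0 v_\ell, s_0)$ in $\TG+X$ with $s_0 \ge t_\ell = T_{v,u}$, which is exactly the conclusion of the lemma.

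There isn't really a serious obstacle here; the whole argument hinges on the observation that once some intermediate label reaches $t_\ell$, every subsequent transitivity application produces an edge whose label is again $\ge t_\ell$, because the remaining path edges all have label $\le t_\ell$. The tail-heavy property of $P$ is precisely what keeps the non-decreasing condition ``$t_2 \ge t_1$'' in the definition of a temporally transitive orientation satisfied at every induction step. No use of $X(\TG, F)$ beyond the given path, and no case analysis on whether the transitively obtained edges lie in $F$ or in $X$, is needed.
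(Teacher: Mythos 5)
Your proof is correct and follows essentially the same route as the paper's: both arguments take a tail-heavy path attaining $T_{v,u}$ and repeatedly apply the temporal transitivity of the completed orientation to the pair $(v_{i-1}v_i,t_i)$, $(v_i v_\ell, s_i)$, using tail-heaviness to guarantee the non-decreasing label condition and to keep the resulting label at least $t_\ell$. Your explicit downward induction on the index $i$ is just a cleaner bookkeeping of the paper's induction on the path length (which shortens the path by merging its last two edges), so there is nothing substantive to add.
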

\begin{proof}
	Let $(v_0 v_\ell,T_{v_0,v_{\ell}})\in X(\TG,F)$, and $G'=(V,F)$.
	Hence, there is a tail-heavy $(v_0,v_\ell)$-path $P$ in $G'$ of length $\ell\geq 2$.
	If $\ell=2$, then clearly $\TG+X$ must contain the time edge $(v_1 v_\ell,t)$ such that $t \geq T_{v_1,v_\ell}$.
	Now let $\ell>2$ and $V(P):=\{v_i \mid i \in \{0,1,\dots,\ell\}\}$ and $E(P) = \{ v_{i-1} v_i \mid i \in [\ell] \}$.
	Since there is a tail-heavy $(v_{\ell-2},v_\ell)$-path in $G'$ of length~2,
	$\TG+X$ must contain a time-edge $(v_{\ell-2} v_\ell,t)$ with $t \geq T_{v_{0},v_\ell}$.
	Therefore, the (directed) underlying graph of $\TG+X$ contains a tail-heavy $(v_0,v_\ell)$-path of length $\ell-1$.
	By induction, $\TG+X$ must contain the time edge $(v_1 v_\ell,t')$ such that $t' \geq t \geq T_{v_0,v_\ell}$.
\end{proof}
Form \Cref{lem:add-this}, it follows that we can use $X(\TG,F)$ to identify \no-instances in some cases.
\begin{corollary}
	\label{lem:clear-no}
	Let $I = (\TG = (G,\lambda),F,k)$ be an instance of \TTCs, where $G=(V,E)$ and $F$ is an orientation of $E$.
	Then, $I$ is a \no-instance, if for some $v,u \in V$
	\begin{enumerate}
		\item there are time-edges $(vu,t) \in X(\TG,F)$ and $(uv,t') \in X(\TG,F)$, 
		\item there is an edge $uv \in F$ such that $(vu,T_{v,u}) \in X(\TG,F)$, or
		\item there is an edge $vu \in F$ such that $(vu,T_{v,u}) \in X(\TG,F)$ with $\lambda(v,u) < T_{v,u}$.
	\end{enumerate}
\end{corollary}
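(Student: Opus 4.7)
The plan is to assume for contradiction that $I$ admits a solution $X$ and to apply \cref{lem:add-this} to each of the three hypotheses in turn, each time obtaining a forbidden configuration of $\TG+X$.

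For case~1, suppose both $(vu,t) \in X(\TG,F)$ and $(uv,t') \in X(\TG,F)$. Then $T_{v,u}=t$ and $T_{u,v}=t'$, so by \cref{lem:add-this} there exist time-edges $(vu,s)$ and $(uv,s')$ in $\TG+X$ with $s \geq t$ and $s' \geq t'$. However, the edge $\{u,v\}$ of $\TG+X$ receives exactly one orientation in the proper orientation $F \cup \{uv \mid (uv,t)\in X\}$, so both orientations cannot simultaneously occur; contradiction.

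For case~2, suppose $uv \in F$ and $(vu, T_{v,u}) \in X(\TG,F)$. Then by \cref{lem:add-this} the graph $\TG+X$ must contain the oriented time-edge $(vu, s)$ for some $s \geq T_{v,u}$, i.e.~the edge $\{u,v\}$ is oriented from $v$ to $u$. But the solution orientation extends $F$, which already orients $\{u,v\}$ from $u$ to $v$. This again contradicts properness of the orientation.

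For case~3, suppose $vu \in F$ and $(vu, T_{v,u}) \in X(\TG,F)$ with $\lambda(v,u) < T_{v,u}$. By \cref{lem:add-this} there must be a time-edge $(vu,s)$ in $\TG+X$ with $s \geq T_{v,u}$. Since $X' = \{\{u,v\} \mid (uv,t)\in X\}$ is disjoint from $E$ by the definition of a solution, the edge $\{u,v\}$ is not duplicated, so the unique time-label on this edge in $\TG+X$ is $\lambda(v,u)$. Hence $s = \lambda(v,u) < T_{v,u}$, a contradiction. No step of this argument should be delicate; the entire content lies in invoking the already-established \cref{lem:add-this} and observing that the oriented, single-label structure of $\TG+X$ rules out each of the three scenarios.
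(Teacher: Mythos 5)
Your proposal is correct and follows exactly the route the paper intends: the paper states this as an immediate corollary of \cref{lem:add-this} without writing out a proof, and your three case analyses are precisely the unpacking of that implication (properness of the solution orientation rules out cases 1 and 2, and the fixed label $\lambda(v,u)$ of an existing edge rules out case 3). No gaps.
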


We are now ready to prove \Cref{thm:full-oriented}.
\begin{proof}[Proof of \Cref{thm:full-oriented}]
	Let $I=(\TG=(G,\lambda),F,k)$ be an instance of \TTCs, where $F$ is a orientation of $E$.
	First we compute $X(\TG,F)$ in polynomial time, see \Cref{obs:X-poly-time}.
	Let $Y=\{ (vu,t) \in X(\TG,F) \mid \{v,u\} \not\in E\}$
	and report that $I$ is a \no-instance 
	if $|Y| > k$ or one of the conditions of \Cref{lem:clear-no} holds true.
	Otherwise report that $I$ is a \yes-instance.
	This gives an overall running time of $O(m^2)$.
	
	Clearly, if one of the conditions of \Cref{lem:clear-no} holds true, then $I$ is a \no-instance.
	Moreover, by \Cref{lem:add-this} any solution contains at least $|Y|$ time edges.
	Thus, if $|Y| > k$, then $I$ is a \no-instance.
	
	If we report that $I$ is a \yes-instance, 
	then we claim that $Y$ is a solution for $I$.
	Let $F' \supseteq F$ be a orientation of $\TG+Y$.
	Assume towards a contradiction that $F'$ is not transitive.
	Then, there is a temporal path $((vu,t_1),(uw,t_2))$ in $\TG+Y$ such that
	there is no time-edge $(uw,t)$ in $\TG+Y$, with $t \geq t_2$.
	By definition of $X(\TG,F)$, the directed graph $G'=(V,F)$ contains 
	a tail-heavy $(v,u)$-path $P_1$ with $t_1 = t(P_1)$ and
	a tail-heavy $(u,w)$-path $P_2$ with $t_2 = t(P_2)\geq t_1$.
	By concatenation of $P_1$ and $P_2$, we obtain that the $G'$ contains a $(v,w)$-path $P'$ of length at least two 
	such that $t_2 = t(P')$.
	Thus, $t_2 \leq T_{v,w}$ and $(vw,T_{v,w}) \in X(\TG)$---a contradiction.
\end{proof}

Using \Cref{thm:full-oriented} we can now prove that \TTCs\ is fixed-parameter tractable (FPT) with respect to the number of 
unoriented edges in the input temporal graph $\TG$. 
\begin{corollary}
	Let $I=(\TG=(G,\lambda),F,k)$ be an instance of \TTCs, where $G=(V,E)$. 
	Then $I$ can be solved in $O(2^q\cdot m^2)$, where $q=|E|-|F|$ and $m$ the number of time edges.
\end{corollary}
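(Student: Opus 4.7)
The plan is to combine a simple branching strategy with the polynomial-time algorithm from \cref{thm:full-oriented}. Given an instance $I=(\TG,F,k)$ in which $F$ is only a partial orientation, the $q = |E|-|F|$ unoriented edges are exactly the obstacles preventing us from directly applying \cref{thm:full-oriented}. For each of these $q$ edges, we simply guess one of its two possible orientations, producing $2^q$ branches.

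More concretely, I would enumerate all extensions $F \subseteq F' \subseteq \widehat{E}$ of $F$ into a proper orientation of the whole of $E$. For each such $F'$, I would run the algorithm of \cref{thm:full-oriented} on the instance $(\TG,F',k)$, which takes $O(m^2)$ time per branch. I would answer \yes\ if at least one branch answers \yes, and \no\ otherwise.

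For correctness, observe that any solution to the original instance $I$ must fix some proper orientation of each unoriented edge of $E$ (this is part of producing a transitive orientation $F^{\star}\supseteq F$ of the completed temporal graph). Hence $I$ is a \yes-instance if and only if there exists an extension $F'$ of $F$ to all of $E$ such that $(\TG,F',k)$ is a \yes-instance of \TTCs\ with fully oriented input; by \cref{thm:full-oriented}, each such test is carried out correctly in $O(m^2)$ time. The total running time is therefore $O(2^q\cdot m^2)$.

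There is no real obstacle here; the only thing to double-check is that, as discussed after \cref{prelim-sec}, non-proper extensions (containing both $uv$ and $vu$) can be safely discarded, since they can never be part of a transitive orientation, so it suffices to branch on the two proper orientations of each unoriented edge. This completes the proof.
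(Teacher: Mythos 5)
Your proposal is correct and matches the paper's own proof: both enumerate the $2^q$ proper orientations of the unoriented edges and apply the $O(m^2)$ algorithm of \cref{thm:full-oriented} to each resulting fully oriented instance, accepting if any branch accepts. Your added remarks on correctness (that any solution fixes a proper orientation of every edge, so non-proper extensions may be discarded) are sound and merely make explicit what the paper leaves implicit.
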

\begin{proof}
	Note that there are $2^q$ ways to orient the $q$ unoriented edges. 
	For each of these $2^q$ orientations of these $q$ edges, we obtain a fully oriented temporal graph. 
	Then we can solve \TTCs\ on each of these fully oriented graphs in $O(m^2)$
	time by \Cref{thm:full-oriented}.
	Summarizing, we can solve \TTCs\ on $I$ in $2^q \cdot m^2$ rime.
\end{proof}

\section{Deciding Multilayer Transitive Orientation}\label{multilayer-sec}

In this section we prove that \MTO\ is NP-complete, even if every edge of the given temporal graph has at most two labels. 
Recall that this problem asks for an orientation $F$ of a temporal graph $\TG=(G,\lambda)$ 
(i.e.~with exactly one orientation for each edge of $G$) such that, 
for every ``time-layer'' $t\geq 1$, the (static) oriented graph defined by the edges having time-label $t$ is transitively oriented in $F$.
As we discussed in~\Cref{prelim-sec}, this problem makes more sense when every edge of $G$ 
potentially has multiple time-labels, therefore we assume here that the time-labeling function is $\lambda :E\rightarrow 2^{\mathbb{N}}$.

\begin{figure}[t]
	\centering
	\begin{tikzpicture}[scale=1]
		\draw[every node/.style={vertex}]
		(90:1)  node (y11)  {}
		(210:1) node (y12)  {}
		(330:1) node (y13)  {}
		;
		
		\draw[diredge, every node/.style={timelabel}]
		(y11) edge node {1,4} (y12)
		(y11) edge node {2,4} (y13)
		(y13) edge node {2,4} (y12)
		;
		
		\draw[xshift=4cm,every node/.style=vertex]
		(90:1)  node (y21) {}
		(210:1) node (y22) {}
		(330:1) node (y23) {}
		;
		
		\draw[diredge,every node/.style={timelabel}]
		(y21) edge node {1,4} (y22)
		(y23) edge node {3,4} (y21)
		(y23) edge node {2,4} (y22)
		;
		
		\node[vertex,label=$x_1$] (a) at (2,2) {};
		\node[vertex,label=below:$x_2$] (b) at (1,-2) {};
		\node[vertex,label=below:$x_3$] (c) at (6,-2) {};
		
		\draw[diredge, every node/.style={timelabel}]
		(y11) edge node {1} (a)
		(y21) edge node {1} (a)
		(b) edge node {2} (y12)
		(b) edge node {2} (y13)
		(b) edge node {2} (y22)
		(y23) edge node {3} (c)
		;
	\end{tikzpicture}
	\caption{Temporal graph constructed from the formula $\nae(x_1, x_2, x_2) \land \nae(x_1, x_2, x_3)$
		and orientation corresponding to setting $x_1 = \false$, $x_2 = \true$, and $x_3 = \false$.
		Each attachment vertex is at the clockwise end of its edge.}
	\label{fig:multilayer-reduction}
\end{figure}
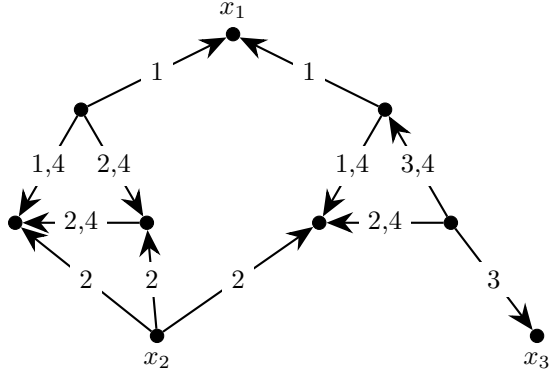

\begin{theorem}
	\MTOs\ is NP-complete, even on temporal graphs with at most two labels per edge.
\end{theorem}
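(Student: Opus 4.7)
The plan is to reduce from \textsc{Not-All-Equal-3SAT}, which is NP-complete~\cite{Schaefer78}. Membership in NP is immediate: given an orientation $F$, we can check transitivity of each time-layer independently in polynomial time.

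Given an instance $\phinae$ of \NAESat{} with variables $x_1,\ldots,x_n$ and clauses $C_1,\ldots,C_m$, we build a temporal graph $\TG=(G,\lambda)$ as suggested by \cref{fig:multilayer-reduction}. For each variable $x_i$ we introduce a single \emph{variable vertex} $x_i$. For each clause $C_j=\nae(\ell_{j,1},\ell_{j,2},\ell_{j,3})$ we introduce a \emph{triangle gadget} on three vertices $y_{j,1},y_{j,2},y_{j,3}$ forming a triangle. Every edge of this triangle carries two labels: the common label $4$, plus a second ``clause-private'' label (distinct across clauses, chosen from $\{1,2,3\}$ in the manner of the figure) that makes the triangle edges, together with the attachment edges described next, behave as forcing units. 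Finally, each triangle vertex $y_{j,k}$ is attached by a single-label edge to the variable vertex $x_{|\ell_{j,k}|}$; the single time-label on this attachment edge is chosen to coincide with the clause-private label of the two triangle edges incident to $y_{j,k}$, and the ``side'' of attachment encodes whether the literal $\ell_{j,k}$ is positive or negative.

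The core observation is the interplay between the common label $4$ and the clause-private labels. In the time-layer~$4$, each triangle gadget is a static triangle in isolation; to be transitively oriented in this layer, it must not be oriented cyclically. Hence, out of the three edges of the triangle at least one must point ``clockwise'' and at least one ``counter-clockwise''. In each clause-private layer, the attachment edge from $y_{j,k}$ to $x_{|\ell_{j,k}|}$ together with the two triangle edges incident to $y_{j,k}$ forms an induced path of length two; transitivity in that layer forces the orientation of these two triangle edges to agree with the orientation of the attachment edge. Thus, in any valid multilayer orientation, the direction (towards or away from~$x_i$) of every attachment edge from a triangle vertex $y_{j,k}$ labelled by $\ell_{j,k}$ is determined by an assignment of $x_i$, and the three attachment edges at a single triangle cannot be all ``towards'' or all ``away'', which is exactly the NAE condition on $C_j$. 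The literal polarities built into the attachment sides translate ``towards $x_i$'' into a truth value of $x_i$ in a uniform way.

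With this correspondence in hand, the two implications are straightforward. For the forward direction, given a satisfying NAE assignment, orient all attachment edges according to the truth values (interpreting ``$x_i=\true$'' uniformly as a direction at $x_i$), then orient the triangle edges as forced in the clause-private layers; since the assignment is NAE on every clause, the triangle in layer $4$ is not cyclic, and each clause-private layer is an oriented forest of paths of length two, hence transitive. For the backward direction, any valid multilayer orientation defines truth values from the attachment edges at each $x_i$ (they all agree by transitivity within each variable's star in the clause-private layers); the layer-$4$ acyclicity condition on every triangle certifies that no clause is all-equal. The main technical care needed is in assigning the clause-private labels so that distinct clauses do not interfere in any shared time-layer and so that attachment edges incident to the same variable vertex in the same layer remain consistent; this is handled by choosing the clause-private labels from a small fixed set (as in the figure) and by verifying case by case that the only transitivity obligations introduced are the intended ones. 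Since every edge receives at most two labels by construction, the claimed refinement follows.
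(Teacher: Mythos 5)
Your high-level plan coincides with the paper's: one triangle per clause, all triangle edges sharing a common layer in which non-cyclicity enforces the not-all-equal condition, plus attachment edges to variable vertices living in other layers that propagate truth values. However, the labeling scheme you describe breaks the reduction at its most important point. You give the second label of each triangle edge and of each attachment edge as a \emph{clause-private} label drawn from $\{1,2,3\}$ (which is already internally inconsistent with ``distinct across clauses'' once there are more than three clauses). Under any reading of that scheme, the attachment edges corresponding to different occurrences of the same variable $x_i$ lie in different layers, or collide uncontrollably in a layer shared with unrelated clauses; in particular no single layer contains all the attachment edges at the variable vertex of $x_i$, so transitivity imposes no constraint forcing them to agree. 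Your backward direction relies precisely on the claim that these edges ``all agree by transitivity within each variable's star in the clause-private layers'' --- with clause-private labels there is no such monochromatic star, the truth value of $x_i$ read off from one clause's gadget can differ from the one read off another's, and the reduction is unsound. What the figure actually depicts, and what the paper does, is the opposite convention: the second label is the \emph{variable's} index $i$, shared by every triangle edge representing an occurrence of $x_i$ and by every attachment edge into the variable vertex of $x_i$. Layer $i$ is then a spider centered at that vertex, whose transitive orientations force all edges there to point the same way and then push that choice onto the incident triangle edges. This uses $n+1$ labels in total rather than a constant number, but each edge still carries at most two labels, which is all the theorem claims.

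A secondary problem is your treatment of negation. You encode the polarity of $\ell_{j,k}$ by the ``side'' of attachment, but the correspondence between ``the triangle is cyclically oriented'' and ``all three literals received the same value'' requires the three attachment vertices of a triangle to be pairwise distinct: only then is ``all edges point to their attachment vertices'' exactly one of the two cyclic orientations. Spending the choice of attachment side on polarity destroys this alignment for clauses with mixed polarities, and you do not re-derive the cyclicity correspondence for non-injective attachment assignments. The paper sidesteps the issue entirely by reducing from \emph{monotone} \NAESat{}, where every literal is positive and the attachment sides are free to be chosen pairwise distinct.
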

\begin{proof}
	\newcommand{\vx}[1]{\tn{v}_{#1}}
	\newcommand{\ex}[1]{\tn{a}_{#1}}
	
	We give a reduction from monotone \NAESat{}, which is known to be NP-hard \cite{Schaefer78}.
	So let $\phi = \bigwedge_{i=1}^m \nae(y_{i,1}, y_{i,2}, y_{i,3})$ be a monotone \NAESat{} instance
	and $X := \{x_1, \dots, x_n\} := \bigcup_{i=1}^m \{y_{i,1}, y_{i,2}, y_{i,3}\}$ be the set of variables.
	
	Start with an empty temporal graph $\TG$.
	For every clause $\nae(y_{i,1}, y_{i,2}, y_{i,3})$, add to $\TG$ a triangle on three new vertices
	and label its edges $\ex{i,1}, \ex{i,2}, \ex{i,3}$.
	Give all these edges label~$n+1$.
	For each of these edges, select one of its endpoints to be its \emph{attachment vertex}
	in such a way that no two edges share an attachment vertex.
	Next, for each $1 \leq i \leq n$, add a new vertex $\vx{i}$.
	Let $A_i := \{\ex{i,j} \mid y_{i,j} = x_i\}$.
	Add the label~$i$ to every edge in $A_i$ and connect its attachment vertex to $\vx{i}$ with an edge labeled~$i$.
	See also \Cref{fig:multilayer-reduction}.
	
	We claim that $\TG$ is a \yes-instance of \MTOs\ if and only if $\phi$ is satisfiable.
	
	\medskip\noindent
	\textbf{($\Leftarrow$):}
	Let $\alpha: X \to \{\true, \false\}$ be an assignment satisfying $\omega$.
	For every $x_i \in X$, orient all edges adjacent to $\vx{i}$ away from $\vx{i}$ if $\alpha(x_i) = \true$ and towards $\vx{i}$ otherwise.
	Then, orient every edge~$\ex{i,j}$ towards its attachment vertex if $\alpha(y_{i,j}) = \true$ and away from it otherwise.
	
	Note that in the layers $1$~through~$n$ every vertex either has all adjacent edges oriented towards it or away from it.
	Thus these layers are clearly transitive.
	It remains to consider layer~$n+1$ which consists of a disjoint union of triangles.
	Each such triangle $\ex{i,1}, \ex{i,2}, \ex{i,3}$ is oriented non-transitively (i.e.~cyclically) if and only if 
	$\alpha(y_{i,1}) = \alpha(y_{i,2}) = \alpha(y_{i,3})$, which never happens if $\alpha$ satisfies $\phi$.
	
	\medskip\noindent
	\textbf{($\Rightarrow$):}
	Let $\omega$ be an orientation of the underlying edges of $\TG$ such that every layer is transitive.
	Since they all share the same label~$i$, the edges adjacent to $\vx{i}$ must be all oriented towards or all oriented away from $\vx{i}$.
	We set $\alpha(x_i) = \false$ in the former and $\alpha(x_i) = \true$ in the latter case.
	This in turn forces each edge $\ex{i,j}$ to be oriented towards its attachment vertex if and only if $\alpha(\ex{i,j}) = \true$.
	Therefore, every clause $\nae(y_{i,1}, y_{i,2}, y_{i,3})$ is satisfied, since the three edges $\ex{i,1}, \ex{i,2}, \ex{i,3}$ form a triangle in layer~$n+1$ and can thus not be oriented cyclically (i.e.~all towards or all away from their respective attachment vertices).
\end{proof}

\section{Conclusion}
We introduced and studied four natural variants of temporal graph transitivity. 
Although these four variants look superficially similar, they turn out to have massive differences in their computational complexity.
Two variants (\StrongTTOs\ and \StrongStrictTTOs) are solvable by straightforward reductions to 2SAT. 
For \TTOs\ we provided a technically involved polynomial-time algorithm which solves the problem by first reducing it to 
the satisfiability of a mixed Boolean formula (having both clauses with three and with two literals) and by then using 
a series of structural properties to devise a polynomial-time algorithm. 
That is, we reduce \TTOs\ to the satisfiability problem of a special subclass of mixed Boolean formulas which turns out to be efficiently solvable. 
We leave it open for future research whether a compact set of conditions can be given which define this subclass of mixed Boolean formulas, as this might be of independent interest.  
The last variant \StrictTTOs\ turns out to be NP-hard.

We further studied the ``completion''-problem corresponding to each of the four
temporal transitivity variants, that is, finding the minimum number of time
edges that need to be added to a given temporal graph to make it transitive. We
show for all four completion problem variants that they are NP-hard. However if
the edges of the temporal input graph are already oriented, we obtain
polynomial-time solvability which we can easily generalize to an FPT-algorithm
for the number of unoriented edges as a parameter. Here, we in particular leave
the parameterized complexity with respect to the solution size or other parameters open for future research. 
Lastly, we investigate a natural extension of transitivity to multilayer
graphs and show that deciding whether a given multilayer graph is transitive is NP-hard.

\end{document}